\documentclass[5p,preprint,dvipsnames]{elsarticle}

\usepackage{hyperref}
\usepackage{amssymb, amsmath}
\usepackage{stmaryrd}
\usepackage{graphicx}

\usepackage{microtype}

\usepackage{csquotes} 

\usepackage{fontawesome}
\usepackage{txfonts}

\usepackage{subfigure}
\usepackage{tikz,subfigure}
\usetikzlibrary{arrows,arrows.meta,shapes,shapes.multipart,automata,backgrounds,petri,positioning,shadows,matrix,decorations.pathmorphing,fit,positioning,calc,backgrounds,shapes.geometric}

\usepackage{multirow}
\usepackage{threeparttable}

\usepackage{listings}
\usepackage{todonotes}

\usepackage[capitalise,nameinlink]{cleveref}

\definecolor{amber}{rgb}{1.0,0.75,0.0}
\definecolor{amaranth}{rgb}{0.9, 0.17, 0.31}

\usepackage{paralist}
\usepackage{mathtools}

\newtheorem{definition}{Definition}
\newtheorem{theorem}{Theorem}
\newtheorem{lemma}{Lemma}

\newtheorem{example}{Example}

\newproof{proof}{Proof}
\newcommand{\qedwhite}{\hfill \ensuremath{\Box}}
\newcommand{\qedblack}{\hfill \ensuremath{\blacksquare}}

\usepackage{xspace}

\newcommand{\tdbnet}{timed DB-net\xspace}
\newcommand{\tdbnets}{timed DB-nets\xspace}


\newcommand{\labeltitle}[1]{\vskip 0.03in \noindent\textbf{#1}} 
\newcommand{\labelsubtitle}[1]{\vskip 0.03in \noindent\emph{#1}.}

\newcommand{\eg}{e.g.,\ }
\newcommand{\ie}{i.e.,\ }

\newcommand{\Bool}{\mathbb{B}}
\newcommand{\BExp}{\mathsf{BExp}}
\newcommand{\Exp}{\mathsf{Exp}}

\newcommand{\mathvar}[1]{\mathord{\mathit{#1}}}

\newcommand{\type}{\mathvar{type}}

\newcommand{\pchar}{\mathvar{char}}
\newcommand{\PC}{\mathvar{PC}}

\newcommand{\CPT}{\mathvar{CPT}}
\newcommand{\EL}{\mathvar{EL}}

\newcommand{\iC}{\mathvar{inContr}}
\newcommand{\oC}{\mathvar{outContr}}
\newcommand{\contr}{\mathvar{contr}}

\newcommand{\cost}{\mathvar{cost}}

\newcommand{\match}{\mathsf{match}}

\newcommand{\inn}{\mathsf{in}}

\newcommand{\sem}[1]{\llbracket#1\rrbracket}

\newcommand{\List}[1]{\mathsf{List}\,#1}

\newcommand{\cin}[2]{\text{color}_{\text{in}}(#1)_{#2}}
\newcommand{\cout}[2]{\text{color}_{\text{out}}(#1)_{#2}}

\newcommand{\cptize}[2]{#1_{\CPT(#2)}}

\usepackage{enumitem}


\newcounter{reqno}
\newcommand{\req}[1]{\refstepcounter{reqno}\label{#1}REQ-\thereqno}
\newcommand{\refreq}[1]{REQ-\ref{#1}}

\newcommand{\iotshort}{IOT\xspace}
\newcommand{\Ce}{CE\xspace}
\newcommand{\mt}{MT\xspace}
\newcommand{\Aggregator}{aggregator\xspace}


\newcommand{\prg}[1]{\text{#1}} 
\newcommand{\cnd}[1]{\text{#1}} 


\crefname{section}{Sec.}{Sec.}
\Crefname{section}{Section}{Sections}
\crefname{figure}{Fig.}{Figs.}
\Crefname{figure}{Figure}{Figures}
\crefname{table}{Tab.}{Tab.}
\Crefname{table}{Table}{Tables}
\crefname{lstlisting}{List.}{Lists.}
\Crefname{lstlisting}{Listing}{Listings}

\clubpenalty=10000
\widowpenalty=10000
\mathchardef\UrlBreakPenalty=10000

\begin{document}

\begin{frontmatter}

\title{Responsible Composition and Optimization of Integration Processes under Correctness Preserving Guarantees}

 \author[add1]{Daniel Ritter}{\corref{mycorrespondingauthor}}
 \ead{daniel.ritter@sap.com}

 \author[add2]{Fredrik Nordvall Forsberg}
 \ead{fredrik.nordvall-forsberg@strath.ac.uk}
 
 \author[add3]{Stefanie Rinderle-Ma}{\corref{mycorrespondingauthor}}
 \ead{stefanie.rinderle-ma@tum.de}

 \address[add1]{SAP, Walldorf (Baden), Germany}
 \address[add2]{Department of Computer and Information Sciences, University of Strathclyde, Glasogw, Scotland}
 \address[add3]{School of Computation, Information and Technology, Technische Universit\"at M\"unchen, Garching, Germany}

\begin{abstract}
Enterprise Application Integration deals with the problem of connecting heterogeneous applications, and is the centerpiece of current on-premise, cloud and device integration scenarios.
For integration scenarios, structurally correct composition of patterns into processes and improvements of integration processes are crucial.
In order to achieve this, we formalize compositions of integration patterns based on their characteristics, and describe optimization strategies that help to reduce the model complexity, and improve the process execution efficiency using design time techniques.
Using the formalism of timed DB-nets --- a refinement of Petri nets --- we model integration logic features such as control- and data flow, transactional data storage, compensation and exception handling, and time aspects that are present in reoccurring solutions as separate integration patterns.
We then propose a realization of optimization strategies using graph rewriting, and prove that the optimizations we consider preserve both structural and functional correctness.
We evaluate the improvements on a real-world catalog of pattern compositions, containing over $900$ integration processes, and illustrate the correctness properties in case studies based on two of these processes.
\end{abstract}

\begin{keyword}
Enterprise application integration, enterprise integration patterns, optimization strategies, pattern compositions, petri nets, responsible programming, trustworthy application integration
\end{keyword}

\end{frontmatter}

\section{Introduction}
\label{sec:intro}
In a highly digitized and connected world, in which enterprises get more and more intertwined with each other, the integration of applications scattered across on-premises, cloud and devices is crucial for enabling innovation, improved productivity, and more accessible information \cite{DBLP:conf/ahfe/JeskeWLWS20}.
This is facilitated by process technology 
based on integration building blocks called integration patterns~\cite{hohpe2004enterprise, 
DBLP:conf/caise/0001H15,ritter2016exception,Ritter201736}.
The composition of these patterns into integration processes can result in complex models that are often vendor-specific, informal and ad-hoc~\cite{Ritter201736}; optimizing such integration processes is often desirable, but hard.
In most cases complex process control flows are further complicated by data flow, transactional data storage, compensation, exception handling, and time aspects \cite{RITTER2019101439}.

In previous work \cite{DBLP:conf/debs/0001MFR18}, we found that already simple integration processes show improvement potential, \eg when considering data dependencies that allow for (sub-)process parallelization.
In order to consider such improvements, it is crucial to also consider data flow in the model, but approaches for verification and formal analysis of \enquote{realistic data-aware} integration processes are currently missing, as recent surveys on event data~\cite{DBLP:journals/corr/AbiteboulABBCDH17,eyers_et_al:DR:2017:6910}, workflow management~\cite{kougka2014optimization}, and in particular application integration~\cite{Ritter201736} report.
Such approaches are needed in order to formally prove the structural and functional correctness of compositions of patterns and their optimizations, which in turn is needed to enable a responsible development of integration scenarios where integration processes behave as intended.

To enable such approaches for the process modeller, we propose a \emph{responsible  composition and optimization (ReCO) process} for patterns, that covers the following objectives: (i) inherently correct structural process representation, (ii) means for representing and proving functional process execution correctness, (iii) semantic integration pattern aspects of control and data flow, transactional data storage, compensation, exception handling, and time, (iv) automatic identification and application of optimizations, and (v) correctness-preserving process changes.
We argue that existing approaches do not fully support responsible integration pattern composition and optimization with correctness preserving guarantees (cf.\ related work in \cref{sec:relatedwork}). 

\begin{figure}[bt]
	\centering
	\includegraphics[width=1\linewidth]{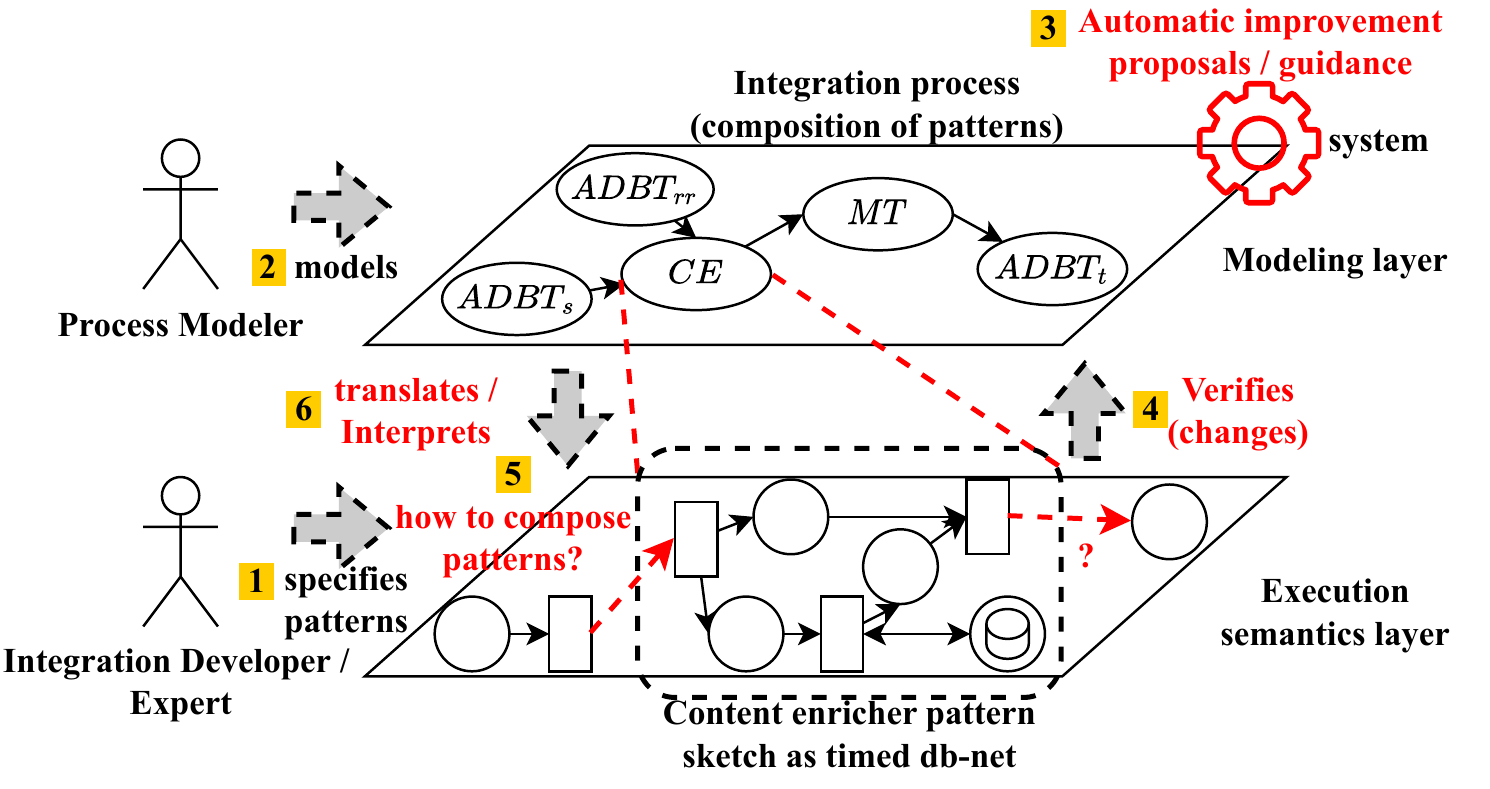}
	\caption{End-to-end perspective from integration process modeling to verifiable execution semantics and automatic, correctness-preserving improvements (current gaps or missing aspects in red color).}
	\label{fig:reco:problem_statement}
\end{figure}
\newcommand{\yellowbox}[1]{\colorbox{amber!70}{\textbf{#1}}}
\Cref{fig:reco:problem_statement} visualises the challenges that need to be overcome to enable such a responsible composition and optimization process achieving objectives (i)--(v).
Integration developers and experts provide semantically meaningful pattern realizations in expressive specialised languages such as timed db-nets~\cite{RITTER2019101439} (\yellowbox{1}). However, these languages are cumbersome to use for process modelers, and makes an automatic identification of improvements difficult.
Process modelers on the other hand like higher level languages and notations to specify processes as a composition of integration patterns~\cite{DBLP:conf/caise/0001H15,ritter2016exception,Ritter201736,DBLP:conf/debs/RitterDMR17} (\yellowbox{2}).
While improvements are conveniently defined on the higher level modeling layer for automatic changes to processes~\cite{DBLP:conf/debs/0001MFR18} (\yellowbox{3}) in manual modeling and automatic improvement cases, it is currently not possible to verify that the improved process has the same behaviour as the original process with respect to the execution semantics of the composed integration patterns (\yellowbox{4}).
This is because the composition of pattern definitions is currently not formally specified (\yellowbox{5}), and the modeling and execution semantics layers are not connected (\yellowbox{6}).

This work aims to fill these gaps, based on the following research questions that guide the design and development of a responsible composition and optimization process:
\begin{itemize}
    \itemsep0em
    \item[\textbf{Q1}] How can the user be supported and guided during pattern composition and process modeling?
    \item[\textbf{Q2}] When are pattern compositions correct?
    \item[\textbf{Q3}] How to responsibly determine and apply optimizations?
\end{itemize}
Question Q1 is related to objective (i), Q2 to objectives (ii)--(iii), and Q3 to (iv)--(v). In the conference version of this paper~\cite{DBLP:conf/debs/0001MFR18}, we provided the foundations for Q1 (and partially Q3).
Pattern compositions were represented as typed pattern graphs, based on pattern characteristics and contracts, which inherently guarantee structurally correct compositions, and thus guide and support the user.
We could not use existing languages / notations like BPMN or EIP icon notation \cite{hohpe2004enterprise}, since they either structurally or semantically do not provide the required notions of data flow, persistence (\eg  \cite{DBLP:conf/caise/0001H15,ritter2016exception,DBLP:conf/debs/RitterDMR17,DBLP:conf/ecmdafa/Ritter14}) and boundaries for checking structural correctness (\eg \cite{Ritter201736}).
Furthermore, the graph-based representation of integration patterns allows for the realization of optimizations as graph rewriting rules.
Our evaluation showed that effective improvements could be identified and applied to real-world integration processes, while structural correctness was preserved.
However, functional correctness was not considered, meaning that process changes might not be \emph{responsible} (cf.\ objectives (ii)--(iii), (v)).

In this work, we extend our user-facing and structural correctness guaranteeing graph-based representation with an execution semantics using timed DB-nets~\cite{RITTER2019101439}.
To support the same notion of correctness based on pattern contracts as in \cite{DBLP:conf/debs/0001MFR18}, we define a new notion of \emph{open} timed db-nets that are capable of representing the data exchange between patterns. We then show how they can be composed, and specify their execution semantics.
By interpreting pattern compositions in graph representation as compositions of open timed db-nets, and by proving that the translation results in structurally correct and semantically well-behaved nets, we can answer Question Q2.
All in all, this makes automatic optimization of integration processes feasible, 
but now also taking functional correctness into account, thus answering question Q3 fully. Hence this enables the study of ReCO for the first time.

\paragraph{Methodology}
We follow the principles of design science research methodology by Peffers et al.~\cite{peffers2007design} to answer the research questions above: \emph{\enquote{Activity 1: Problem identification and motivation}} is based on a literature review and the assessment of vendor driven solutions~\cite{Ritter201736}, as well as quantitative analysis of integration pattern characteristics of EAI building blocks (existing catalogs of 166 integration patterns) and process improvements~\cite{DBLP:conf/debs/0001MFR18}, resulting in requirements to a suitable formalism.
We then address \emph{\enquote{Activity 2: Define the objectives for a solution}} by formulating objectives (i)--(v). For \emph{\enquote{Activity 3: Design and development}}, we create several artifacts/contributions to answer questions Q1--Q3 and realize objectives (i)--(v):
\begin{enumerate}[label=(\alph*)]
    \item a specification of an extensible structural correctness enforcing representation that allows for efficient application of improvements,
    \item an extension of the definition of the formalism of execution semantics by inter-pattern data exchange analysis capabilities based on open timed db-nets (cf. \yellowbox{4}, \yellowbox{5}),
    \item an interpretation procedure of graph representation as open timed db-nets (cf. \yellowbox{6}), and
    \item optimization realizations on the graph representation leveraging the interpretation to prove their correctness (cf. \yellowbox{3}).
\end{enumerate}

\paragraph{Outline}
We introduce the ReCO process in \cref{sec:ReCO}, together with an integration process modeling example.
In \cref{sec:background}, we analyze recurring integration pattern characteristics, which are relevant for developing our formalisms, and collect optimization strategies for integration processes. We also identify eight requirements for formalizing integration pattern compositions.
In \cref{sec:formalization}, we describe integration pattern graphs, and use them to specify pattern compositions with inherent structural correctness. We also give an abstract cost model which can be used to determine if an optimization is an improvement or not.
\Cref{sec:semantics} extends the timed DB-net formalism~\cite{RITTER2019101439} to open nets to introduce compositional aspects, and uses this extended formalism to capture the dynamics of integration patterns --- that is, how data flows through the system. We also briefly describe an implementation of simulation of timed DB-nets as an extension of CPN Tools.
In \cref{sec:interpretation}, we combine the two formalisms by showing how integration pattern graphs can be interpreted as open timed DB-nets.
Next in \cref{sec:realization} we realize optimization strategies as rewrite rules for integration pattern graphs, and show 
that these optimisations preserve the functional correctness of patterns when interpreted as timed DB-nets.
In \cref{sec:evaluation}, we evaluate the optimizations on a large body of integration processes, and apply the ReCO process to two integration processes.
We discuss related work in \cref{sec:relatedwork} and conclude in \cref{sec:conclusion}.

 \paragraph{History of this paper} This paper extends our previous conference paper~\cite{RITTER2019101439} with several new contributions. Firstly, this paper is structured along
 the ReCO process, whereas the conference version was not. Secondly, the dynamic semantics in the form of open timed DB-nets and their implementation as an extension of CPN Tools (\cref{sec:semantics}), and the interpretation of integration pattern graphs into it (\cref{sec:interpretation}) are new, together with the proof that the optimizations proposed preserve the functional behaviour of the corresponding timed DB-nets (\cref{sub:op:correctness}).
The case studies of the ReCO process in \cref{sub:pc:evaluation} are also new in this version.

\section{Responsible Composition and Optimization Process for Patterns}
\label{sec:ReCO}
To reason about responsible pattern composition and optimization, one first needs to consider formalizations of \emph{patterns}, \emph{compositions}, and \emph{composition improvements}.
In this section, we introduce a pattern composition and optimization process that covers all of these topics, and we describe the process modeling in more detail by an integration process example.
Finally, we specify the problem of responsible pattern composition and optimization.

\newcommand{\bluebox}[1]{\colorbox{Aquamarine!70}{\textbf{#1}}}

\subsection{Responsible Composition and Optimization Process}
Our approach to a responsible pattern composition and optimization process is described in \cref{fig:opt:process}.
We briefly discuss the main concepts involved.

\begin{figure}[bt]
	\centering
	\includegraphics[width=1\linewidth]{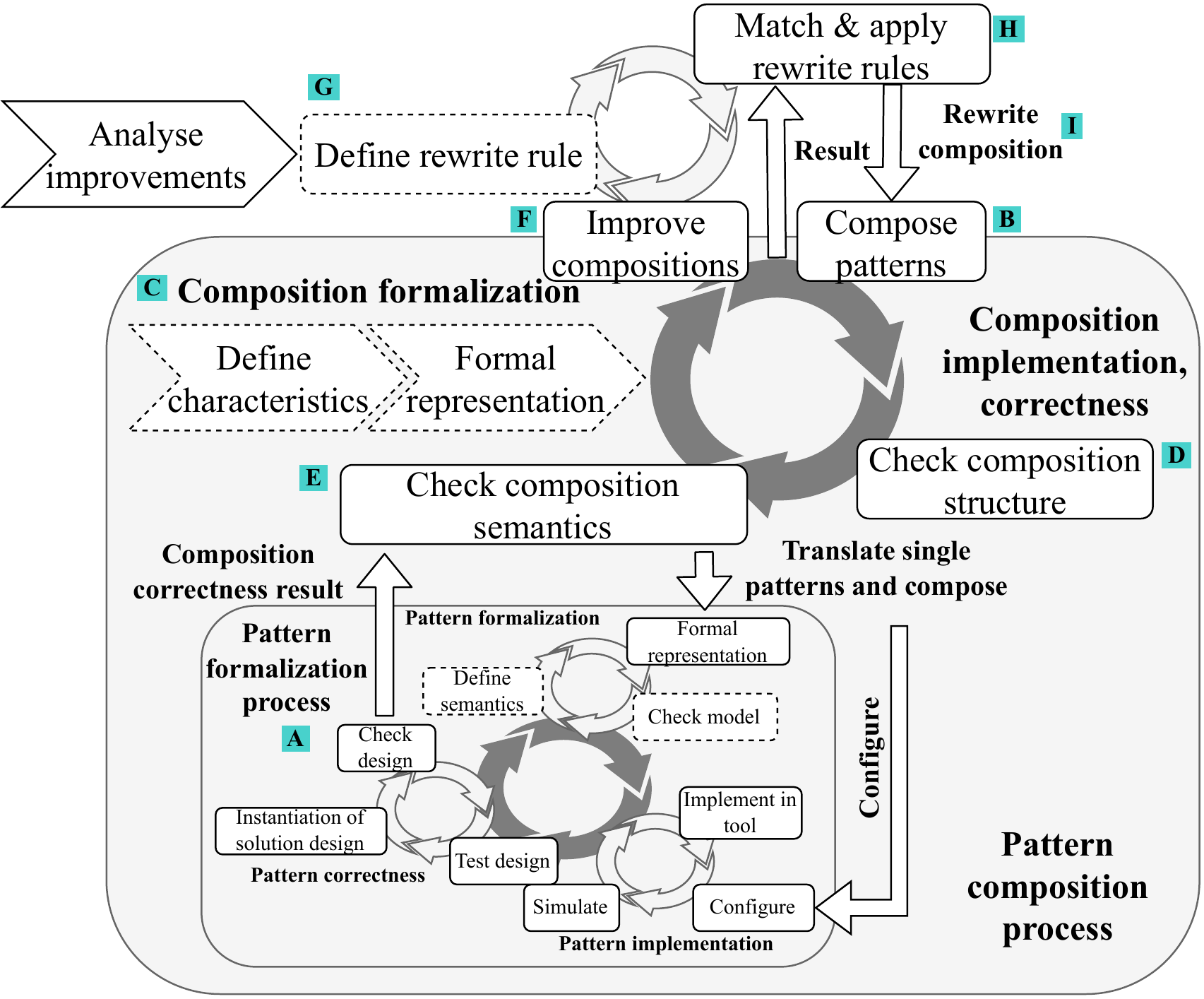}
	\caption{Responsible pattern composition and optimization process (white colored boxes denote the main steps, grey colored box shows the \emph{pattern formalization process} from \cite{RITTER2019101439}, to which we bridge through translation of single patterns and compositions)}
	\label{fig:opt:process}
\end{figure}
\begin{figure*}[bt]
	\centering
	\includegraphics[width=.8\linewidth]{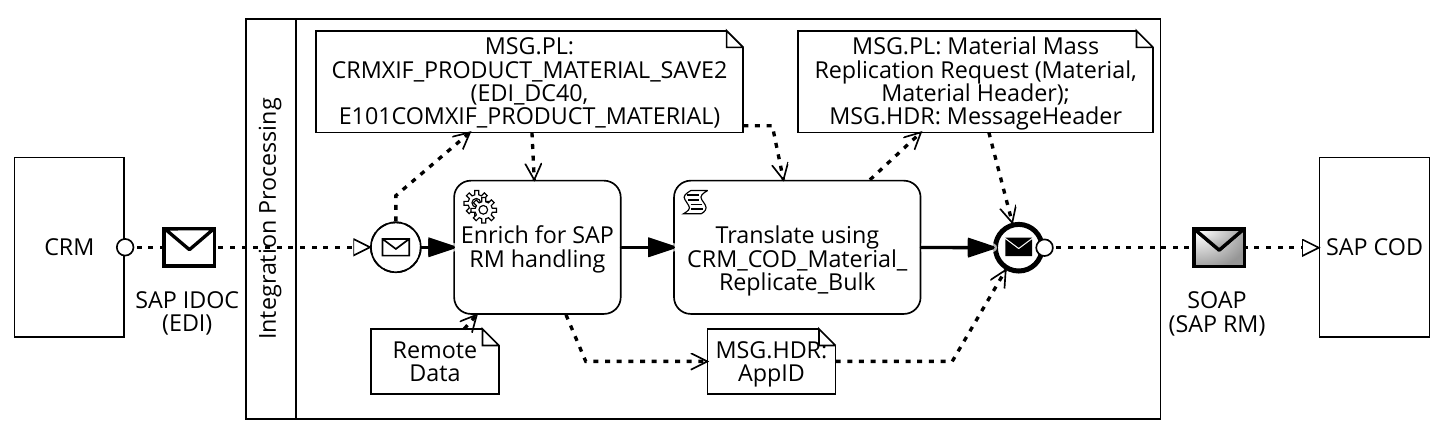}
	\caption[Replicate material from SAP Business Suite (in BPMN)]{Replicate material from SAP Business Suite (a \emph{hybrid integration} scenario in BPMN)}
	\label{fig:pn:c4c}
\end{figure*}
\begin{description}
    \item[Patterns] The \emph{pattern formalization process} (cf. \bluebox{A} in \cref{fig:opt:process}) is specified in \cite{RITTER2019101439}.
    For ReCO, patterns are provided as modeling building blocks by the integration platform.
    Integration experts and platform developers start developing patterns by defining their semantics 
    in formal representation that allows to formally analyze the pattern models. 
    In addition, the realization of a pattern can be configured and simulated as well as tested.
    \item[Composition] In ReCO, a process is modeled and configured by a user/modeler in a process modeling tool by composing (cf. \bluebox{B} \emph{compose patterns}) existing, single, semantically correct, integration patterns, 
    from which the model is then translated into a formal representation by the integration platform (cf. \bluebox{C}).
    With a thorough understanding of pattern characteristics within a process, composition can be formalized with inherent structural correctness (cf. \bluebox{D} \emph{check composition structure}).
    Semantic correctness of the whole composition of patterns can be checked through a notion of composition added to the pattern 
    formalization (cf. \bluebox{E} \emph{check composition semantics}).
    \item[Improvements] The formal foundation of structurally and semantically correct processes allows for proposed improvements that preserve the processes' correctness (cf. \bluebox{F} \emph{improve compositions}).
    For ReCO, the improvements could be defined by integration experts, but also users in a formalism provided by the integration platform that allows for process rewriting suitable to the underlying formalisms (cf. \bluebox{G} \emph{define rewrite rule}).
    The improvements are automatically matched (cf. \bluebox{H} \emph{match \& apply rewrite rules}), and if applicable, it leads to the application of an improvement (cf. \bluebox{I} \emph{rewrite composition}).
\end{description}

To summarise, our vision is that integration experts and platform developers specify patterns as building blocks of an integration platform (cf. \bluebox{A}),
where the aspect of correct composition of multiple patterns into processes in terms of structure and semantics is based on a formalism used by the platform (cf. \bluebox{B}--\bluebox{E}).
Correctness-preserving process improvements should be developed and proved to be correct by integration experts, and automatically applied by the platform (cf. \bluebox{F}--\bluebox{I}).

\begin{figure*}[bt]
	\centering
	\includegraphics[width=0.7\linewidth]{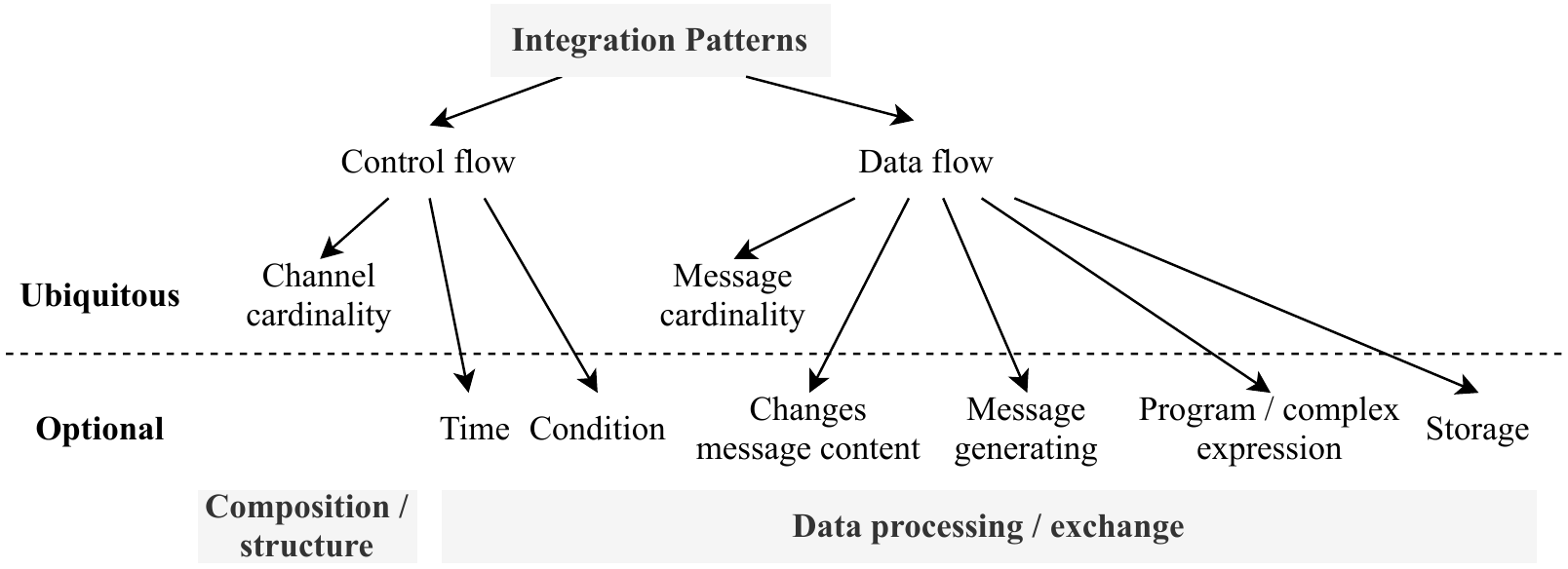}
	\caption{Categorizing integration pattern characteristics according to control and data flow}
	\label{fig:characteristics0}
\end{figure*}

\subsection{Potential Process Optimization by Example}
We now introduce the composition stage by example of an integration process that could be improved using ReCO.

Many organizations have started to connect their on-premise applications such as Customer Relationship Management (CRM) systems with cloud applications such as SAP Cloud for Customer (COD) using integration processes similar to the one shown in \cref{fig:pn:c4c}.
A \emph{CRM Material} is sent from the CRM system via EDI (more precisely the SAP IDOC transport protocol) to an integration process running on SAP Cloud Platform Integration (CPI) \cite{sap-hci-content}.
The integration process enriches the message header (MSG.HDR) with additional information based on a document number for reliable messaging (\ie AppID), which allows redelivery of the message in an exactly-once service quality~\cite{ritter2016exception}.
The IDOC structure is then mapped to the COD service description and sent to the COD receiver.

Already in this simple integration process, 
an obvious improvement can be applied: the data-independent Content Enricher and Message Translator patterns~\cite{hohpe2004enterprise} could be executed in parallel.
Importantly, such a change does not alter the behaviour of the integration process.

In this paper, we seek to find a mechanism to combine the inherently, structurally correct pattern composition formalism from \cite{DBLP:conf/debs/0001MFR18} with the work on timed db-nets \cite{RITTER2019101439} that allow for semantically correct definitions of integration patterns, and to prove that improvements are correctness-preserving.
The interaction between the user/modeler and the integration system requires a ReCO process that addresses the objectives (i)--(v).


\section{Background and Requirements}
\label{sec:background}
In this section, we give a brief background on application integration patterns and their optimizations, by analyzing recurring pattern characteristics as well as collecting existing optimizations as strategies.
We also derive and discuss requirements for a suitable formalism for pattern compositions in the context of the optimization strategies.

\subsection{Integration Pattern Characteristics}
\label{sec:characteristics}
Enterprise integration patterns (EIPs)~\cite{hohpe2004enterprise} with recent additions~\cite{ritter2016exception,Ritter201736} form a suitable and important abstraction when implementing application integration scenarios.
Besides their original differentiation in functional categories such as \emph{message channels}, \emph{message routers}, \emph{message transformations}, and \emph{messaging endpoints} among others, there are more subtle means of classifying patterns by \emph{pattern characteristics} that consider the control and data flow within and between integration patterns, and thus help greatly when formalizing pattern compositions.

We analyzed all patterns from the literature \cite{hohpe2004enterprise,ritter2016exception,Ritter201736} regarding their control and data flow characteristics.
Our findings are summarised in \cref{fig:characteristics0}.
The characteristics of \emph{channel} and \emph{message cardinality} (CC and MC, respectively) are ubiquitous and can be found in every pattern. We also identified a number of optional and non-exclusive characteristics: if the pattern \emph{changes message contents} (CHG), if it is \emph{message generating} (MG), if it has \emph{conditions} (CND), and if it has \emph{programs / complex expressions} (PRG).
Additionally, \emph{time} and \emph{storage} (also found in \cite{RITTER2019101439}) are important requirements, especially for our later considerations on runtime semantics, but on the level of compositions, they are not very relevant due to their pattern local nature, and are subsumed under control and data flow.

Together these characteristics summarize all relevant control and data flow properties of the considered integration patterns.
In this work, composition and structure becomes relevant for checking structural correctness properties, while data processing and exchange characteristics are required mostly for a composition's semantic correctness.
Both notions of correctness are especially relevant for modeling as well as any improvements to the composition (\eg in the form of optimizations).

\subsubsection{Ubiquitous Characteristics}
Every pattern has both a channel and a message cardinality, covering control and data flow aspects that are relevant for composing patterns.

\paragraph{Channel Cardinality}
The channel cardinality specifies the number of incoming and outgoing message channels of an integration pattern.
It is especially important for the structural correctness of a pattern composition.
The relative number of patterns of each channel cardinality can be found in \cref{fig:characteristics1}.
A \emph{zero-to-one} or \emph{one-to-zero} cardinality is exclusively found in 
\emph{start}- and \emph{end} components of a composition, like message endpoints (\eg commutative endpoint \cite{Ritter201736}) or integration adapters (\eg event-driven consumer \cite{hohpe2004enterprise}).
Most of the transformation patterns and some of the routing patterns are \emph{message processors} that have a channel cardinality of \emph{one-to-one} (\eg aggregator, splitter \cite{hohpe2004enterprise}).
The remainder of the routing patterns are either \emph{forks} with cardinality \emph{one-to-many} (\eg multicast), \emph{conditional forks} with \emph{one-to-many (cond.)} (\eg content-based router \cite{hohpe2004enterprise}), or \emph{joins} with cardinality \emph{many-to-one} (\eg join router \cite{Ritter201736}).
We found no \emph{many-to-many} patterns at all.
\begin{figure}[bt]
	\centering
	\includegraphics[width=0.9\columnwidth]{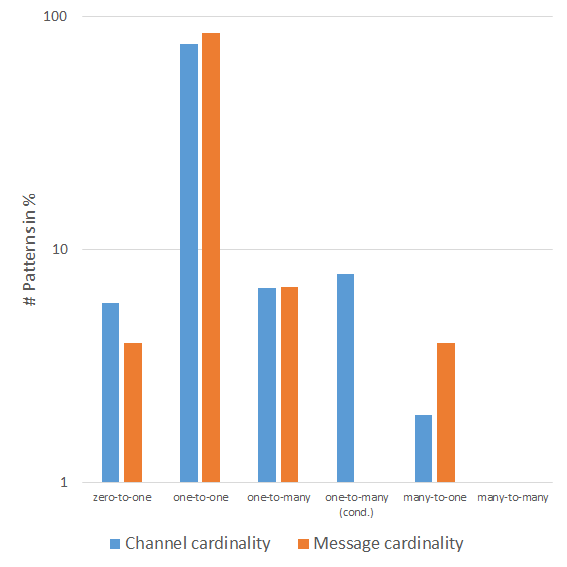}
	\caption{Channel and message cardinalities (\enquote{zero-to-one} includes \enquote{one-to-zero})}
	\label{fig:characteristics1}
\end{figure}

\paragraph{Message cardinality} Similar to the channel cardinality, the message cardinality specifies the number of incoming and outgoing messages.
However, the message cardinality is relevant for specifying the data transfer between patterns in a composition.
As summarised in \cref{fig:characteristics1}, we found that most of the integration patterns receive or require one and emit one message (\eg message translator, message signer).
There are also patterns that require one message and emit several messages (\eg splitter, multicast) and similarly receive many and emit only one (\eg aggregator).
Finally, there are patterns that require zero and emit one or vice versa (\eg event-driven consumer or producer).
Note that there are no patterns with a message cardinality of \emph{one-to-many (cond.)} or \emph{many-to-many}.

\subsubsection{Optional Characteristics}
We found that certain data flow characteristics are only present in some patterns.
The number of patterns that have each identified characteristic can be found in \cref{fig:characteristics2}.
\begin{figure}[bt]
	\centering
	\includegraphics[width=0.9\columnwidth]{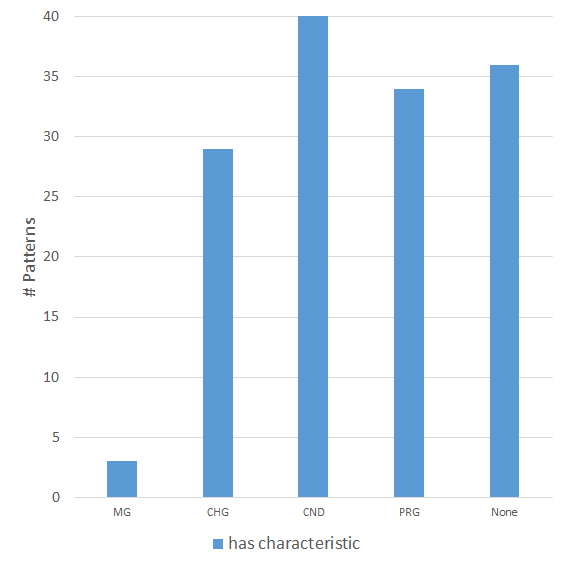}
	\caption{Further characteristics and configurations}
	\label{fig:characteristics2}
\end{figure}

\paragraph{Message Generating} A  pattern is message generating, if it does not simply pass or alter the received message, but creates a completely new message (\eg aggregator, splitter).
The new message might preserve data and structure from a received message, but will have a new message identifier.
If a newly generated message is composed out of several received messages, the lineage to the original messages is preserved by adding a message history \cite{hohpe2004enterprise}.
We identified only three message generating patterns.

\paragraph{Changing Message Content} In some cases it might be interesting to distinguish between integration patterns that actually change the content of received messages (\ie mostly, but not exclusively, message transformation patterns) and read-only patterns, which either read data from the message for evaluating a condition (\eg content-based router) or do not look into the message at all (\eg multicast \cite{Ritter201736}, claim check \cite{hohpe2004enterprise}).
We found 29 message changing patterns.

\paragraph{Conditions} A condition is a read-only, user-defined function that returns a Boolean valuation.
Conditions are mostly used in routing patterns, which decide on route or no-route, when receiving a message.
We found 40 patterns that require a condition to function.

\paragraph{Program/Complex Expressions} A program or complex expression is an arbitrary, user-defined function that might change the content of a message by local or remote program execution (incl. remote procedure and database calls), which we found in $34$ patterns.
As can be seen in \cref{fig:characteristics2}, there are more patterns requiring an expression than there are patterns changing the content of a message.
Consequently, there is a small number of read-only patterns that require more complex processing, potentially with side-effects (\eg load balancer \cite{Ritter201736}).

\paragraph{None} Since the identified characteristics are optional and non-exclusive, there is also a significant number of $36$ patterns that do not have any of the characteristics (\eg the detour \cite{Ritter201736}, wire tap \cite{hohpe2004enterprise}).

\subsubsection{Summary}
Analysing all known application integration patterns from the literature \cite{hohpe2004enterprise,ritter2016exception,Ritter201736}, we identified and categorized several control and data flow characteristics, relevant for the structural composition of those patterns and their internal and inter-pattern data flow.
Channel cardinality and message cardinality are relevant to all patterns, but other characteristics are optional and non-exclusive.

\subsection{Static Optimization Strategies}
\label{sec:stategies}
We consider improvements very important in the context of EAI, and thus we briefly survey recent attempts to optimize composed EIPs, in order to motivate the need to formalize their semantics.
As a result, we derive so far unexplored prerequisites 
for optimizing compositions of EIPs.
A more detailed collection of optimization strategies can be found in a non-mandatory supplementary material~\cite{DBLP:journals/corr/abs-1901-01005}.
\begin{table*}[tb]
	\centering
	\small
	\caption{Optimizations in related domains --- horizontal search}
	\label{tab:systemreview}
	\begin{tabular}{lccll}
		\hline
		\parbox[t]{3.5cm}{Keyword} & hits & selected & Selection criteria & Selected Papers \\
		\hline
		\parbox[t]{3.5cm}{Business Process Optimization} & $159$ & $3$ & \parbox[t]{4cm}{data-aware processes} & \parbox[t]{7.5cm}{survey \cite{vergidis2008business}, optimization patterns~\cite{niedermann2011business,niedermann2011deep}} \\
		\parbox[t]{3.5cm}{Workflow Optimization} & $396$ & $6$ & \parbox[t]{4cm}{data-aware processes} & \parbox[t]{7.5cm}{instance scheduling \cite{agrawal2010scheduling,bittencourt2011hcoc,tirapat2013cost}, scheduling and partitioning for interaction \cite{ahmad2014data}, scheduling and placement \cite{benoit2012throughput}, operator merge~\cite{habib2013adapting}} \\
		\parbox[t]{3.5cm}{Data Integration Optimization} & $61$ & $2$ & \parbox[t]{4cm}{data-aware processes optimization, (no schema-matching)} & \parbox[t]{7.5cm}{instance scheduling, parallelization \cite{zhang2012cost}, ordering, materialization, arguments, algebraic \cite{getta2011static}} \\
		\parbox[t]{3.5cm}{} & & & &\\
		\parbox[t]{3.5cm}{Added} & n/a & $13$ & expert knowledge & \parbox[t]{7.5cm}{business process \cite{vrhovnik2007approach}, workflow survey \cite{kougka2014optimization,DBLP:journals/corr/KougkaGS17}, data integration \cite{bohm2008model}, distributed applications \cite{DBLP:journals/is/BohmK11}, EAI \cite{ritter2016exception,Ritter201736,DBLP:conf/debs/RitterDMR17,DBLP:conf/bncod/Ritter17}, placement \cite{DBLP:conf/edoc/000120,DBLP:journals/is/Ritter22}, resilience \cite{Nygard:2007:RDD:1200767}} \\
		\parbox[t]{3.5cm}{Removed} & - & $1$ &  & \parbox[t]{7.5cm}{classification only \cite{vergidis2008business}} \\
		\parbox[t]{3.5cm}{} &          &               & &  \\

		\parbox[t]{3.5cm}{Overall} & $616$ & $23$ &  &\\
	\end{tabular}
\end{table*}

\subsubsection{Identifying Optimization Strategies}
Since a formalization of the EAI foundations in the form of integration patterns for static optimization of \enquote{data-aware} pattern processing is missing~\cite{Ritter201736}, we conducted a horizontal literature search~\cite{kitchenham2004procedures} to identify optimization techniques in related domains.
For EAI, the domains of business processes, workflow management and data integration are of particular interest.
The results of our analysis are summarized in~\cref{tab:systemreview}.
Out of the resulting $616$ hits, we selected $18$ papers according to the search criteria \enquote{data-aware processes}, and excluded work on unrelated aspects.
This resulted in the \emph{seven} papers listed in \cref{tab:optimization_strategies}.
The mapping of techniques from related domains to EAI was done by for instance taking the idea of projection push-downs~\cite{niedermann2011business,habib2013adapting,getta2011static,vrhovnik2007approach,DBLP:journals/is/BohmHPLW11} and deriving the early-filter or early-mapping technique in EAI.
We categorized the techniques according to their impact (\eg structural or process, data-flow) in context of the objectives for which they provide solutions.

In the following subsections, we now briefly discuss the optimization strategies listed in \cref{tab:optimization_strategies}, in order to derive prerequisites needed for optimizing compositions of EIPs.
To relate to their practical relevance and coverage so far (in the form of evaluations on \enquote{real-world} integration scenarios), we also discuss existing \enquote{data-aware} message processing solutions for each group of strategies.

\newcommand{\improved}{$\uparrow$} 
\newcommand{\slightlyImproved}{$\nearrow$} 
\newcommand{\notImproved}{$\searrow$} 

\begin{table*}[bt]
	\centering
	\small
	\caption{Optimization Strategies in the Context of Integration Processes}
	\label{tab:optimization_strategies}
	\begin{tabular}{ll|ccc||c}
		\hline
          Strategy & Optimization & Throughput & Latency & Complexity & Practical Studies \\
		\hline
		\multirow{2}{*}{\parbox[t]{3.0cm}{OS-1: Process \\ Simplification}} & {Redundant Sub-process Removal \cite{DBLP:journals/is/BohmHPLW11}} &   & \improved & \improved & -\\
			& {Combine Sibling Patterns \cite{habib2013adapting,DBLP:journals/is/BohmHPLW11}} &   & \improved & \improved & (\cite{DBLP:conf/debs/RitterDMR17}) \\
			& {Unnecessary conditional fork \cite{vrhovnik2007approach,DBLP:journals/is/BohmHPLW11}} & \slightlyImproved & \improved & \improved & -\\
		\hline
		\multirow{2}{*}{\parbox[t]{3.0cm}{OS-2: Data \\ Reduction}} & {Early-Filter \cite{niedermann2011business,habib2013adapting,getta2011static,vrhovnik2007approach,DBLP:journals/is/BohmHPLW11}} & \improved &   &   & \cite{DBLP:conf/debs/RitterDMR17}\\
		& {Early-Mapping \cite{habib2013adapting,getta2011static,DBLP:journals/is/BohmHPLW11}} & \improved &   &   & \cite{DBLP:conf/debs/RitterDMR17,edoc2017}\\
		& {Early-Aggregation \cite{habib2013adapting,getta2011static,DBLP:journals/is/BohmHPLW11}} & \improved &   &   & \cite{edoc2017}\\
		& {Claim Check \cite{getta2011static,DBLP:journals/is/BohmHPLW11}} & \improved &   & \notImproved & -\\
		& {Early-Split \cite{DBLP:conf/debs/RitterDMR17}} & \improved &   & \notImproved & \cite{DBLP:conf/debs/RitterDMR17,edoc2017}\\
		\hline
		\parbox[t]{3.0cm}{OS-3: Parallelization} &
		{Sequence to parallel \cite{niedermann2011business,zhang2012cost,vrhovnik2007approach,DBLP:journals/is/BohmHPLW11}} & \improved &   & \notImproved & \cite{DBLP:conf/debs/RitterDMR17,DBLP:conf/bncod/Ritter17}\\
		& {Merge parallel sub-processes \cite{niedermann2011business,zhang2012cost,vrhovnik2007approach,DBLP:journals/is/BohmHPLW11}} &   & \improved & \improved & \cite{DBLP:conf/debs/RitterDMR17}\\
        & Heterogeneous parallelization \cite{benoit2012throughput} & \improved &   & \notImproved & - \\ 
        \hline
        {\parbox[t]{3.0cm}{OS-4: Pattern \\ Placement}} & {\parbox[t]{4.5cm}{Pushdown to Endpoint \\ (extending OS-2)}} & \improved  & & (\improved) & \cite{DBLP:conf/bncod/Ritter17,DBLP:conf/edoc/000120,DBLP:journals/is/Ritter22} \\ 
        \hline
        \multirow{2}{*}{\parbox[t]{3.0cm}{OS-5: Reduce \\ Interaction}} & Ignore Failing Endpoints \cite{ritter2016exception,Ritter201736,Nygard:2007:RDD:1200767} & & \improved & & - \\ 
        & Reduce Requests \cite{vrhovnik2007approach} & \slightlyImproved & \improved & & -\\
	\end{tabular}
\begin{tablenotes}
	\centering
	\small
	\item \improved: improvement, \slightlyImproved: slight improvement, \notImproved: slight deterioration, \textvisiblespace: no effect
\end{tablenotes}
\end{table*}

\subsubsection{Process Simplification}
We grouped together techniques whose main goal is reducing model complexity (\ie number of patterns) under the heading of process simplification.
The cost reduction of these techniques can be measured by pattern processing time (latency, \ie time required per operation) and model complexity metrics~\cite{sanchez2010prediction}.
Process simplification can be achieved by removing redundant patterns like \emph{Redundant Subprocess Removal} (\eg removing one of two identical sub-flows), \emph{Combine Sibling Patterns} (\eg removing one of two identical patterns), or \emph{Unnecessary Conditional Fork} (\eg removing redundant branching).
As far as we know, the only practical study of combining sibling patterns can be found in Ritter et al.~\cite{DBLP:conf/debs/RitterDMR17}, showing moderate throughput improvements.
The simplifications requires a formalization of patterns as a control graph structure, which helps to identify and deal with the structural changes. 
Previous work targeting process simplification include B{\"{o}}hm et al.~\cite{DBLP:journals/is/BohmHPLW11} and Habib, Anjum and Rana~\cite{habib2013adapting}, who use evolutionary search approaches on workflow graphs, and Vrhovnik et al.~\cite{vrhovnik2007approach}, who use a rule formalization on query graphs.

\subsubsection{Data Reduction}
The reduction of data can be facilitated by pattern push-down optimizations of message-element-cardinality-reducing patterns, which we call \emph{Early-Filter} (for data; \eg removing elements from the message content), \emph{Early-Mapping} (\eg applying message transformations), as well as message-reducing optimization patterns like \emph{Early-Filter} (for messages; \eg removing messages), \emph{Early-Aggregation} (\eg combining multiple messages), \emph{Early-Claim Check} (\eg storing content and claiming it later without passing it through the pipeline), and \emph{Early-Split} (\eg cutting one large message into several smaller ones).
Measuring data reduction requires a cost model based on the characteristics of the patterns, as well as the data and element cardinalities.
For example, the practical realizations for multimedia~\cite{edoc2017} and hardware streaming~\cite{DBLP:conf/debs/RitterDMR17} show improvements especially for early-filter, split and aggregation, as well as moderate improvements for early-mapping.
This requires a formalization that is able to represent data or element flow. 
Data reduction optimizations target message throughput improvements (\ie processed messages per time unit), however, some have a negative impact on the model complexity.
Previous work on data reduction include Getta~\cite{getta2011static}, based on relational algebra, and Niedermann, Radesch{\"u}tz and Mitschang~\cite{niedermann2011business}, who define optimizations algorithmically for a graph-based model.

\subsubsection{Parallelization}
Parallelization of processes can be facilitated through transformations such as \emph{Sequence to Parallel} (\eg duplicating a pattern or sequence of pattern processing), or, if not beneficial, reverted, \eg by \emph{Merge Parallel}.
For example, good practical results have been shown for vectorization~\cite{DBLP:conf/bncod/Ritter17} and hardware parallelization~\cite{DBLP:conf/debs/RitterDMR17}.
Formalizing such operations again require a control graph structure.
Although the main focus of parallelization is message throughput, heterogeneous variants also improve latency. 
In both cases, parallelization requires additional patterns, which negatively impacts the model complexity, whereas merging parallel processes improves the model complexity and latency. 
Previous work on pattern parallelization include Zhang et al.~\cite{zhang2012cost}, who defines a service composition model, to which algorithmically defined optimizations are applied.

\subsubsection{Pattern Placement}
For all of the data reduction optimizations (cf. OS-2) \enquote{Pushdown to Endpoint} can be applied by extending the placement to the message endpoints, 
which improves message throughput and reduces the complexity of the integration process, while increasing the complexity at the message endpoints.
For example, good practical results have been shown for vectorization~\cite{DBLP:conf/bncod/Ritter17} and cost efficient, security-aware placement \cite{DBLP:conf/edoc/000120,DBLP:journals/is/Ritter22}.

\subsubsection{Reduce Interaction}
The resilience and robustness of integration processes is crucial -- especially in the cloud.
Dependencies to resources used by an integration process (\eg database, message queuing) and the message endpoints (\eg mobile, cloud) has to be dealt with during the message processing.
Optimizations like \emph{Ignore Failing Endpoints} and \emph{Reduce Requests} help dealing with potentially unreliable network communication, and allow for smart network usage and reaction to exceptional situations or unavailabilities.
This requires a formalism that is able to represent data flow and time.
These optimizations target more stable processes, improved latency and potentially higher throughput.

\subsubsection{Summary}
We found several optimizations in related domains like data integration, business process and workflow, and re-interpreted them in the context of EAI.
The optimizations have different effects regarding relevant categories like throughput, latency and model complexity, which we used to categorize them into three disjoint classes of optimization strategies, namely process simplification, data reduction and parallelization.
For most of the optimizations we identified implementations that report a successful application to problems in the EAI domain.

\subsection{Discussion: Requirements for Formalizing Integration Pattern Compositions}
\label{sec:requirements}
Based on our analyses of characteristics of single patterns and optimization strategies for pattern compositions, we briefly discuss requirements for the formalization of pattern compositions and their improvements as optimizations.
These requirements are listed and set into context to the closest related approaches known from the application and data integration domains in \cref{tab:requirements}.
We also give examples of optimization strategies (cf. OS-x) that are co-enabled when fulfilling the requirements.
\newcommand{\covered}{\checkmark} 
\newcommand{\partiallyCovered}{(\checkmark)} 
\newcommand{\textinvisiblespace}{}
\begin{table*}[bt]
	\centering
	\small
	\caption{Formalization requirements}
	\label{tab:requirements}
	\begin{tabular}{l p{5.2cm} ccc}
		\hline
        ID & Requirement & Fahland et al. \cite{DBLP:conf/caise/FahlandG13} & Ritter et al. \cite{RITTER2019101439} & B\"ohm et al. \cite{DBLP:journals/is/BohmHPLW11} \\
		\hline
		\req{controlflow} & Control flow (pipes and filter) & \covered & \covered & \partiallyCovered \\
		\req{time} & Time & \textinvisiblespace & \covered & \textinvisiblespace \\
		\hline
		\req{dataflow} & Data flow & \covered & \covered & \covered \\
        \hline
		\hline
		\req{database} & Database, Transactions & \textinvisiblespace & \covered & \textinvisiblespace \\
		\req{exceptions} & Exceptions, Compensations & \textinvisiblespace & \covered & \textinvisiblespace \\
		\hline
		\req{compositional} & Compositional & \partiallyCovered & \partiallyCovered & \partiallyCovered \\
		\req{improvements} & Improvements (control and data flow) & \textinvisiblespace & \textinvisiblespace & \partiallyCovered \\
		\req{correctness} & Preserving correctness  & \textinvisiblespace & \textinvisiblespace & \textinvisiblespace \\
	\end{tabular}
\begin{tablenotes}
	\centering
	\small
	\item \covered: covered, \partiallyCovered: partially covered, \textvisiblespace: not covered or out of scope
\end{tablenotes}
\end{table*}

We found that a fundamental support for control flow is mandatory for pattern compositions, due to the pipes-and-filters nature of integration scenarios \cite{hohpe2004enterprise,Ritter201736} (\eg co-enabling OS-1).
Hence, a suitable formalism representing pattern compositions requires a formal representation (\textbf{\refreq{controlflow}: Formal representation of control flow}), \eg as found in first formalization attempts using Coloured Petri Nets by Fahland and Gierds \cite{DBLP:conf/caise/FahlandG13} and our recent work on timed db-nets \cite{RITTER2019101439}, which essentially denotes an improvement of previous work regarding formal representation and analysis properties like data flow, time, transactional data storage and exceptions.
The work by B\"ohm et al. \cite{DBLP:journals/is/BohmHPLW11} stems from the data integration domain, and thus only has a weak notion of control flow and none of integration patterns.
As also found in \cite{RITTER2019101439}, the known integration patterns require different aspects of time like timeouts, delays, and time-based rate limits to be functional (\textbf{\refreq{time}: Formal treatment of time}), \eg co-enabling OS-5.

The concept of pipes-and-filters also requires support for the flow and processing of messages (\textbf{\refreq{dataflow}: Formal representation of data flow}), which is also supported by the closest known formalizations, using Coloured Petri nets in \cite{DBLP:conf/caise/FahlandG13}, plus an extension to db-nets \cite{DBLP:journals/topnoc/MontaliR17} in \cite{RITTER2019101439}, and a data dependency tree in \cite{DBLP:journals/is/BohmHPLW11} (\eg co-enabling OS-2).

Another batch of requirements from related work \cite{RITTER2019101439} target properties that were only recently formally represented, and thus are not considered in \cite{DBLP:conf/caise/FahlandG13,DBLP:journals/is/BohmHPLW11}.
Let us briefly summarize these pattern-level requirements in the context of this work.
To be operational, some of the patterns like commutative and idempotent receivers as well as aggregator require the ability to persistently store data (\textbf{\refreq{database}: Formal treatment of databases and transactions}), \eg co-enabling OS-2.
Finally, potential exceptions have to be handled and compensated for within a pattern and on a composition level (\textbf{\refreq{exceptions}: Formal treatment of exceptions and compensations}).

To represent and reason about pattern compositions, an adequate formalism must include the ability to structurally and semantically compose patterns (\textbf{\refreq{compositional}: Formalism must be compositional}), co-enabling OS-1--5.
This core requirement is only partially supported in related work: in the work on Petri nets \cite{DBLP:conf/caise/FahlandG13,RITTER2019101439}, composition is assumed through the composable nature of Petri nets, but no formal construction of such compositions is given.
Similarly, \cite{DBLP:journals/is/BohmHPLW11} introduces a composition of data integration operations, but again without a formal construction.

Once patterns are composed, the compositions will be subject to frequent changes such as extensions, adaptation due to changing legal requirements, or improvements and optimizations.
In this work we focus on optimizations representing a comprehensive set of change operations that introduce change to pattern compositions.
For a formal analysis of such changes, the optimizations themselves shall be represented in a formal way, such that compositions and their change operations can be formally analyzed (\textbf{\refreq{improvements}: Formal treatment of improvements (of control and data flow)}), co-enabling OS-1--5.
The notion of change is partially considered in data integration by \cite{DBLP:journals/is/BohmHPLW11}, but not in recent application integration work \cite{DBLP:conf/caise/FahlandG13,RITTER2019101439}.

Finally, a suitable formal representation of pattern compositions shall allow for a structural and semantic analysis of the correctness of a composition.
This requirement also contains the notion of remaining correct after applying changes to the compositions, \eg in the form of optimizations (\textbf{\refreq{correctness}: Formal specification of preserving correctness (structurally and semantically) of compositions}), co-enabling OS-1--5.
In the context of structural composition correctness, we identified the channel cardinality characteristic as decisive correctness criteria based on the control flow.
For example, a content-based router with a cardinality of $1$:$n$ channels, can only be connected to 1 input and $n$ output channels, otherwise the composition is structurally incorrect.
The semantic correctness has to go deeper into the fundamental execution semantics of integration patterns as defined in \cite{RITTER2019101439}, and thus we build on top of that formalism.
However, neither of the current approaches \cite{DBLP:conf/caise/FahlandG13,RITTER2019101439,DBLP:journals/is/BohmHPLW11} addresses the requirement of structural and semantic correctness for compositions, nor do they guarantee a general correctness-preserving property when changing compositions.
\section{Graph-based Pattern Compositions}
\label{sec:formalization}
We now introduce a formalization of pattern compositions, and an abstract cost model for them.
This is needed in order to rigorously reason about optimizations.

\subsection{Integration Pattern Graphs}
\label{sub:ipcgs}
Taking requirement \refreq{controlflow} of having a formal representation of control
flow from \cref{sec:requirements} into account, it is natural to model
pattern compositions as extended control flow
graphs~\cite{allen1970control}, as we do in \cref{def:ipg}. This gives
a high level modelling language that is easy to use and understand, and is close to informal notation used by practitioners \cite{ritter2016exception,Ritter201736}. To
take requirement \refreq{dataflow} of having a formal representation of data flow
into account, we will further enrich the vertices of the graph with
additional information in
\cref{def:characteristics,def:pattern-contract}.
Requirements \refreq{time}, \refreq{database} and \refreq{exceptions} are pattern local \cite{RITTER2019101439}, and thus not relevant at the
abstraction level of pattern compositions. They will become important
when we consider the runtime semantics of compositions later in
\cref{sec:semantics,sec:interpretation}. Control flow graphs can
easily be composed into larger graphs, and hence requirement \refreq{compositional} of
composability is fulfilled. Requirements \refreq{improvements} and \refreq{correctness} will be
addressed in \cref{sec:realization}, of course building on the
definitions in this section.

Compared to for example colored Petri Nets, integration pattern graphs represents pattern compositions at a higher and more specialised level of abstraction, which is more easily understood also for non-technical users.

Before we get to the definition of the kind of graph we need to model
pattern compositions, let us fix some notation: a directed graph is
given by a set of nodes $P$ and a set of edges
$E \subseteq P\times P$.  For a node $p \in P$, we write $\bullet p$ =
\{$p' \in P\ |\ (p',p) \in E$\} for the set of direct predecessors of
$p$, and $p \bullet$ = \{$p'' \in P\ |\ (p,p'') \in E$\} for the set
of direct successors of $p$.

\begin{definition}[Integration pattern type graph]
  \label{def:ipg}
  An \emph{integration pattern typed graph (IPTG)} is a directed graph  with set of nodes $P$ and set of edges $E \subseteq P\times P$, together with a function $\type : P \to T$, where $T = \{$start, end, message processor, fork, structural join, condition, merge, external call$\}$.
  An IPTG $(P, E, \type)$ is \emph{correct} if
  \begin{itemize}
    \itemsep0em
    \item there exists $p_1$, $p_2$ $\in$ P with $\type(p_1)$ = start and $\type(p_2)$ = end;
    \item if $\type(p) = start$ then $|\bullet p| = 0$, and if $\type(p) = end$ then $|p \bullet| = 0$.
    \item if $\type(p) \in$ \{fork, condition\} then $|\bullet p|=1$ and $|p \bullet| > 1$, and if $\type(p) = join$ then $|\bullet p| > 1$ and $|p \bullet| = 1$;
    \item if $\type(p) \in$ \{message processor, merge\} then $|\bullet p| = 1$ and $|p \bullet| = 1$;
    \item if $\type(p) \in$ \{external call\} then $|\bullet p| = 2$ and $|p \bullet| = 2$;
    \item The graph $(P, E)$ is connected and acyclic. \qedwhite
  \end{itemize}
\end{definition}

In the definition, we think of $P$ as a set of extended integration patterns that are connected by message channels represented as edges in $E$, as in a pipes and filter architecture.
The function $\type$ records what type of pattern each node represents.
The first correctness condition says that an integration pattern has at least one source and one target, while the next three states the cardinality of the involved patterns coincide with the in- and out-degrees of the nodes in the graph representing them.
The last condition states that the graph represents one integration pattern, not multiple unrelated ones, and that messages do not loop back to previous patterns.

To represent the data flow, \ie the basis for the optimizations and requirement \refreq{dataflow}, the control flow has to be enhanced with (a) the data that is processed by each pattern, and (b) the data exchanged between the patterns in the composition.
The data processed by each pattern (a) is described as a set of \textsl{pattern characteristics}: 

\begin{definition}[Pattern characteristics]
  \label{def:characteristics}
  A \emph{pattern characteristic} assignment for a
  graph $(P, E)$ is a function
  $\pchar: P \to 2^{\PC}$, assigning to each vertex a subset of the set
  \begin{align*}
    \PC =\
    & (\{\text{MC}\}\times \mathbb{N}^2) \cup {} 
     (\{\text{ACC}\} \times \{\text{ro},\text{rw}\}) \cup {} 
     (\{\text{MG}\} \times \Bool) \cup {} \\
    & (\{\text{CND}\} \times 2^{\BExp}) \cup {} 
     (\{\text{PRG}\} \times \Exp ) \cup {} \\
    & (\{\text{S}\} \times \Exp ) \cup {} (\{\text{QRY}\} \times 2^{\Exp} ) \cup {} (\{\text{ACTN}\} \times 2^{\Exp} ) \cup {} \\
    & (\{\text{TM}\} \times (\mathbb{Q}^{\geq 0} \times (\mathbb{Q}^{\geq 0} \cup \{\infty\})) \enspace ,  
  \end{align*}
  where $\Bool$ is the set of Booleans, $\BExp$ and $\Exp$ the sets of Boolean and 
  program expressions, respectively, and $\text{MC}$, $\text{ACC}$, $\text{MG}$, $\text{CND}$, $\text{PRG}$, $\text{S}$, $\text{QRY}$, $\text{ACTN}$, $\text{TM}$  some distinct symbols. \qedwhite
\end{definition}

The property and value domains in~\cref{def:characteristics} are based on the pattern characteristics identified in \cref{sec:characteristics}, and could of course be be extended if future patterns required it.
We briefly explain the intuition behind the characteristics: the characteristic $(\text{MC}, n, k)$ represents a message cardinality of $n$:$k$, $(\text{ACC}, x)$ the message access,
depending on if $x$ is read-only $\text{ro}$ or read-write $\text{rw}$, and the characteristic $(\text{MG}, y)$ represents whether the pattern is message generating depending on the Boolean $y$.
A $(\text{CND}, \{c1, ..., cn\})$ represents the conditions $c_1$, \ldots, $c_n$ used by the pattern to route messages, and $(\text{PRG}, p)$ represents a program $p$ used by the pattern (\eg for message translation).
The storage aspects are denoted by a schema $(\text{S}, (p_s))$ created by a program $p_s$, expressions $(\text{QRY}, \{q1, ..., qn\})$ denoting a set of distinct queries $q1$, \ldots, $qn$, and a set of actions $(\text{ACTN}, \{a1, ..., an\})$ with distinct $a1$, \ldots, $an$.
Finally, $(\text{TM}, (\tau_s, \tau_e))$ represents a timing window from $\tau_s$ to $\tau_e$.

\begin{example}
The characteristics of a content-based router $CBR$ is $\pchar(CBR) = $\{(MC, 1:1), (ACC, ro), (MG, false), (CND,\{$cnd_1$, \ldots, $cnd_{n-1}$\}), (PRG,\texttt{null}), (S, \texttt{null}), (QRY, $\emptyset$), (ACTN, $\emptyset$), (TM, $(0,0)$)\}, because of the workflow of the router:  it receives exactly one message, then evaluates up to $n-1$ routing conditions $cnd_1$ up to $cnd_{n-1}$ (one for each outgoing channel), until a condition matches.
The original message is then rerouted read-only (in particular, the router is not message generating) on the selected output channel, or forwarded to the default channel, if no condition matches.
The router does not require programs, storage or time configurations. \qedblack
\end{example}

The data exchange between the patterns (b) is based on \textsl{input and output contracts} (similar to data parallelization contracts in \cite{DBLP:conf/cloud/BattreEHKMW10}).
These contracts specify how the data is exchanged in terms of required message properties of a pattern during the data exchange:
\begin{definition}[Pattern contract]
  \label{def:pattern-contract}
  A \emph{pattern contract} assignment for a graph $(P, E)$ is a function $\contr : P \to {\CPT} \times 2^{\EL}$, assigning to each vertex a function of type
  \[
    \CPT = {\{\text{signed}, \text{encrypted}, \text{encoded}\} \to \{\text{yes}, \text{no}, \text{any}\}}
  \]
  and a subset of the set
  \[
    \EL = (\{\text{HDR}\} \times 2^{D}) \cup (\{\text{PL}\} \times 2^{D}) \cup (\{\text{ATTCH}\} \times 2^{D}) \enspace ,
  \]
  where $D$ is a set of  data elements (the concrete elements of $D$ will vary with the application domain). We represent the function of type $\CPT$ by its graph, leaving out the attributes that are sent to $\text{any}$, when convenient. \qedwhite
\end{definition}

The set $\CPT$ in a contract represents integration concepts, while
the set $\EL$ represents data elements and the structure of the
message: its headers $(\text{HDR}, H)$, its payload $(\text{PL}, Y)$
and its attachments $(\text{ATTCH}, A)$.
Each pattern will have an inbound and an outbound pattern contract,
describing the format of the data it is able to receive and send
respectively --- the role of pattern contracts is to make sure that adjacent inbound and outbound contracts match.

\begin{example}
A content-based router is not able to process encrypted messages.
Recall that its pattern characteristics included a collection of routing conditions: these might require read-only access to message elements such as certain headers $h_1$ or payload elements $e_1$, $e_2$.
Hence the input contract for a router mentioning these message elements is
\begin{multline*}
  \iC(CBR)= (\{(\text{encrypted},\text{no})\},\{(\text{HDR},\{h_1\}), \\ (\text{PL},\{e_1, e_2\})\}) \enspace .
\end{multline*}
Since the router forwards the original message, the output contract is the same as the input contract. \qedblack
\end{example}

\begin{definition}
\label{def:matching-contracts}
Let $(C, E) \in {\CPT} \times 2^{\EL}$ be a pattern contract, and $X \subseteq {\CPT} \times 2^{\EL}$ a set of pattern contracts. Write $X_{\CPT} = \{ C'\ | \ (\exists E')\ (C', E') \in X \}$ and $X_{\EL} = \{ E'\ | \ (\exists C')\ (C', E') \in X \}$.
We say that $(C, E)$ \emph{matches} $X$, in symbols $\match((C, E), X)$, if following condition holds:
\begin{align*}
&(\forall x)\big(C(x) \neq \text{any} \implies \\ & \qquad\quad(\forall C' \in X_{\CPT})(C'(x) = C(x) \vee C'(x) = \text{any}) \big) \land {} \\
	& (\forall (m, Z) \in E)\big( Z = \bigcup_{(m, Z') \in \cup X_{\EL}}Z' \big) \enspace .
\end{align*} \qedwhite
\end{definition}

We are interested in an inbound contract $K_\mathrm{in}$ matching the outbound contracts $K_1, \ldots, K_n$ of its predecessors. In words, this is the case if (i) for all integration concepts that are important to $K_\mathrm{in}$, all contracts $K_i$ either agree, or at least one of $K_\mathrm{in}$ or $K_i$ accepts any value (\emph{concept correctness}); and (ii) together, $K_1, \ldots, K_n$ supply all the message elements that $K_\mathrm{in}$ needs (\emph{data element correctness}).

Since pattern contracts can refer to arbitrary message elements, a formalization of an integration pattern can be quite precise.
On the other hand, unless care is taken, the formalization can easily become specific to a particular pattern composition.
In practice, it is often possible to restrict attention to a small number of important message elements (see \cref{ex:IPCG} below), which makes the formalization manageable.

Putting everything together, we formalize pattern compositions as integration pattern typed graphs with pattern characteristics and inbound and outbound pattern contracts for each pattern:
\begin{definition}
  \label{def:pattern_composition_new}
  An \emph{integration pattern contract graph} (IPCG) is a tuple
  \[
    (P, E, \type, \pchar, \iC, \oC)
  \]
  where $(P, E, \type)$ is an IPTG, $\pchar : P \to 2^{\PC}$ is a pattern characteristics assignment, and $\iC : \prod_{p \in P}({\CPT} \times 2^{\EL})^{|\bullet p|}$ and $\oC : \prod_{p \in P}({\CPT} \times 2^{\EL})^{|p \bullet|}$ are pattern contract assignments --- one for each incoming and outgoing edge of the pattern, respectively --- called the inbound and outbound contract assignment respectively.
  It is \emph{correct}, if the underlying IPTG $(P, E, \type)$ is correct, and  inbound contracts matches the outbound contracts of the patterns' predecessors, i.e.\ if for every $p \in P$
\[
  \type(p) = \text{start} \vee \match(\iC(p), \{ \oC(p')\ |\ p' \in \bullet p\}) \enspace .
\]

Two IPCGs are \emph{isomorphic} if there is a bijective function between their patterns that preserves edges, types, characteristics and contracts. \qedwhite
\end{definition}

\begin{example}
\label[example]{ex:IPCG}
  \Cref{fig:cpn_pattern_composition_summarized_v2a,fig:cpn_pattern_composition_summarized_v2b} show IPCGs representing an excerpt of the motivating example from the introduction.
\Cref{fig:cpn_pattern_composition_summarized_v2a} represents the IPCG of the original scenario with a focus on the contracts, and \cref{fig:cpn_pattern_composition_summarized_v2b} denotes an already improved composition showing the characteristics and giving an indication on the pattern latency.
\begin{figure*}[bt]
	\begin{center}$
		\begin{array}{cc}
		\subfigure[IPCG from the motivating example]{\label{fig:cpn_pattern_composition_summarized_v2a}\includegraphics[width=0.5\linewidth]{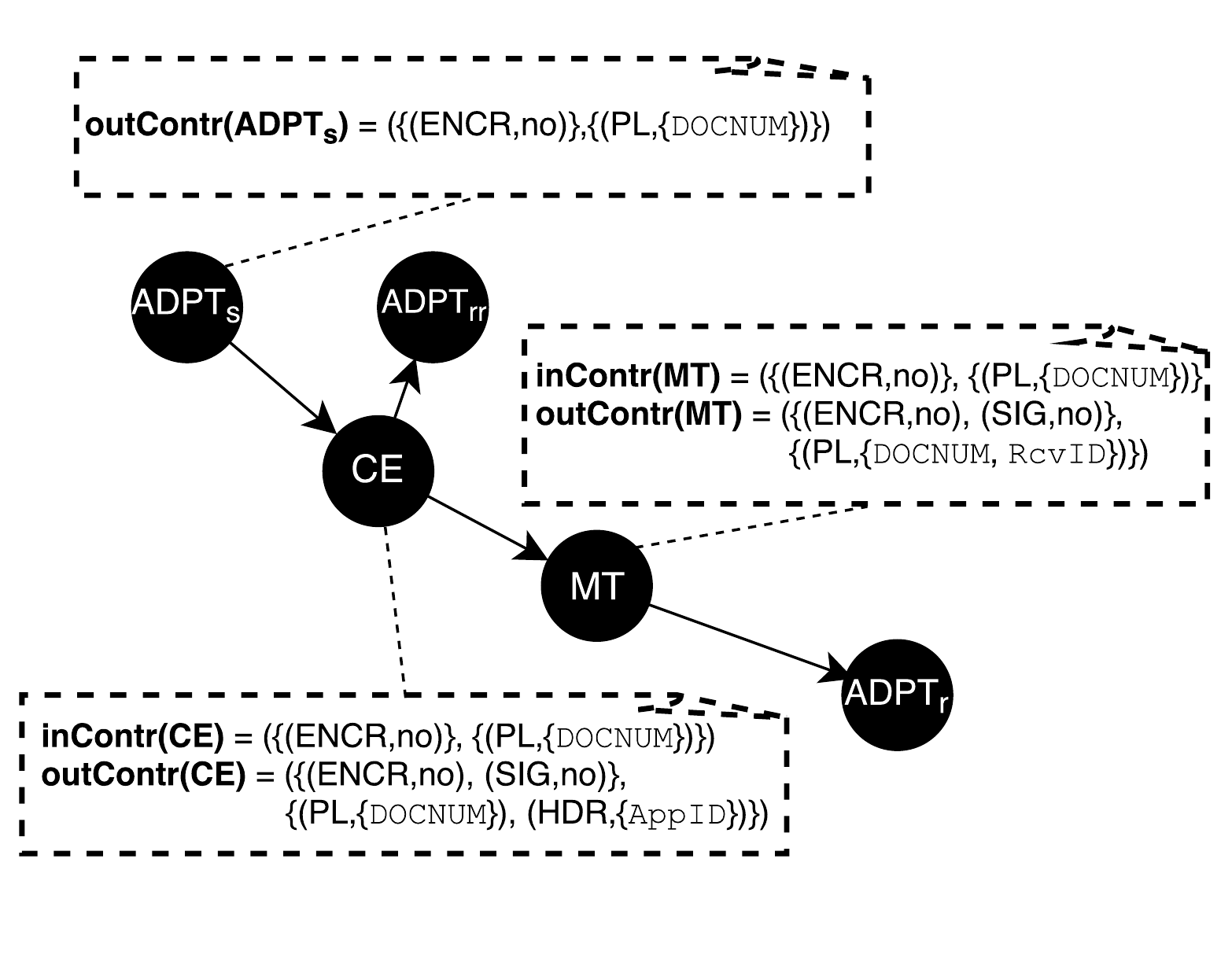}} &
\quad		\subfigure[IPCG after \enquote{sequence to parallel} optimization ]{\label{fig:cpn_pattern_composition_summarized_v2b}\includegraphics[width=0.5\linewidth]{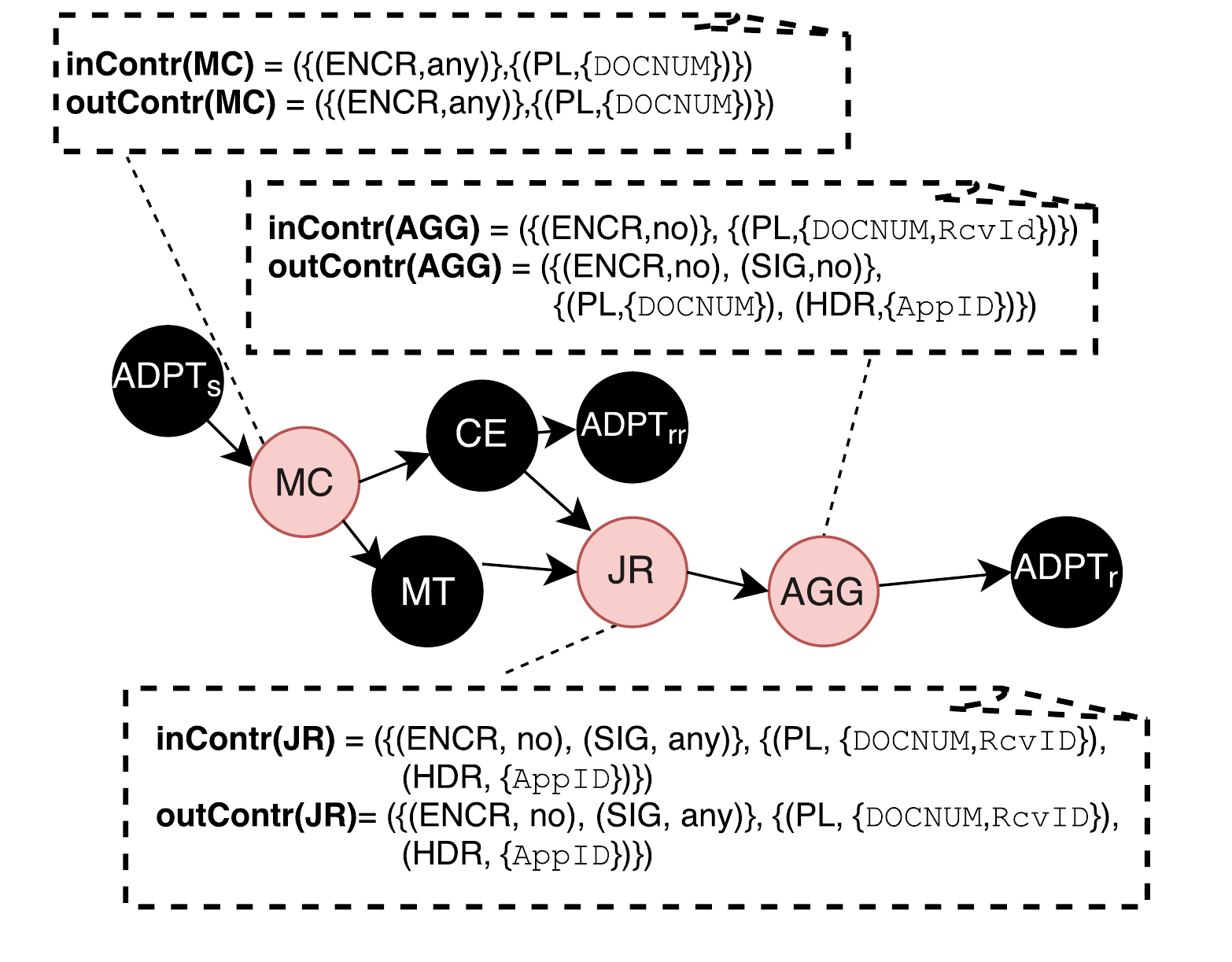}}
		\end{array}$
	\end{center}
	\vspace{-.3cm}
	\caption{An IPCG of an excerpt of the motivating example}
	\label{fig:summarized_example}
\end{figure*}
In \cref{fig:cpn_pattern_composition_summarized_v2a}, the input contract $\iC(CE)$ of the content enricher pattern $CE$ requires a non-encrypted message  and a payload element \texttt{DOCNUM}.
The content enricher makes a query to get an application ID \texttt{AppID} from an external system, and appends it to the message header.
Hence the output contract $\oC(CE)$ contains $(\text{HDR}, \{\texttt{AppID}\})$.
The content enricher then emits a message that is not encrypted or signed.
A subsequent message translator $MT$ requires the same message payload, but does not care about the appended header.
It adds another payload $\texttt{RcvID}$ to the message.
Comparing inbound and outbound pattern contracts for adjacent patterns, we see that this is a correct IPCG.

One improvement of this composition is depicted in \cref{fig:cpn_pattern_composition_summarized_v2b}, where the independent patterns $CE$ and $MT$ have been parallelized.
To achieve this, a read-only structural fork with channel cardinality $1$:$n$ in the form of a multicast $MC$ has been added.
The inbound and outbound contracts of $MC$ are adapted to fit into the composition.
After the concurrent execution of $CE$ and $MT$, a join router $JR$ brings the messages back together again and feeds the result into an aggregator $AGG$ that restores the format that $ADPT_r$ expects.
We see that the resulting IPCG is still correct, so this would be a sound optimization. \qedblack
\end{example}

\subsection{Abstract Cost Model}
\label{sub:abstract_costs}

In order to decide if an optimization is an improvement or not, we want to associate abstract costs to integration patterns.
We do this on the pattern level, similar to the work on data integration operators~\cite{bohmsystem}.
The cost of the overall integration pattern can then be computed as the sum of the cost of its constituent patterns.
Costs are considered parametrized by the cardinality of data inputs $|d_{\mathrm{in}_i}|$ ($1 \leq i \leq n$, if the pattern has in-degree $n$), data outputs $|d_{\mathrm{out}_j}|$ ($1 \leq j \leq m$, if the pattern has out-degree $m$), and external resource data sets $|d_r|$.
The costs can also refer to the pattern characteristics.

\begin{definition}[Cost model]
    \label{def:pc:cost_model}
    A \emph{cost assignment} for an IPCG
    $G = (P, E, \type, \pchar,$ $\iC, \oC)$ is an function $\cost(p) : \mathbb{N}^n \times \mathbb{N}^k \times \mathbb{N}^r \to \mathbb{Q}$ for each $p \in P$, where $p$ has in-degree $n$, out-degree $k$ and $r$ external connections.
    The cost $\cost(G) : \mathbb{N}^{N} \times \mathbb{N}^{K} \times \mathbb{N}^R \to \mathbb{Q}$ of the IPCG pattern graph $G$, where $N$ is the sum of the in-degrees of its patterns, $K$ the sum of their out-degrees, and $R$ the sum of their external connections,  is defined to be the sum of the costs of its constituent patterns:
    \begin{multline*}
    \cost(G)(d_{\mathrm{in}}, d_{\mathrm{out}}, d_{r}) = \\ \sum_{p \in P}\cost(p)(|d_{\mathrm{in}}(p)|, |d_{\mathrm{out}}(p)|, |d_{r}(p)|) \enspace ,
    \end{multline*}
    where we suggestively have written $|d_{\mathrm{in}}(p)|$ for the projection from the tuple $d_{\mathrm{in}}$ corresponding to $p$, similarly for $|d_{\mathrm{out}}(p)|$ and $|d_{r}(p)|$. \qedblack
\end{definition}

\begin{table*}[bt]
    \centering
    \small
    \caption{Abstract costs of relevant patterns}
    \label{tab:op:abstract_costs}
    \begin{tabular}{lll}
        \hline
        Pattern $p$ & Abstract Cost $\cost(p)$ & Factors \\
        \hline
        {Content-based Router~\cite{hohpe2004enterprise}} & \parbox[t]{2.0cm}{$\frac{\sum_{i=0}^{n-1}|d_{\mathrm{in},i}|}{2}$} & \parbox[t]{10.0cm}{$n$=\#channel conditions, half of them evaluated on average}\\ 
        {Message Filter~\cite{hohpe2004enterprise}} & \parbox[t]{2.0cm}{$|d_{\mathrm{in}}|$} & \parbox[t]{10.0cm}{input data condition $|d_{\mathrm{in}}|$}\\
        \parbox[t]{1.5cm}{Aggregator~\cite{hohpe2004enterprise}} & {$2\times|d_{\mathrm{in}}|+\frac{|d_{\mathrm{in}}|+|d_r|}{avg(len(seq))}$} & \parbox[t]{10.0cm}{correlation, and completion conditions $|d_{\mathrm{in}}|$, aggregation function $\frac{|d_{\mathrm{in}}|+|d_r|}{avg(len(seq))}$ and length of a sequence length(seq) $>=$ 2, and (transacted) resource $d_r$}\\
        Claim Check~\cite{hohpe2004enterprise} & \parbox[t]{2.0cm}{$2\times|d_r|$} & \parbox[t]{10.0cm}{resource insert and get $|d_r|$}\\
        \parbox[t]{1.5cm}{Splitter~\cite{hohpe2004enterprise}} & \parbox[t]{2.0cm}{$|d_{\mathrm{out}}|$} & \parbox[t]{10.0cm}{output data condition $|d_{\mathrm{out}}|$}\\
        {Multicast, Join Router~\cite{Ritter201736}} & \parbox[t]{2.0cm}{$\sum_{i=0}^{n} \cost(\text{procunit}_i)$} & \parbox[t]{10.0cm}{costs of processing units $\cost(\text{procunit}_i$), \eg threading in software, for $n$ channels}\\
        \hline
        Content Filter~\cite{hohpe2004enterprise} & \parbox[t]{2.0cm}{$|d_{\mathrm{out}}|$} & \parbox[t]{10.0cm}{output data creation $|d_{\mathrm{out}}|$}\\
        \parbox[t]{1.5cm}{Mapping~\cite{hohpe2004enterprise}} & \parbox[t]{2.0cm}{$|d_{\mathrm{in}}| + |d_{\mathrm{out}}|$} & \parbox[t]{10.0cm}{output data creation $|d_{\mathrm{out}}|$ from input data $|d_{\mathrm{in}}|$}\\
        {Content Enricher~\cite{hohpe2004enterprise}} & \parbox[t]{2.0cm}{$|d_{\mathrm{in}}|$+$|d_r|$+$|d_{\mathrm{out}}|$} & \parbox[t]{10.0cm}{request message creation on $|d_{\mathrm{in}}|$, resource query $|d_r|$, response data enrich $|d_{\mathrm{out}}|$} \\
        \hline
        \parbox[t]{2cm}{External Call~\cite{Ritter201736}} & \parbox[t]{2.0cm}{$|d_{\mathrm{out}}| + |d_{\mathrm{in}}|$} & \parbox[t]{10.0cm}{request $|d_{\mathrm{out}}|$ and reply data $|d_{\mathrm{in}}|$}\\
        \parbox[t]{1.5cm}{Receive~\cite{hohpe2004enterprise}} & \parbox[t]{2.0cm}{$|d_{\mathrm{in}}|$} & \parbox[t]{10.0cm}{input data $|d_{\mathrm{in}}|$}\\
        \parbox[t]{1.5cm}{Send~\cite{hohpe2004enterprise}} & \parbox[t]{2.0cm}{$|d_{\mathrm{out}}|$} & \parbox[t]{10.0cm}{output data $|d_{\mathrm{out}}|$}\\
    \end{tabular}
\end{table*}
We have defined the abstract costs of the patterns discussed in this
work in \cref{tab:op:abstract_costs} --- these will be used in the subsequent evaluation.
We now explain the reasoning behind them.
Routing patterns such as content based routers, message filters and aggregators mostly operate on the input message, and thus have an abstract cost related to its element cardinality $|d_{\mathrm{in}}|$.
For example, the abstract cost of the CBR is $\cost(CBR)= \frac{\sum_{i=0}^{n-1}|d_{\mathrm{in},i}|}{2}$, since it evaluates on average $\frac{n-1}{2}$ routing conditions on the input message.
More complex routing patterns such as aggregators evaluate correlation and completion conditions,  as well as an aggregation function on the input message, and  also on sequences of messages of a certain length from an external resource.
Hence the cost of an aggregator is $\cost(AGG)= 2\times|d_{\mathrm{in}}|+\frac{|d_{\mathrm{in}}|+|d_r|}{avg(len(seq))}$, where $len(seq)$ denotes the length of a Message Sequence~\cite{hohpe2004enterprise} as for example used by an \Aggregator{} pattern.
In contrast, message transformation patterns like content filters and enrichers mainly construct an output message, hence their costs are determined by the output cardinality $|d_{\mathrm{out}}|$.
For example, content enrichers create a request message from the input message with cost $|d_{\mathrm{in}}|$, conducts an optional resource query $|d_r|$, and creates and enriches the response with cost $|d_{\mathrm{out}}|$.
Finally, the cost of message creation patterns such as  external calls, receivers, and senders arise from costs for transport, protocol handling, and format conversion, as well as decompression. Hence the cost depends on the element cardinalities of input and output messages $|d_{\mathrm{in}}|$, $|d_{\mathrm{out}}|$.

\begin{example}
    We return to the claimed improved composition in \cref{ex:IPCG}.
    The latency of the composition $G_1$ in \cref{fig:cpn_pattern_composition_summarized_v2a}, calculated from the constituent pattern latencies, is $\cost(G_1) = t_{CE}+t_{MT}$ with latency $t_{p}$ and pattern $p$.
    The latency improvement potential given by switching to the composition $G_2$ in \cref{fig:cpn_pattern_composition_summarized_v2b} is given by $\cost(G_2) = \max(t_{CE},t_{MT}) + t_{MC} + t_{JR} + t_{AGG}$.
    Obviously it is only beneficial to switch if $\cost(G_2) < \cost(G_1)$, and this condition depends on the concrete values involved.
    At the same time, the model complexity increases by three nodes and edges. \qedblack
\end{example}

\section{A Semantics Using Timed DB-nets}
\label{sec:semantics}

Integration pattern graphs model the structural composition of
integration patterns, but not their dynamics, i.e.\ how data actually
flow through the system. To model this, we use timed
db-nets~\cite{RITTER2019101439}, an extension of
db-nets~\cite{DBLP:journals/topnoc/MontaliR17} with an explicit notion
of time (addressing \refreq{time}). The formalism of db-nets in turn
is a refinement of colored Petri nets~\cite{jensen2013coloured} with
primitives for the net to query and update persistent data stores
(addressing \refreq{database}). Exceptions are built into the
framework in the form of rollbacks (addressing
\refreq{exceptions}).

We choose to work with timed db-nets rather than just colored Petri nets because they meet the mentioned requirements of integration processes that we have identified, and balances the dimensions of persistence, data logic and control layer of a Petri net.
Avoiding the heavier encoding of colored Petri nets, timed db-nets make the modeling more concise and tractable for the interpretation procedure defined in \cref{sec:interpretation}.

To make the definition compositional, we
have to extend the notion of timed db-nets to timed db-nets with
boundaries that can be reasoned about separately, and then plugged
together to form larger timed db-nets.

\subsection{Open Timed DB-nets}

In this section, we formally define the mathematical structure we use
to give a runtime semantics to pattern graphs. We first recall the
definition of timed db-nets, and then extend
them to open timed db-nets, in order to make the definition
compositional.

\subsubsection{Ordinary Timed DB-nets}

A timed db-net has three layers: a persistence layer describing the
underlying database of the net, a logic layer describing the queries
that can be made of the persistence layer, and a control layer
describing how tokens of data flow through the net, executing queries.
See Ritter et al.~\cite{RITTER2019101439} for motivation and a more gentle definition.

\begin{definition}[timed db-net~\cite{RITTER2019101439}]
  \label{def:eip-net}
  A \emph{timed db-net} is a tuple $(\mathfrak{D},\mathcal{P},\allowbreak\mathcal{L},\mathcal{N}, \tau)$, where:
  \begin{compactitem}[$\bullet$]
    \item $\mathfrak{D}$ is a type domain --- a finite set of data types, each of the form $D = (\Delta_D, \Gamma_D)$, where $\Delta_D$ is a value domain, and $\Gamma_D$ is a finite set of domain-specific predicate symbols.
    \item $\mathcal{P}$ is a $\mathfrak{D}$-typed \textbf{persistence layer}, i.e., a pair $(\mathcal{R}, E)$, where $\mathcal{R}$ is a $\mathfrak{D}$-typed database schema, and $E$ is a finite set of first-order FO($\mathfrak{D}$) constraints over $\mathcal{R}$.
    \item $\mathcal{L}$ is a $\mathfrak{D}$-typed \textbf{data logic layer} over $\mathcal{P}$, i.e., a pair $(Q, A)$, where $Q$ is a finite set of FO($\mathfrak{D}$) queries over $\mathcal{P}$, and $A$ is a finite set of actions over $\mathcal{P}$.
    \item $\mathcal{N}$ is a $\mathfrak{D}$-typed \textbf{control layer} over $\mathcal{L}$, i.e., a tuple $(P, T,F_{in},$ $F_{out}, F_{rb}, \allowbreak \texttt{color}, \allowbreak\texttt{query}, \texttt{guard}, \texttt{action})$, where:
      \begin{enumerate} 
      \item $P=P_c \uplus P_v$ is a finite set of places, partitioned into so-called control places $P_c$ and view places $P_v$,
      \item $T$ is a finite set of transitions,
      \item $F_{in}$ is an input flow from $P$ to $T$,
      \item $F_{out}$ and $F_{rb}$ are respectively output and roll-back flows from $T$ to $P_c$,
      \item \texttt{color} is a color assignment over $P$ (mapping $P$ to a Cartesian product of data types),
      \item \texttt{query} is a query assignment from $P_v$ to $Q$ (mapping the results of $Q$ as tokens of $P_v$),
      \item \texttt{guard} is a transition guard assignment over $T$ (mapping each transition to a formula over its input inscriptions), and
      \item \texttt{action} is an action assignment from $T$ to $A$ (mapping some transitions to actions triggering updates over the persistence layer).
      \end{enumerate}
    \item $\tau: T \to \mathbb{Q}^{\geq 0} \times (\mathbb{Q}^{\geq 0} \cup \{\infty\})$ is a timed transition guard, mapping each transition $t \in T$ to a pair of values $\tau(t) = (v_1, v_2)$, where $v_1$ is a non-negative rational number, and $v_2$ is a non-negative rational number with $v_1 \leq v_2$, or the special constant $\infty$. \qedwhite
    \end{compactitem}
\end{definition}

We adopt the following graphical conventions for drawing the control
layer of a timed db-net: places are depicted as round nodes --- view
places are labelled by a database icon
\tikz{\node[cylinder,aspect=0.5,draw,rotate=90] {};} with queries written in \textcolor{Green}{green} --- and transitions as rectangles. Rollback arcs are depicted with an ``x'':
\tikz{\node[rectangle,draw] (a) at (0,0) {}; \node[circle,draw] (b) at
  (1,0) {}; \draw[Rays-latex,red] (a) -- (b); }. Actions are written in blue, and guards are written in
square brackets next to the transition, and we adopt the following
conventions for a timed transition guard $\tau$ and a transition $t$:
\begin{inparaenum}[\it (i)]
	\item if $\tau(t) = (0, \infty)$, then no temporal label is shown for $t$ (this is often the default choice for $\tau(t)$);
	\item if $\tau(t)$ is of the form $(v, v)$, we attach label ``@$v$'' to $t$;
	\item if $\tau(t)$ is of the form $(v_1, v_2)$ with $v_1 \neq v_2$, we attach label ``$@\langle v_1,v_2 \rangle$'' to $t$.
\end{inparaenum}

\begin{example}
  \label{ex:tdbnet-aggregator}
  \begin{figure*}
    \centering
    \includegraphics[width=\textwidth]{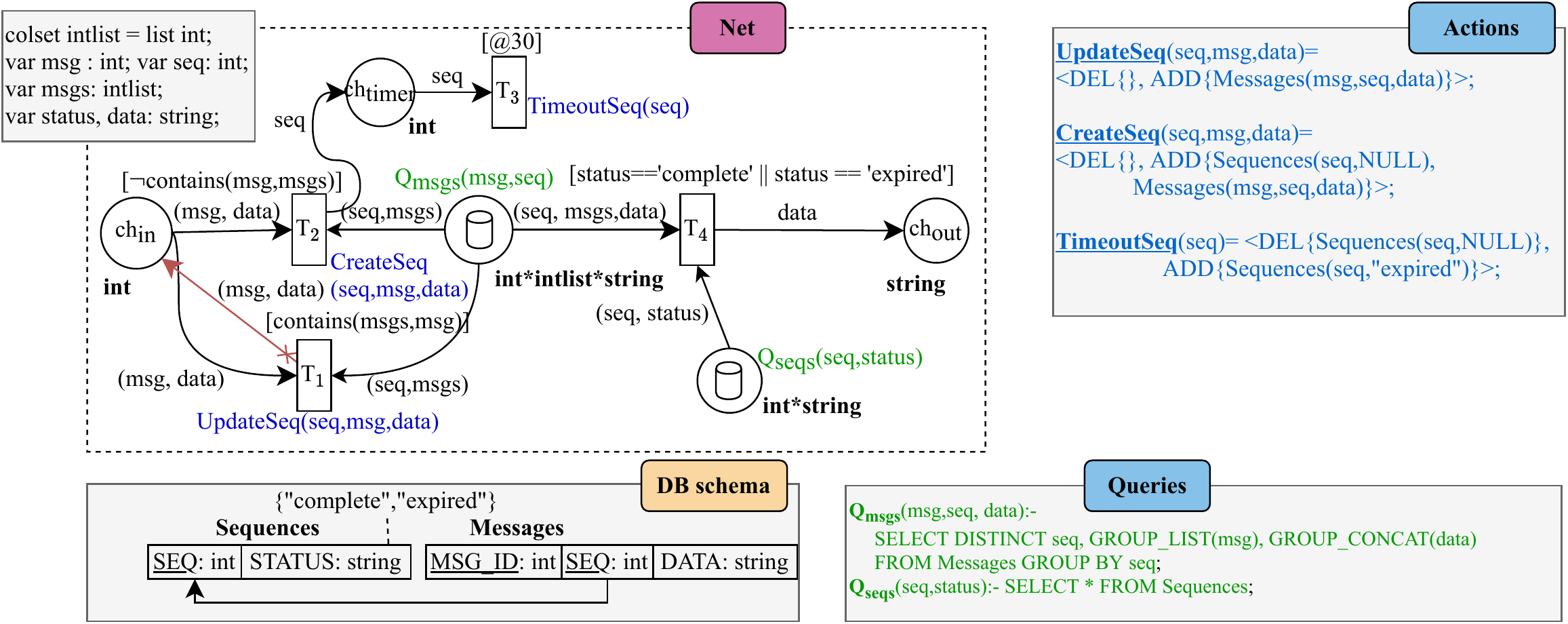}
    \caption{Timed db-net realization of an aggregator}
    \label{fig:example-tdbnet-aggregator}
  \end{figure*}
  \cref{fig:example-tdbnet-aggregator} shows a timed db-net
  realisation of an aggregator. The intention is that messages arrive
  at the place $ch_{in}$. The database is then queried using the
  $Q_{msgs}$ query via a view place, and if it already contains the
  message, it is updated via the UpdateSeq action at transition
  $T_1$. If it does not contain the message, the CreateSeq action is
  triggered at $T_2$ instead, and the sequence number gets passed to
  the $ch_{timer}$ place, whose output transition $T_3$ will be
  enabled after 30 seconds, triggering the TimeoutSeq action. This
  will enable transition $T_4$, with the effect that a token
  containing the data from the completed sequence, queried via
  $Q_{seqs}$ from the database, will move into $ch_{out}$.
\end{example}

\subsubsection{Open Timed DB-nets}
We now describe timed db-nets that are open, in the sense that they have \enquote{ports} for communicating with the outside world: the idea being that tokens can be received and sent on these ports, similar to in the existing literature on open Petri nets~\cite{openPetri2005,sobocinski2010representations,openPetri2020master}.

\begin{definition}[Open timed db-net]
  An \emph{open timed db-net} is a pair $A = (N_A, B_A)$, where $N_A = (\mathfrak{D},\mathcal{P},\mathcal{L},\mathcal{N}, \tau)$ is a timed db-net with control layer
\[
  \mathcal{N} = (P_c \uplus P_v, T, F_{in}, F_{out}, F_{rb}, \texttt{color}, \texttt{query}, \texttt{guard}, \texttt{action})
\]
and $B_A = (I_A, O_A) \in \List{P_c} \times \List{P_c}$ are lists of control places, called the input and output boundaries respectively, such that $F_{in}(o, t) = \emptyset$ for every $o \in O_A$, and $F_{out}(t, i) = F_{rb}(t, i) = \emptyset$ for every $i \in I_A$. The \emph{input (output) boundary configuration of $A$} is given by the corresponding list of colours of the input (output) boundary places of $A$, and we write
\[
  N_A : \texttt{color}(I_A) \to \texttt{color}(O_A)
\]
(where $\texttt{color}(X) = [\texttt{color}(x)\;|\;{x \in X}]$)
to indicate that $A = (N_A, (I_A, O_A))$ is an open timed db-net with the given boundary configurations.
\end{definition}

Note in particular that an open timed db-net with empty boundaries is by definition an ordinary timed db-net.

\begin{example}
  \begin{figure}
    \centering
    \includegraphics[width=\columnwidth]{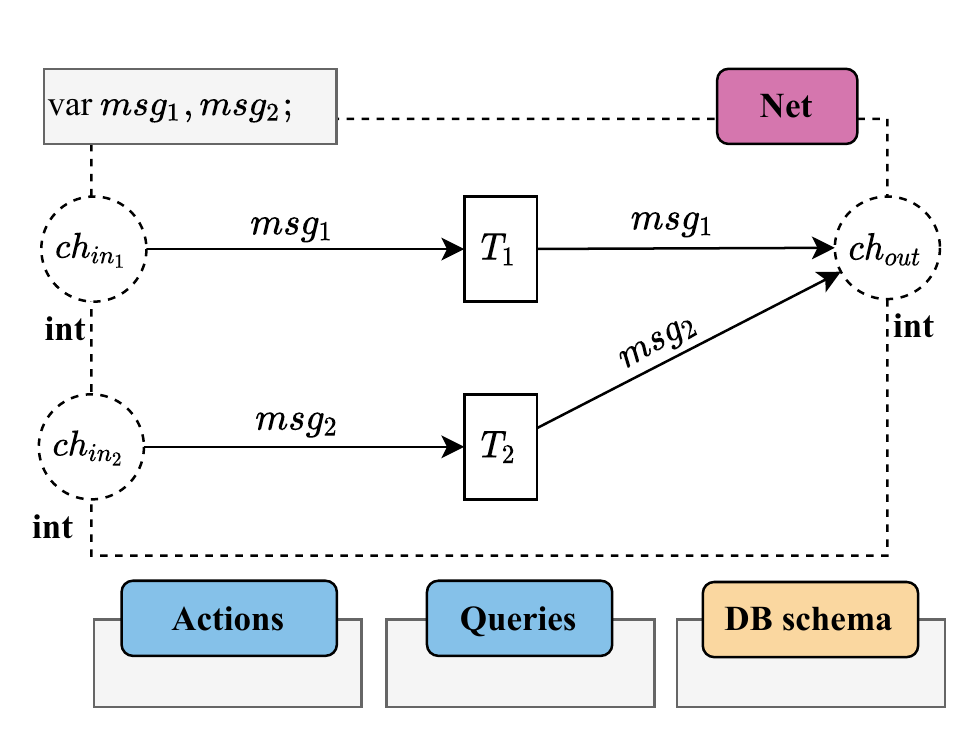}
    \caption{Timed db-net realization of a join router}
    \label{fig:example-tdbnet-join}
  \end{figure}
  \cref{fig:example-tdbnet-join} shows an open timed db-net
  realisation of a join router (joining messages containing integer
  data). The input boundary are the places $ch_{in_1}$ and $ch_{in_2}$
  and the output boundary the place $ch_{out}$. We draw the input
  boundary using dashed places on the left of the image, and the
  output similarly on the right (in general, a place can be part of
  both the input and the output boundary, but this will not occur in
  any nets constructed from pattern graphs).

  Similarly, the timed db-net realisation of an aggregator in
  \cref{fig:example-tdbnet-aggregator} from
  \cref{ex:tdbnet-aggregator} can be made into an open timed db-net by
  declaring the boundaries to be $ch_{in}$ and $ch_{out}$
  respectively.
\end{example}

\subsection{Execution Semantics for Open Timed DB-nets}
\label{sec:execution-semantics}
We define the execution semantics of a given open timed db-net as a
labelled transition system, where the states are snapshots of the
db-net and the labelled transitions are given by firings, as well as
transitions that can create and consume tokens at the input and output
boundary respectively.
A snapshot of an open timed db-net $\mathcal{B}$ is a
snapshot $(\mathcal{I}, m)$ of $\mathcal{B}$ considered as an ordinary
timed db-net (i.e.\ we forget about the boundaries), which in turn
consists of a compliant database instance $\mathcal{I}$ and a marking
$m$; see \cite{RITTER2019101439} for the precise definitions.

Given an open timed db-net $\mathcal{B}$ with boundaries $(I_\mathcal{B}, O_\mathcal{B})$, and a
$\mathcal{B}$-snapshot $s_0$ (the \emph{initial
  $\mathcal{B}$-snapshot}), we construct a labelled transition system
$\Gamma^{\mathcal{B}}_{s_0} = (S, s_0, \rightarrow)$ as follows:
$S$ is the infinite set of $\mathcal{B}$-snapshots, and
$\rightarrow \subseteq S \times (I_\mathcal{B} \cup T \cup O_B) \times S$ is defined by the following clauses (we write $s \overset{\ell}{\rightarrow} s'$ for $(s, \ell, s')  \in \rightarrow$):
\begin{itemize}
\item for a transition $t$ and $\mathcal{B}$-snapshots $s_1$, $s_2$, if there is a binding $\sigma$ such that $t$ fires in $s_1$ with binding $\sigma$ producing $s_2$ (see \cite{RITTER2019101439}), then $s_1 \overset{t}{\rightarrow} s_2$;
\item for an input boundary place $p_i$ and $\mathcal{B}$-snapshot $s = (\mathcal{I}, m)$, if $s' = (\mathcal{I}, m')$ where $m'(p_i) = m(p_i) + \{o\}$ for some $o \in \Delta_{\mathfrak{D}}$, and $m'(x) = m(x)$ for $x \neq p_i$, then $s \overset{p_i}{\rightarrow} s'$; and
\item for an output boundary place $p_o$ and $\mathcal{B}$-snapshot $s = (\mathcal{I}, m)$, if $s' = (\mathcal{I}, m')$ where $m'(p_i) = m(p_i) - \{o\}$ for some $o \in \Delta_{\mathfrak{D}}$, and $m'(x) = m(x)$ for $x \neq p_o$, then $s \overset{p_o}{\rightarrow} s'$.
\end{itemize}

That is, in addition to transitions in the LTS arising from
transitions in the db-net firing, we also have transitions labelled by
each boundary place that make one token appear or disappear at the
boundary, depending on if the place is an input or output. Note that for a closed timed db-net, the boundaries are
empty, and the LTS correspond exactly to the LTS of timed db-nets from \cite{RITTER2019101439}.

\subsection{Composition of Open Timed DB-nets}
\label{sub:composition_open_timed_dbnets}
It is straightforward to compose timed db-nets in parallel, i.e.\ in such a way that there is no interaction between the component nets. Given open timed db-nets
\begin{align*}
  \mathcal{A} &: [c_1, \ldots, c_n] \to [d_1, \ldots, d_m] \\
  \mathcal{B} &: [c'_1, \ldots, c'_{n'}] \to [d'_1, \ldots, d'_{m'}]
\end{align*}
with the same type domains, persistence layers and data logic layers\footnote{To compose timed db-nets with different underlying layers, we first rename any unintended clashing names, and then take the union of the layers and embed the nets into their now common layers.}, we define an open timed db-net
\[
  \mathcal{A} \otimes \mathcal{B}  : [c_1, \ldots, c_n, c'_1, \ldots, c'_{n'}] \to [d_1, \ldots, d_m, d'_1, \ldots, d'_{m'}]
\]
again with the same type domain, persistence layer data logic layer, but
whose places and transitions are the disjoint
union of the places and transitions in $\mathcal{A}$ and $\mathcal{B}$ respectively.
This gives a tensor product or parallel composition of nets, with unit $I : [] \to []$ the empty timed db-net (necessarily with empty boundary). Visually, we are stacking the control layers of $\mathcal{A}$ and $\mathcal{B}$ next to each other.

When the boundaries are compatible, i.e., when the input boundary
configuration is the same as the output boundary configuration, we can
also define a sequential composition of nets. This will be achieved by
\enquote{gluing} the two nets together along their common boundary, formally
expressed by quotienting the set of places in the construction of the composite net.

\begin{definition}[Sequential composition of open nets]
  Let $\mathcal{A} : \texttt{color}(I_A) \to \texttt{color}(O_A)$ and $\mathcal{B} : \texttt{color}(I_B) \to \texttt{color}(O_B)$ be open timed db-nets with the same type domains, persistence layers and data logic layers, and such that $\texttt{color}(O_A) = \texttt{color}(I_B)$.
Write $O_A = [o_1, \ldots, o_n]$ and $I_B = [i_1, \ldots, i_n]$ --- note that $O_A$ and $I_B$ must have the same length, since $\texttt{color}(O_A) = \texttt{color}(I_A)$.
  We define the composition
  \[
    \mathcal{A} \fatsemi \mathcal{B}  : \texttt{color}(I_A) \to \texttt{color}(O_B)
  \]
  to again have the same type domain, persistence layer and data logic layer as $\mathcal{A}$ and $\mathcal{B}$, and control layer
  \[
    \mathcal{N}_{\mathcal{A} \fatsemi \mathcal{B}} = (P, T, F_{in}, F_{out}, F_{rb}, \mathtt{color}, \mathtt{query}, \texttt{guard}, \mathtt{action})
  \]
  with
  \begin{align*}
    P &= (P_A \uplus P_B)/\sim \\
    \intertext{where $\sim$ is the equivalence relation generated by $\inn_{P_A}(o_k) \sim \inn_{PB}(i_k)$ for $0 < k \leq n$,}
    T &= T_A \uplus T_B \\
    F_{in}(x, y) &=
       \begin{cases}
         F^A_{in}(p, t) & \text{if $(x, y) = ([\inn_{P_A}(p)], \inn_{T_A}(t))$} \\
         F^B_{in}(p', t') & \text{if $(x, y) = ([\inn_{P_B}(p')], \inn_{T_B}(t'))$} \\
         \emptyset & \text{otherwise}
       \end{cases} \\
    F_{in}(x, y) &=
       \begin{cases}
         F^A_{out}(p, t) & \text{if $(x, y) = (\inn_{T_A}(t), [\inn_{P_A}(p)])$} \\
         F^B_{out}(p', t') & \text{if $(x, y) = (\inn_{T_B}(t'), [\inn_{P_B}(p')])$} \\
         \emptyset & \text{otherwise}
       \end{cases}  \\
    F_{rb}(x, y) &=
       \begin{cases}
         F^A_{rb}(t, p) & \text{if $(x, y) = (\inn_{T_A}(t), [\inn_{P_A}(p)])$} \\
         F^B_{rb}(t', p') & \text{if $(x, y) = (\inn_{T_B}(t'), [\inn_{P_B}(p')])$} \\
         \emptyset & \text{otherwise}
       \end{cases}
  \end{align*}
  \begin{align*}
    \mathtt{color} &= [\mathtt{color}_A, \mathtt{color}_B] \\
    \mathtt{query} &= [\mathtt{query}_A, \mathtt{query}_B] \\ 
    \mathtt{guard} &= [\mathtt{guard}_A, \mathtt{guard}_B] \\
    \mathtt{action}'' &= [\mathtt{action}, \mathtt{action}'] \\
    \tau'' &= [\tau, \tau']
  \end{align*}
\end{definition}

Note that the new colour assignment is well-defined on the quotient $(P_A \uplus P_B)/\sim$ since the two constituent nets have compatible boundaries, by assumption.

\begin{example}
  \begin{figure*}
    \centering
    \includegraphics[width=\textwidth]{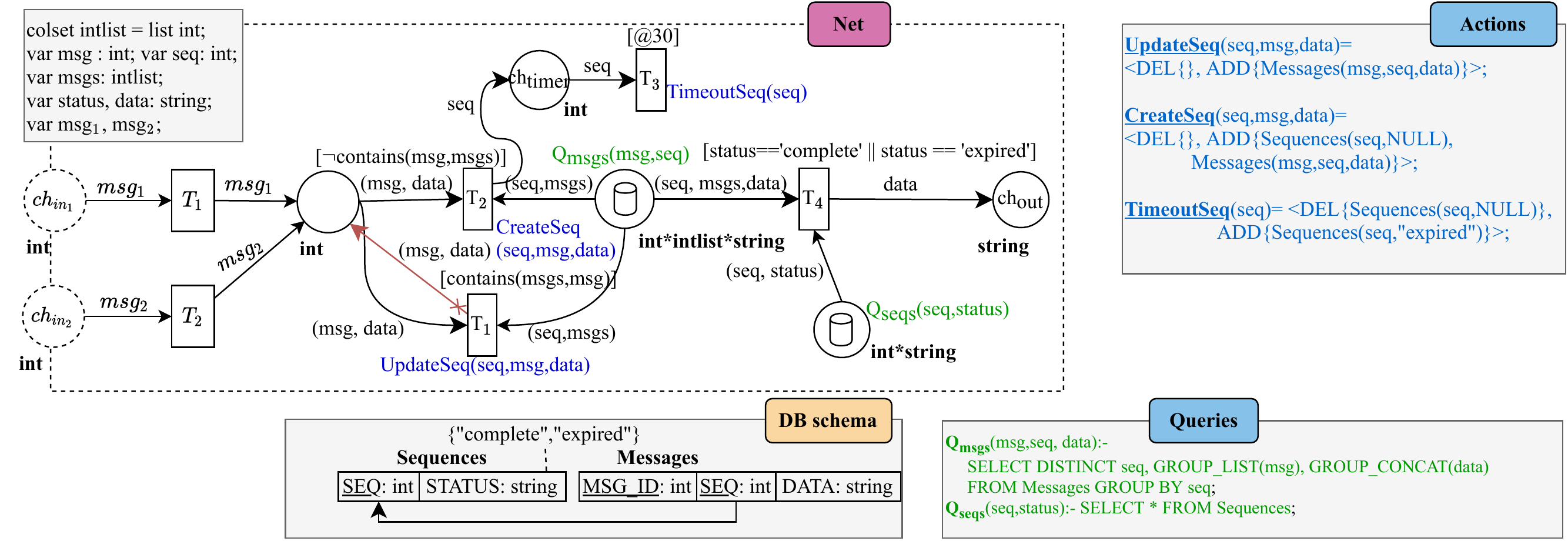}
    \caption{Sequential composition of timed db-net realizations of a join router and an aggregator}
    \label{fig:example-tdbnet-composed}
  \end{figure*}
In \cref{fig:example-tdbnet-composed}, we see the sequential composition \newline $\text{JoinRouter}~\fatsemi~\text{Aggregator}$ of the timed db-net realisation of a join router followed by an aggregator. The two nets have been glued together at their common boundary.
\end{example}

Composition of nets behaves as expected: it is associative, and there is an identity net which is a unit for composition.
Furthermore, the parallel and sequential compositions interact well, in the sense that we get the same result no matter if we first compose in parallel and then sequentially, or the other way around.
All in all, this means that nets with boundaries are the morphisms of a strict monoidal category~\cite{selinger2009graphical}:

\begin{lemma}
  \label{thm:nets-moncat}
  For any open timed db-nets $N$, $M$, $K$ with compatible boundaries, we have $N \fatsemi (M \fatsemi K) = (N \fatsemi M) \fatsemi K$, and for each boundary configuration $\vec{c} = [c_1, \ldots, c_n]$, there is an identity net $\mathsf{id}_{\vec{c}} : [c_1, \ldots, c_n] \to [c_1, \ldots, c_n]$ such that $\mathsf{id}_{\vec{c}} \fatsemi N = N$ and $\mathsf{id}_{\vec{c}} \fatsemi M = M$ for every $M$, $N$ with compatible boundaries.
  Furthermore, for every $N$, $M$, $K$, we have $N \otimes (M \otimes K) = (N \otimes M) \otimes K$, and for compatible nets $N \fatsemi (M \otimes K) = (N \fatsemi M) \otimes (N \fatsemi K)$.
\end{lemma}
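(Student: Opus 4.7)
The plan is to verify each of the four equalities claimed in the lemma by carefully unfolding the definitions of the two compositions. Most of the work is bookkeeping: checking that the place quotients, the transition disjoint unions, and the various assignment functions ($F_{in}$, $F_{out}$, $F_{rb}$, $\mathtt{color}$, $\mathtt{query}$, $\mathtt{guard}$, $\mathtt{action}$, $\tau$) coincide on both sides. Since all four underlying layers (type domain, persistence, data logic) are assumed to be shared, they agree trivially, and we only need to compare control layers. I would work place by place and transition by transition, using the universal property of disjoint unions and of quotients.

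For associativity of sequential composition, the heart of the matter is showing that the two double quotients
\[
  ((P_N \uplus P_M)/\sim_{NM}) \uplus P_K)/\sim_{(NM)K} \quad\text{and}\quad (P_N \uplus (P_M \uplus P_K)/\sim_{MK})/\sim_{N(MK)}
\]
yield the same set of equivalence classes as $(P_N \uplus P_M \uplus P_K)/\approx$, where $\approx$ is the equivalence relation generated by the two boundary identifications $O_N \sim I_M$ and $O_M \sim I_K$. Because these two boundary identifications involve disjoint sets of places (outputs of $N$ with inputs of $M$, outputs of $M$ with inputs of $K$), the equivalence relations commute and generate the same classes regardless of the order. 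Once this is established, the flow relations and the assignment functions, being defined case-wise on the representatives, coincide on both sides. The temporal guard $\tau$ is just concatenated and therefore trivially associative.

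For the identity net $\mathsf{id}_{\vec{c}}$ at boundary $\vec{c} = [c_1, \ldots, c_n]$, I would take $n$ places coloured $c_1, \ldots, c_n$ declared as both input and output boundaries, with no transitions, and empty persistence and logic layers. In $\mathsf{id}_{\vec{c}} \fatsemi N$, the quotient collapses each such place with the corresponding input boundary place of $N$, and since there are no transitions in $\mathsf{id}_{\vec{c}}$ to contribute flows, the composite has exactly the structure of $N$; similarly on the right. Associativity of parallel composition $\otimes$ is immediate since disjoint union is associative (strictly, once we pick a canonical encoding of $\uplus$, or else up to the canonical isomorphism that the paper is implicitly absorbing into equality).

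The hardest step will be the interchange law between $\fatsemi$ and $\otimes$ (read in the standard form $(N_1 \otimes N_2) \fatsemi (M_1 \otimes M_2) = (N_1 \fatsemi M_1) \otimes (N_2 \fatsemi M_2)$ when the boundaries align). The subtlety is that on the left we first tensor (disjoint union), then quotient the combined boundary; on the right we quotient each pair separately, then tensor. One must verify that the equivalence relation generated by identifying $O_{N_1} \uplus O_{N_2}$ with $I_{M_1} \uplus I_{M_2}$ along the concatenated boundary is exactly the disjoint union of the two individual relations, which follows from the fact that the two identifications touch disjoint places. Given this, the remaining components of the control layer decompose as direct sums and the equality drops out. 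If the statement as literally written in the lemma is meant instead, I would first argue that it is a typographical rendering of the standard interchange law (since the literal version is ill-typed unless $M$ and $K$ share a boundary with $N$ individually) and then proceed as above.
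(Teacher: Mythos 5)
Your proposal is correct and follows essentially the same route as the paper, which disposes of the whole lemma in two sentences: \enquote{Associativity for both $\fatsemi$ and $\otimes$ is straightforward}, plus exactly the identity net you construct ($n$ places coloured $c_1,\dots,c_n$, each both an input and an output boundary). What you add is the content the paper leaves implicit: the observation that associativity of $\fatsemi$ reduces to the two generating boundary identifications acting on disjoint sets of places, so the iterated quotients agree with the single quotient of $P_N \uplus P_M \uplus P_K$, after which the flow relations and assignments coincide componentwise. You also correctly flag a point the paper's proof silently skips: the final equation $N \fatsemi (M \otimes K) = (N \fatsemi M) \otimes (N \fatsemi K)$ is ill-typed as written (the left side needs $N$'s output boundary to match the concatenated input boundaries of $M$ and $K$, while the right side duplicates $N$ and hence its input boundary), and the intended claim is the standard interchange law $(N_1 \otimes N_2) \fatsemi (M_1 \otimes M_2) = (N_1 \fatsemi M_1) \otimes (N_2 \fatsemi M_2)$, which your disjoint-identifications argument then handles. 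That reading is the right one, and your proof of it is sound.
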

\begin{proof}
  Associativity for both $\fatsemi$ and $\otimes$ is straightforward.
  The identity net for $[c_1, \ldots, c_n]$ is the net with exactly $n$ places $x_1, \ldots, x_n$ with $\texttt{color}(x_i) = c_i$, that are all both input and output boundaries. \qed
\end{proof}

In particular, the lemma implies that we can compose nets
sequentially and in parallel without worrying about how to bracket the compositions~\cite{selinger2009graphical}.

\subsection{CPN Tools Prototype}
\label{sub:testing}
We prototypically implemented our formalism so as to experiment with pattern compositions via simulation, following the idea described in \cite[Sect. 5]{RITTER2019101439}.
We have chosen CPN Tools v4.0.1\footnote{CPN Tools, visited 5/23: \url{https://cpntools.org/}} for modeling and simulation.
As compared to other PN tools like Renew v2.5\footnote{Renew, visited 5/2023: \url{http://www.renew.de/}}, CPN tools supports third-party extensions that can address the persistence and data logic layers of our formalism.
Moreover, CPN Tools handles sophisticated simulation tasks over models that use the deployed extensions.
To support db-nets, our extension\footnote{CPN Tools extension for timed db-net with boundaries and pattern models (\ie mainly $*boundary*.cpn$, $*fusion*.cpn$) is available for download, visited 5/2023: \url{https://github.com/dritter-hd/db-net-eip-patterns}.}---denoting an unpublished part of the first author's PhD thesis~\cite{ritter2019phd}---adds support for defining view places together with corresponding SQL queries as well as actions, and realizes the full execution semantics of db-nets using Java and a PostgreSQL database.

\section{Interpreting IPCGs as Open Timed DB-nets}
\label{sec:interpretation}
In this section we define the interpretation of integration pattern contract graphs as timed db-nets with boundaries.

\subsection{Interpretation of Single Patterns}
\label{sec:interpretation-atomic}

We assign an open timed db-net $\sem{p}$ for every node $p$ in a integration pattern contract graph.
Recall that an integration pattern contract graph has input and output contracts $\iC : \prod_{p \in P}({\CPT} \times 2^{\EL})^{|\bullet p|}$ and $\oC : \prod_{p \in P}({\CPT} \times 2^{\EL})^{|p \bullet|}$ respectively.
If the cardinality of $p$ is $k:m$, then the open timed db-net will be of the form
\[
  \sem{p} : \bigotimes_{i = 1}^k \iC_i(p)_{\EL}  \to \bigotimes_{j = 1}^m \oC_j(p)_{\EL}
\]
This incorporates the data elements of the input and output contracts into the boundary of the timed db-net, since these are essential for the dataflow of the net.
In \cref{sec:CPT-net-construction}, we will also incorporate the remaining concepts from the contracts such as signatures, encryption and encodings into the interpretation.

The shape of the timed db-net $\sem{p}$ depends on $\type(p)$ only, i.e., we give one interpretation for each pattern type:

\paragraph{Start and end pattern types}
We interpret a start pattern $p_{\text{start}}$ as the open timed db-net $\sem{p_{\text{start}}} : I \to \cout{p_{\text{start}}}{}$ shown in~\cref{fig:start}.
\begin{figure}[bt]
    \begin{center}$
        \begin{array}{cc}
        \subfigure[Start]{\label{fig:start}\includegraphics[width=0.5\columnwidth]{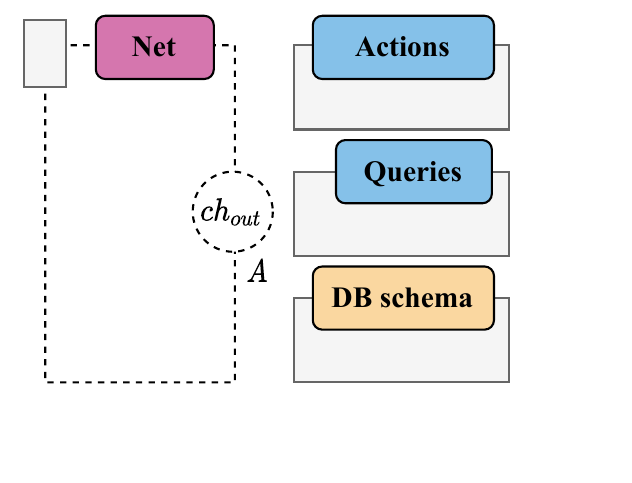}} &
        \subfigure[End]{\label{fig:end}\includegraphics[width=0.5\columnwidth]{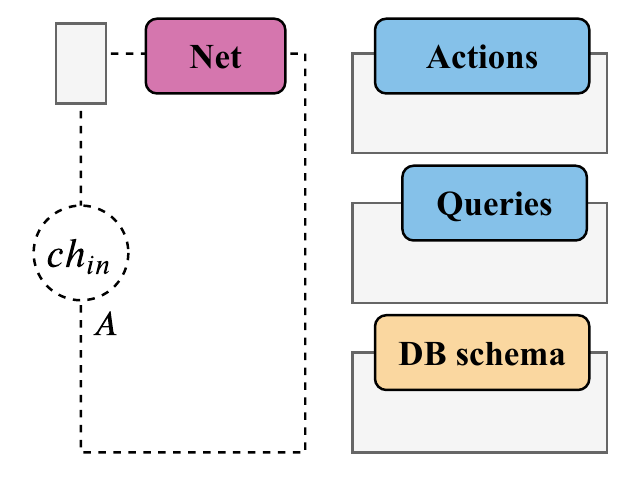}} 
        \end{array}$
    \end{center}
    \caption{Start and end patterns}
    \label{fig:}
\end{figure}
Similarily,~\cref{fig:end} shows the interpretation of an end pattern $p_{\text{end}}$ as an open timed db-net $\sem{p_{\text{end}}} : \cin{p_{\text{end}}}{} \to I$.

\paragraph{Non-conditional fork patterns}
We interpret a non-conditional fork pattern $p_{\text{fork}}$ with cardinality $1:n$ as the open timed db-net $\sem{p_{\text{fork}}} : \cin{p_{\text{fork}}} \to \bigotimes_{j = 1}^n \cout{p_{\text{fork}}}{j}$ shown in~\cref{fig:multicast}. 

\begin{figure}[bt]
	\centering
	\includegraphics[width=0.9\columnwidth]{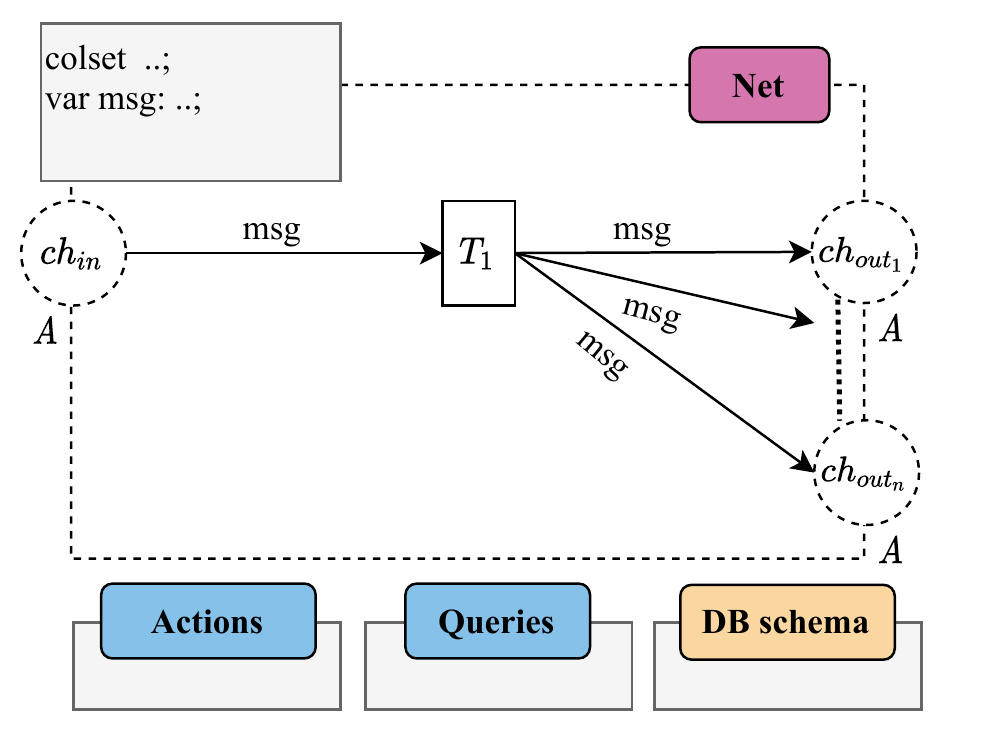}
	\caption{Interpretation of an unconditional fork pattern.}
	\label{fig:multicast}
\end{figure}

\paragraph{Non-conditional join patterns}
We interpret a non-conditional join pattern $p_{\text{join}}$ with cardinality $m:1$ as the open timed db-net $\sem{p_{\text{join}}} : \bigotimes_{j = 1}^m \cin{p_{\text{join}}}{j} \to \cout{p_{\text{join}}}{}$ shown in~\cref{fig:join}.

\begin{figure}[bt]
	\centering
	\includegraphics[width=0.9\columnwidth]{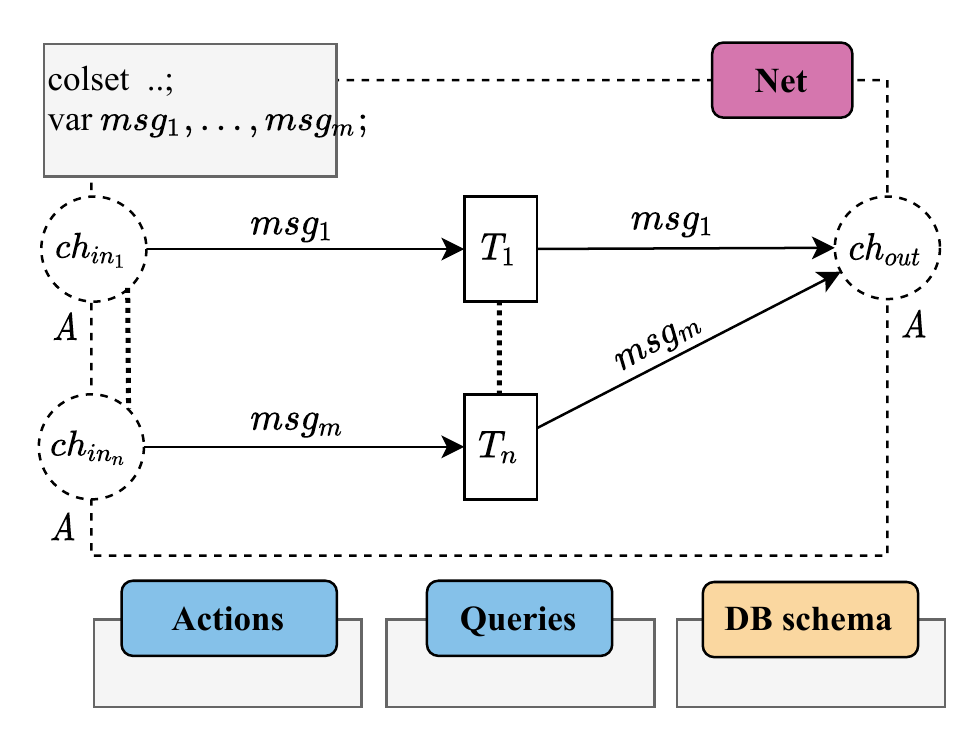}
	\caption{Interpretation of an unconditional join pattern}
	\label{fig:join}
\end{figure}

\paragraph{Conditional fork patterns}
We interpret a conditional fork pattern $p_{\text{cfork}}$ of cardinality $1:n$ with conditions $cond_1$, \ldots, $cond_{n-1}$ in its pattern characteristic assignment as the open timed db-net $\sem{p_{\text{cfork}}}$ $:$ $\cin{p_{\text{cfork}}}{}$ $\to$ $\bigotimes_{j = 1}^n \cout{p_{\text{cfork}}}{j}$ shown in~\cref{fig:content_based_router}.
Note that net is constructed so that the conditions are evaluated in order --- the transition corresponding to condition $k$ will only fire if condition $k$ is true, and conditions 1, \ldots, $k-1$ are false.
The last transition will fire if all conditions evaluate to false.

\begin{figure}[bt]
	\centering
	\includegraphics[width=1\columnwidth]{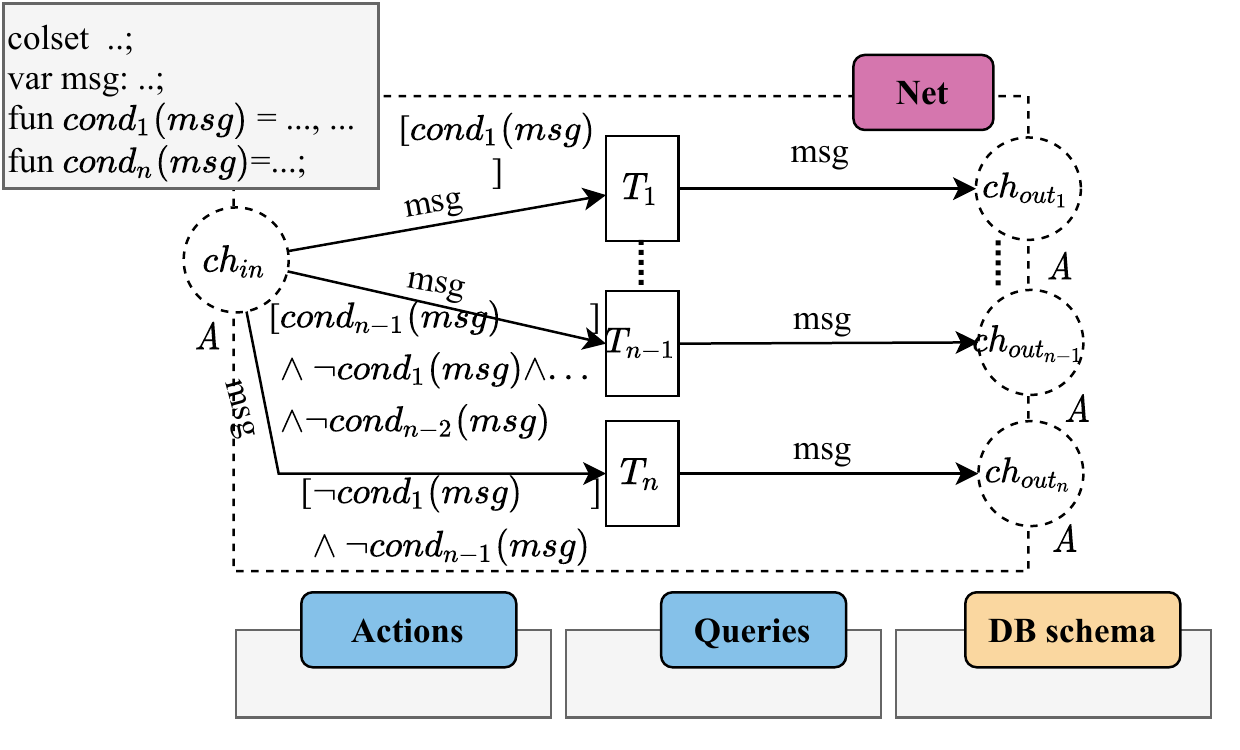}
        \caption{Interpretation of a conditional fork pattern}
	\label{fig:content_based_router}
\end{figure}

\paragraph{Message processor patterns}
We interpret a message processor pattern $p_{\text{mp}}$  with storage schema $S$, actions $A$, query $Q$, condition $cond$, time $\tau$ and program $f$ in its pattern characteristic assignment as the open timed db-net $\sem{p_{\text{mp}}} : \cin{p_{\text{mp}}}{} \to \cout{p_{\text{mp}}}{}$ shown in~\cref{fig:siso_w_storage}.
If the condition $cond$ and timing window $\tau$ are satisfied, the incoming message possibly gets enriched by data from the query $Q$ and action $A$ might be triggered, before the program $f$ transforms the data into possibly multiple messages, collected in a list. These get emitted one by one.

Of course, not all features need to be used by all message processor patterns (\eg no storage for control-time delayer).
\begin{figure*}[bt]
  \centering
  \includegraphics[width=0.7\linewidth]{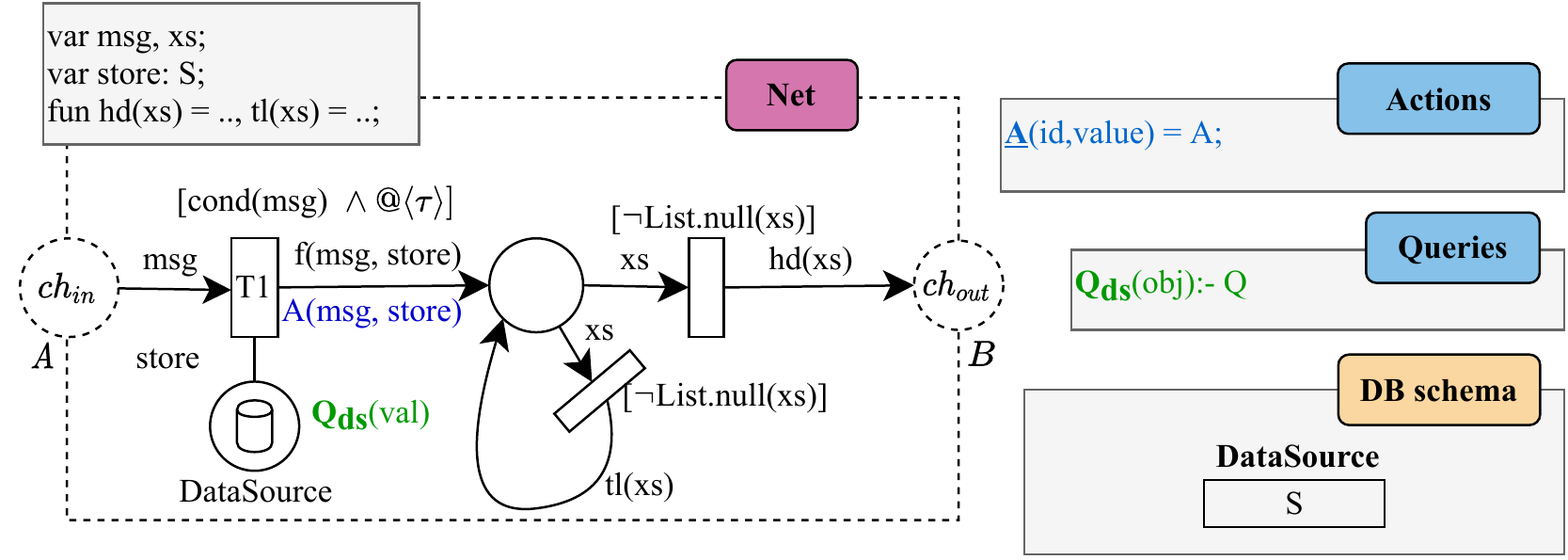}
  \caption{Interpretation of a message processor pattern.}
  \label{fig:siso_w_storage}
\end{figure*}

\paragraph{Merge patterns} 
We interpret a merge pattern $p_{\text{merge}}$  with aggregation function $f$ and timeout $\tau$ as the open timed db-net $\sem{p_{\text{merge}}} : \cin{p_{\text{merge}}}{} \to \cout{p_{\text{merge}}}{}$ shown in~\cref{fig:aggregator}, where \texttt{contains(msg, msgs)} is defined to be the function that checks if \texttt{msg} occurs in the list \texttt{msgs}.
Briefly, the net works as follows: the first message in a sequence makes transition $T1$ fire, which creates a new database record for the sequence, and starts a timer.
Each subsequent message from the same sequence gets stored in the database using transition $T2$, until $\tau$ seconds has elapsed, which will fire transition $T3$.
The action associated to $T3$ will make the condition for the Aggregate transition true, which will retrieve all messages $msgs$ and then put $f(msgs)$ in the output place of the net.

\begin{figure*}[bt]
	\centering
	\includegraphics[width=.8\linewidth]{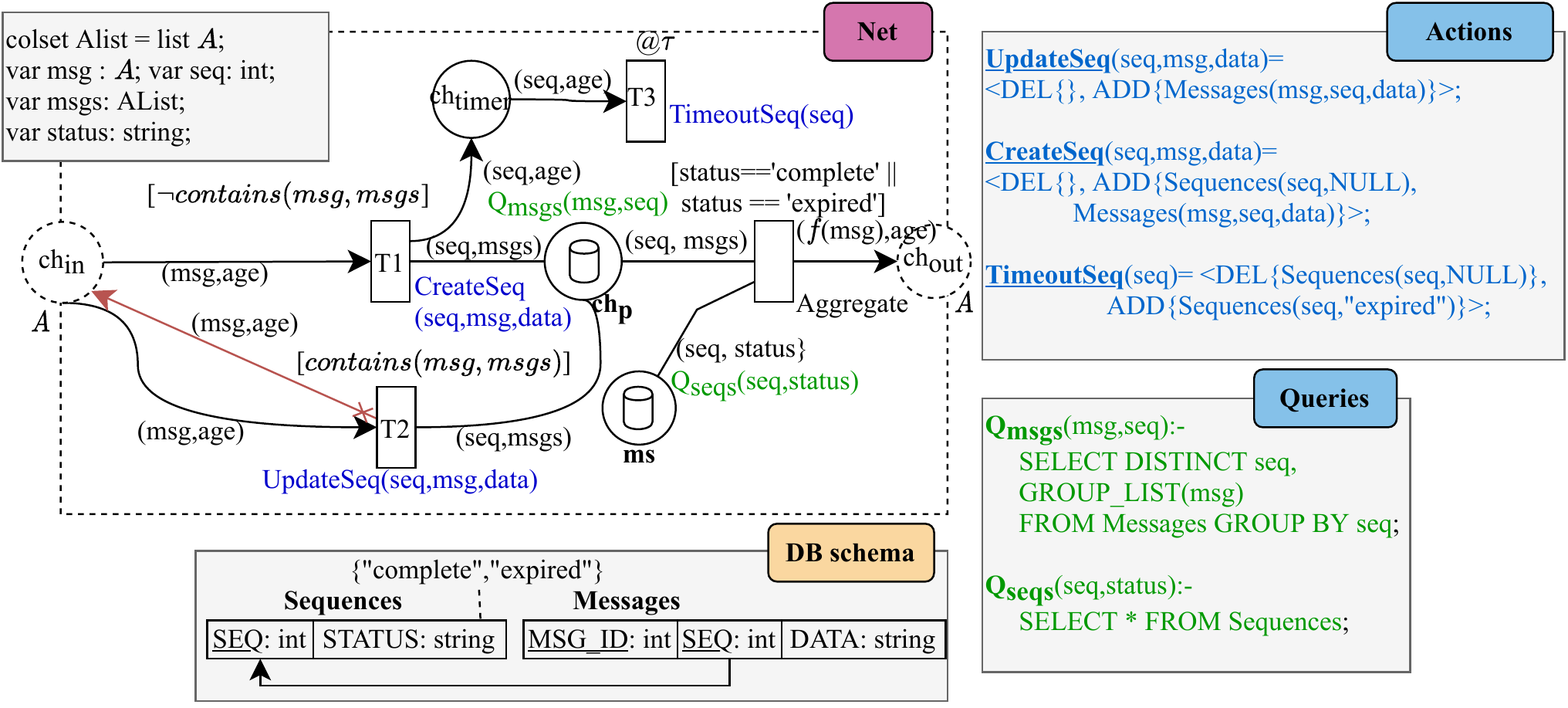}
	\caption{Interpretation of a merge pattern}
	\label{fig:aggregator}
\end{figure*}

\paragraph{External call patterns}
We interpret an external call pattern $p_{\text{call}}$ to the timed
db-net with boundaries
$\sem{p_{\text{call}}} : A \otimes B \to B \otimes A$ shown
in~\cref{fig:external_call}, where the boundary ports $ch_{sc1}$ and
$ch_{sc2}$ are meant to be plugged into the pattern representing the
external process called.
Token $x$ feeding into transition $T2$ ensures that the external process does not inject unasked for
messages into the pattern.

\begin{figure}[bt]
	\centering
	\includegraphics[width=\columnwidth]{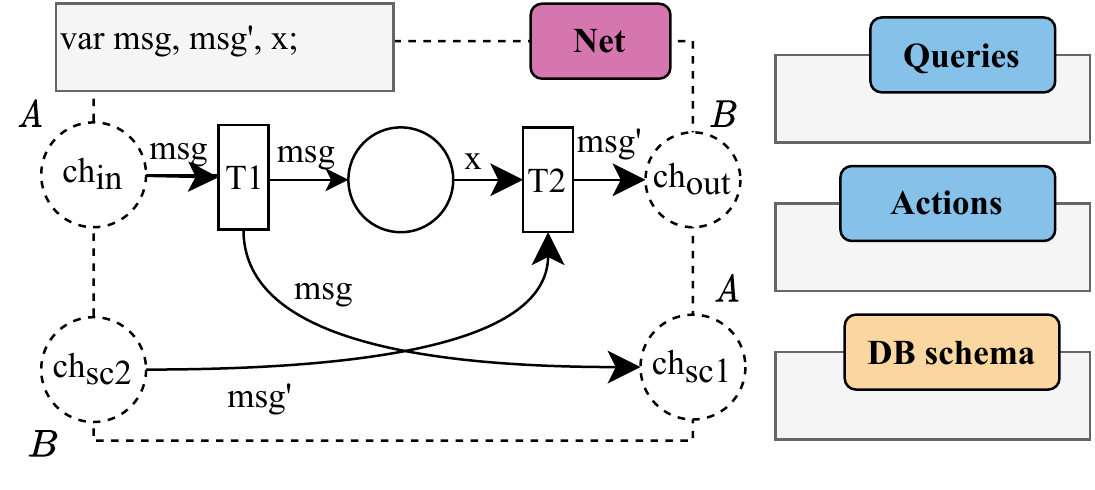}
	\caption{Interpretation of an external call pattern.}
	\label{fig:external_call}
\end{figure}

\subsection{Interpreting integration pattern contract graphs}
\label{sub:interpreting_ipcgs}
We now show how to interpret not just individual nodes from an integration pattern contract graph, but how to also take edges into account.
We first enrich the interpretations of single patterns with transitions and guards to enable and enforce the concepts from output and input contracts in \cref{sec:CPT-net-construction},
and then prove that composing the interpretation of individual patterns according to how they are connected in the graph gives rise to a well-formed timed db-net in \cref{sec:translation-edges}.

\subsubsection{Taking contract concepts into account}
\label{sec:CPT-net-construction}

Recall that a pattern contract also represents concepts, \ie properties of the exchanged data, such as if a pattern is able to process or produce signed, encrypted or encoded data.
A message can only be sent from one pattern to another if their contracts match, \ie if they agree on these properties.
To reflect this in the timed db-nets semantics, we enrich all colorsets to also keep track of this information: given a place $P$ with colorset $C$, we construct the colorset $C \times \{yes, no\}^3$, where the color $(x, b_{\text{sign}}, b_{\text{encr}}, b_{\text{enc}})$ is intended to mean that the data value $x$ is respectively signed, encrypted and encoded or not according to the yes/no values $b_{\text{sign}}$, $b_{\text{encr}}$, and $b_{\text{enc}}$.
To enforce the contracts, we also make sure that every token entering an input place $c_{in}$ is guarded according to the input contract by creating a new place $ch'_{in}$ and a new transition from $ch'_{in}$ to $ch_{in}$, which conditionally forwards tokens whose properties match the contract.
The new place $ch'_{in}$ replaces $ch_{in}$ as an input place.
Dually, for each output place $ch_{out}$ we create a new place $ch'_{out}$ and a new transition  from $ch_{out}$ to $ch'_{out}$ which ensures that all tokens satisfy the output contract, via a new transition for each combination of output contract values.
The new place $ch'_{out}$ replaces $ch_{out}$ as an output place.
Formally, the construction is as follows:

\begin{definition}
\label{def:pc:construction}
Let $\mathcal{X} : \otimes_{i<m} c_i \to \otimes_{i < n} c'_i$ be an open timed db-net, and $\vec{C} = IC_1, \ldots, IC_{m}, OC_1, \ldots, OC_{n}$ be a list of integration concepts with $IC_i, OC_j \in \CPT$. Define the open timed db-net
\[
  \cptize{\mathcal{X}}{\vec{C}} : \otimes_{i<m} (c_i \times \{yes,no\}^3) \to \otimes_{i < n} (c'_i  \times \{yes,no\}^3)
\]
  with the same type domains, persistence layers and data logic layers as $\mathcal{X}$, but with control layer
  \[
    \mathcal{N}' = (P', T', F'_{in}, F'_{out}, F'_{rb}, \mathtt{color}', \mathtt{query}, \texttt{guard}', \mathtt{action}')
  \]
  with
  \begin{align*}
    P' &= P \uplus \{ch'_{\text{in},1}, \dots, ch'_{\text{in},m}, ch'_{\text{out},1}, \dots, ch'_{\text{out},1}\} \\
    T' &= T \uplus T_{in} \uplus T_{out} \\
  \end{align*}
  where $T_{in} =  \{t_{\text{in},1}, \dots, t_{\text{in},m},\}$ and
  \[
    T_{\text{out}} = \{t_{\text{out},1, \vec{b}}, \dots, t_{\text{out},n, \vec{b}}\,|\, \vec{b} \in \{yes, no\}^3\}
  \]
  \begin{align*}
    F'_{in}(x) &=
       \begin{cases}
         F_{in}(p, t) & \text{if $x = (\inn_{P}(p), \inn_{T}(t))$} \\
         \{(y, y_{\text{sign}}, y_{\text{encr}}, y_{\text{enc}}) \} & \text{if $x = (ch_{in,i}, t_{in,i})$ } \\
         \{y \} & \text{if $x = (ch_{out,j}, t_{out,j,\vec{b}})$ } \\
         & \!\!\!\!\!\!\!\!\!\!\!\!\!\!\!\!\text{and $\vec{b} = (b_{\text{sign}}, b_{\text{encr}}, b_{\text{enc}})$} \\
         & \!\!\!\!\!\!\!\!\!\!\!\!\!\!\!\!\text{with $(OC_j)_{\CPT}(p) \in \{b_p, any\}$} \\
         \emptyset & \text{otherwise}
       \end{cases} \\
    F'_{out}(x) &=
       \begin{cases}
         F_{out}(p, t) & \text{if $x = (\inn_{P}(p), \inn_{T}(t))$} \\
         \{y \} & \text{if $x = (ch_{\text{in},i}, t_{\text{in},i})$ } \\
         \{(y, \vec{b})\} & \text{if $x = (ch_{\text{out}, j}, t_{\text{out},j, \vec{b}})$ } \\
         & \!\!\!\text{and $\vec{b} = (b_{\text{sign}}, b_{\text{encr}}, b_{\text{enc}})$} \\
         & \!\!\!\text{with $(OC_j)_{\CPT}(p) \in \{b_p, any\}$} \\
         \emptyset & \text{otherwise}
       \end{cases}
  \end{align*}
  \begin{align*}
    F'_{rb}(x) &=
       \begin{cases}
         F_{rb}(p, t) & \text{if $x = (\inn_{P}(p), \inn_{T}(t))$} \\
         \emptyset & \text{otherwise}
       \end{cases}
  \end{align*}
  \begin{align*}
    \mathtt{guard}'(\inn_{T}(t) &= \mathtt{guard}(t) \\
    \mathtt{guard}'(t_{\text{in},i}) &= \bigwedge_{\{ p\ |\ (IC_i)_{\CPT}(p) \neq any\}} y_{p} = (IC_i)_{\CPT}(p) \\
    \mathtt{guard}'(t_{\text{out},j}) &= \top \\
    \mathtt{action}' &= [\mathtt{action}, t_i \mapsto -, t'_j \mapsto -] \\
    \tau' &= [\tau, t_i \mapsto [0, \infty], t'_j \mapsto [0, \infty]]
  \end{align*}
\end{definition}

The pattern contract construction in~\cref{def:pc:construction}, can again be realized as template translation on an inter pattern level, as shown in
\cref{fig:cpt_input_output_many}.
On the input side, a token token $(y, b_{\text{sign}}, b_{\text{encr}}, \allowbreak{}b_{\text{enc}})$ only enables the transition $t_{\text{in},i}$ if $(b_{\text{sign}}, b_{\text{encr}}, b_{\text{enc}})$ fulfils the input contract $IC_i$, in which case the metadata is stripped and only the message $y$ passed to the actual pattern. On the output side, any of the boundary transitions $t_{\text{out},i,(b_{\text{sign}}, b_{\text{encr}}, b_{\text{enc}})}$ may fire and enrich the data $y$ with metadata $(b_{\text{sign}}, b_{\text{encr}}, b_{\text{enc}})$, ready to be passed to the next pattern.

\begin{figure*}[bt]
	\centering
	\includegraphics[width=0.8\linewidth]{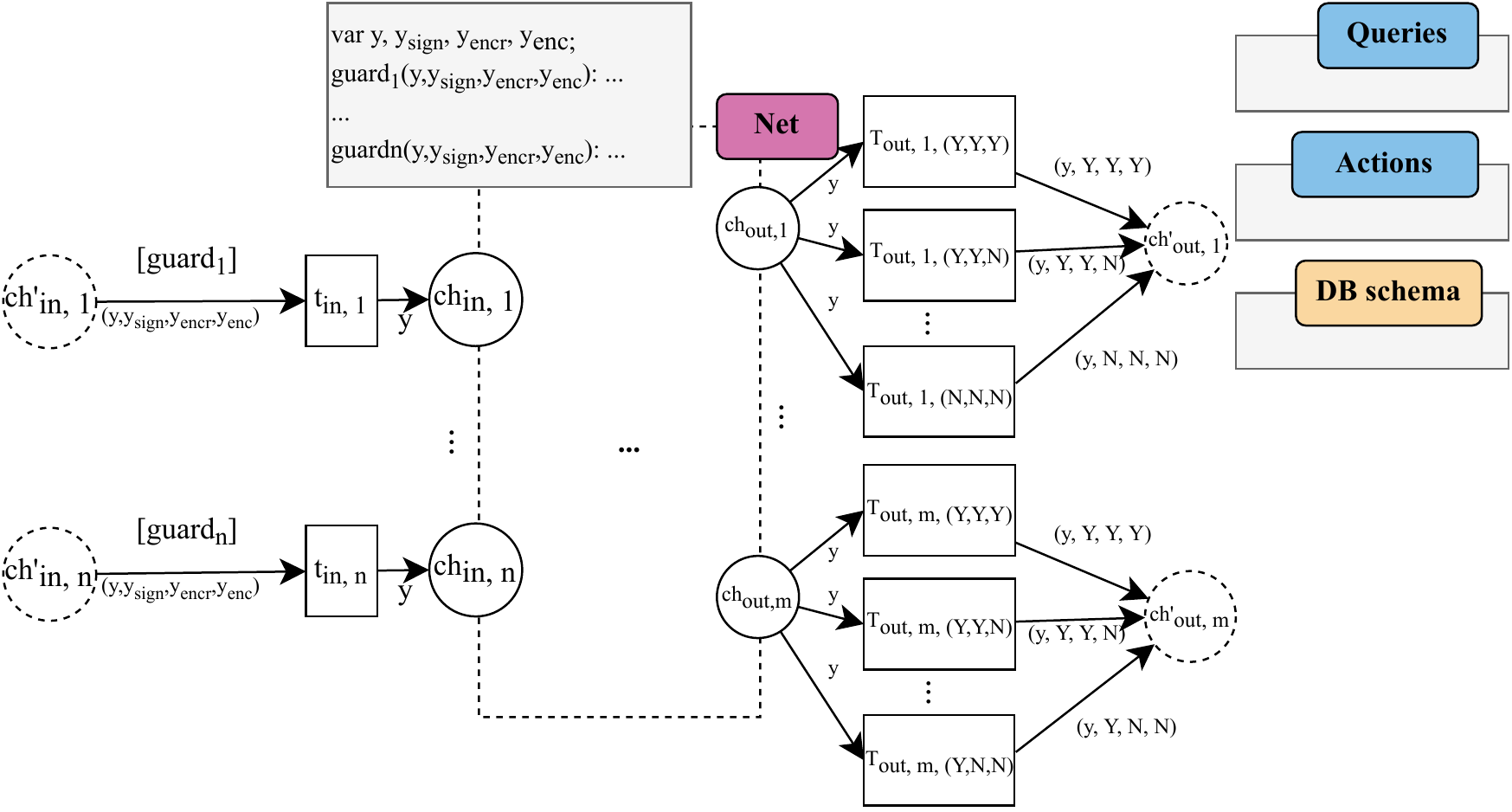}
	\caption{Boundary construction template.}
	\label{fig:cpt_input_output_many}
\end{figure*}

Let us consider two examples to gain an understanding of the construction.

\begin{example}
  \Cref{fig:message_translator_construction} shows the translation of a message translator pattern $MT$ with input contract $\{(ENC,no),$ $(ENCR,no),$ $(SIGN,any)\}$ and output contract $\{(ENC,no),$ $(ENCR,no),$ $\allowbreak(SIGN,\allowbreak{}any)\}$.
	The input transition $T'$ hence checks the guard $[x,no,any,\allowbreak{}no]$, and if it matches, the token is forwarded to the actual message translator.
	After the transformation, the resulting message $msg'$ is not encrypted, the signing is invalid, and not encoded, and thus emits $(x, no, no, no)$.
	\begin{figure*}[bt]
	    \centering
	    \includegraphics[width=.8\linewidth]{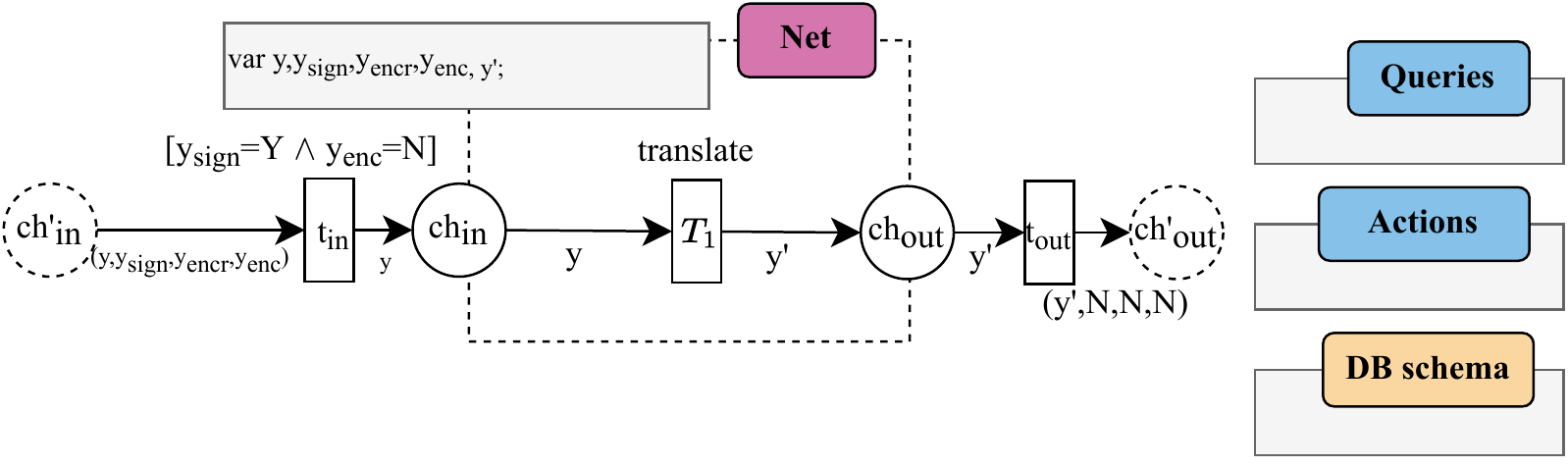}
	    \caption{Example: message translator construction}
	    \label{fig:message_translator_construction}
	\end{figure*}
\end{example}

\begin{example}
A join router structurally combines many incoming to one outgoing message channel without accessing the data.
Consequently, both input and output contracts have $any$ for all properties.
\Cref{fig:join_router_construction} shows the result of the boundary construction for the join router.
The input boundary does not enforce CPT constraints, and thus no guards are defined for the transitions.
The output boundary, however, supplies all 8 different permutations of \{yes,no\} for the three CPT properties.
\begin{figure*}[bt]
  \centering
  \includegraphics[width=.8\linewidth]{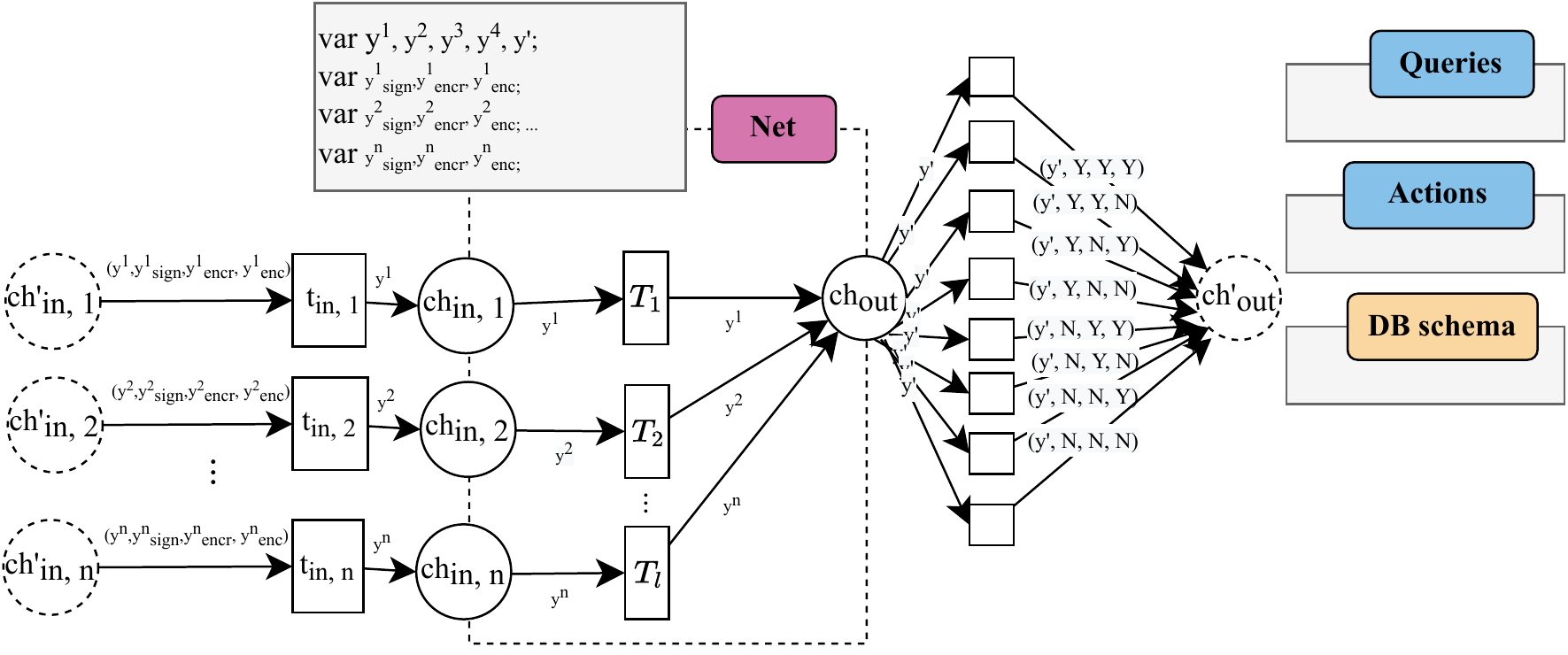}
  \caption{Join router construction}
  \label{fig:join_router_construction}
\end{figure*}
\end{example}

\subsubsection{Synchronising Pattern Compositions and correctness of the translation}
\label{sec:translation-edges}

We are now in a position to define the full translation of a correct integration pattern contract graph $G$.
For the translation to be well-defined, we need only data element correctness of the graph.
Concept correctness is used to show that in the nets in the image of the translation, tokens can always flow from the translation of the start node to the translation of the end node.

\begin{theorem}
  \label{thm:translation-correct}
  Let a correct integration pattern contract graph $G$ be given. For each node $p$, consider the timed db-net
  \[
    \cptize{\sem{p}}{\iC(p), \oC(p)} : \bigotimes_{i = 1}^k \cin{p}{i} \to \bigotimes_{j = 1}^m \cout{p}{j}
  \]
  Use the graphical language \cite{selinger2009graphical} enabled by \cref{thm:nets-moncat} to compose these nets according to the edges of the graph.
  The resulting timed db-net is then well-defined, and has the option to complete, \ie from each marking reachable from a marking with a token in some input place, it is possible to reach a marking with a token in an output place.
\end{theorem}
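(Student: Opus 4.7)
The plan is to split the claim into well-definedness of the composition and the option to complete. For well-definedness, I would check that every edge $p \to q$ in the IPCG yields compatible boundary configurations between $\cptize{\sem{p}}{\iC(p),\oC(p)}$ and $\cptize{\sem{q}}{\iC(q),\oC(q)}$: the arity match follows from the IPTG correctness conditions on in- and out-degrees paired with the shapes of the single-pattern interpretations in \cref{sec:interpretation-atomic}, and the colour match is exactly the data element clause $Z = \bigcup_{(m,Z')} Z'$ of \cref{def:matching-contracts}, since the CPT construction of \cref{def:pc:construction} uniformly enriches every boundary colourset by the same factor $\{\text{yes},\text{no}\}^3$. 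The strict monoidal structure of \cref{thm:nets-moncat} then lets the graphical calculus of \cite{selinger2009graphical} assemble the individual nets into a single well-defined open timed db-net, irrespective of how the compositions are bracketed.

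For the option to complete, I would induct along the acyclic topology guaranteed by IPTG correctness, propagating tokens from the unique start node towards the unique end node. The base case is a pattern-by-pattern verification that each constituent net admits a local firing sequence from populated input boundaries to populated output boundaries: for start, end, non-conditional fork/join, message processor, merge and external call such a sequence is apparent from the diagrams in \cref{sec:interpretation-atomic}, while conditional forks use the fallback transition that fires when every condition evaluates to false. The single-source/single-sink structure of the IPTG together with the channel cardinality constraints of \cref{def:ipg} then prevent partial-join deadlocks, as every join input is reachable from the start along a token-producing path.

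The main obstacle, and the most interesting step, is the inductive step showing that the CPT boundary transitions glued between two adjacent patterns can always fire. Suppose a token with payload $y$ reaches the shared output place $ch_{\text{out},j}$ of $p$ destined for input $ch_{\text{in},i}$ of $q$. By \cref{def:pc:construction} the output transition $t_{\text{out},j,\vec{b}}$ is eligible for any $\vec{b} = (b_{\text{sign}}, b_{\text{encr}}, b_{\text{enc}})$ satisfying $(\oC(p)_j)_{\CPT}(x) \in \{b_x, \text{any}\}$ for every concept attribute $x$, while the input transition $t_{\text{in},i}$ carries the guard $\bigwedge_{x \,:\, (\iC(q)_i)_{\CPT}(x) \neq \text{any}} y_x = (\iC(q)_i)_{\CPT}(x)$. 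Concept correctness from \cref{def:matching-contracts} guarantees that whenever $(\iC(q)_i)_{\CPT}(x) \neq \text{any}$ we have $(\oC(p)_j)_{\CPT}(x) \in \{(\iC(q)_i)_{\CPT}(x), \text{any}\}$, so choosing $b_x = (\iC(q)_i)_{\CPT}(x)$ simultaneously satisfies the eligibility of some output transition $t_{\text{out},j,\vec{b}}$ and the guard of $t_{\text{in},i}$; the token therefore always crosses the boundary. Chaining this observation along every edge yields the option to complete globally.
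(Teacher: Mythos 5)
Your proposal is correct and follows essentially the same route as the paper's proof: well-definedness via data element correctness of matching contracts, and the option to complete by combining the local completion property of each single-pattern interpretation with concept correctness to show tokens can always pass the contract-enforcing boundary transitions. You actually spell out two steps the paper leaves implicit --- the explicit choice of $\vec{b}$ witnessing that some $t_{\text{out},j,\vec{b}}$ and the guarded $t_{\text{in},i}$ can both fire, and the argument that acyclicity plus the single-source structure rules out partial-join deadlocks --- which strengthens rather than changes the argument.
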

\begin{proof}
  Since the graph is assumed to be correct, all input contracts match the output contracts of the nets composed with it, which by the data element correctness means that the boundary configurations match, so that the result is well-defined.

  To see that the constructed net also has the option to complete, first note that the interpretations of basic patterns in \cref{sec:interpretation-atomic} do (in particular, one transition is always enabled in the translation of a conditional fork pattern in \cref{fig:content_based_router}, and the aggregate transition will always be enabled after the timeout in the translation of a merge pattern in \cref{fig:aggregator}).
  By the way the interpretation is defined, all that remains to show is that if $N$ and $N'$ have the option to complete, then so does $\cptize{N}{\vec{C}} \circ \cptize{N'}{\vec{C'}}$, if the contracts $\vec{C}$ and $\vec{C'}$ match.
  Assume a marking with a token in an input place of $N'$.
  Since $N'$ has the option to complete, a marking with a token in an output place of $N'$ is reachable, and since the contracts match, this token will satisfy the guard imposed by the $\cptize{N}{\vec{C}}$ construction.
  Hence a marking with a token in an input place of $N$ is reachable, and the statement follows, as $N$ has the option to complete.
  \qed
\end{proof}

\subsection{Discussion}

Solely giving an interpretation of pattern compositions as timed db-nets does not guarantee correctness.
However, \cref{thm:translation-correct} gives confidence in the translation itself, as it shows that the output of the translation is structurally well-behaved (\ie input cardinalities match output cardinalities), and also semantically well-behaved, in the sense that tokens flowing through the resulting timed db-net cannot get \enquote{stuck} (\ie having the option to complete).
Note that having a translation targeting timed db-nets means that one can now formulate formal conjectures and prove them, perhaps for classes of integration patterns.


\section{Optimization Strategy Realization}
\label{sec:realization}
In this section we formally define the optimizations from the different strategies identified in \cref{tab:optimization_strategies} in the form of a rule-based graph rewriting system (addressing \refreq{improvements}).
This gives a formal framework in which different optimizations can be compared.
We begin by describing the graph rewriting framework, and subsequently apply it to define the optimizations.


\subsection{Graph Rewriting}
\label{sec:graph-rewriting}
Graph rewriting provides a visual framework for transforming graphs in a rule-based fashion.
A graph rewriting rule is given by two embeddings of graphs $L \hookleftarrow K \hookrightarrow R$, where $L$ represents the left hand side of the rewrite rule, $R$ the right hand side, and $K$ their intersection (the parts of the graph that should be preserved by the rule).
A rewrite rule can be applied to a graph $G$ after a match of $L$ in $G$ has been given as an embedding $L \hookrightarrow G$; this replaces the match of $L$ in $G$ by $R$.
The application of a rule is potentially non-deterministic: several distinct matches can be possible~\cite{Ehrig:2006:FAG:1121741}.
Visually, we represent a rewrite rule by a left hand side and a right hand side graph colored green and red: green parts are shared and represent $K$, while the red parts are to be deleted in the left hand side, and inserted in the right hand side respectively.
For instance, the following rewrite rule moves the node $P_1$ past a fork by making a copy in each branch, changing its label from $c$ to $c'$ in the process:
\begin{center}
\includegraphics[scale=0.6]{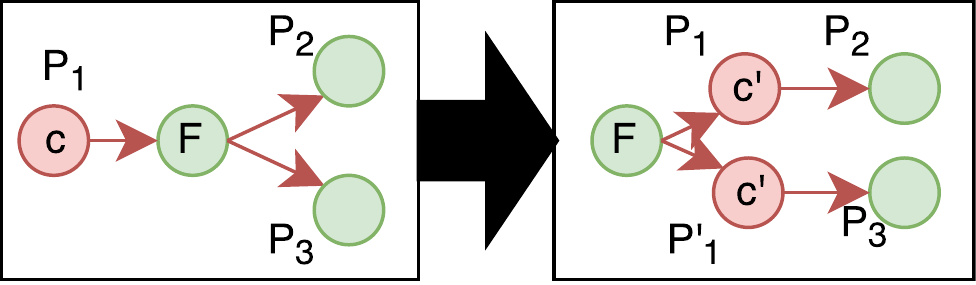}
\end{center}
Formally, the rewritten graph is constructed using a double-pushout (DPO)~\cite{ehrig1973graph} from category theory.
We use DPO rewriting since rule applications are side-effect free (\eg no \enquote{dangling} edges) and local (\ie all graph changes are described by the rules).
We additionally use Habel and Plump's relabeling DPO extension~\cite{habel2002relabelling} to facilitate the relabeling of nodes in partially labeled graphs.
In \cref{fig:summarized_example}, we showed contracts and characteristics in dashed boxes, but in the rules that follow, we will represent them as (schematic) labels inside the nodes for space reasons.

In addition, we also consider rewrite rules parameterized by graphs, where we draw the parameter graph as a cloud (see \eg \cref{fig:redundant_control_flow_v2} for an example).
A cloud represents any graph, sometimes with some side-conditions that are stated together with the rule.
When looking for a match in a given graph $G$, it is of course sufficient to instantiate clouds with subgraphs of $G$ --- this way, we can reduce the infinite number of rules that a parameterized rewrite rule represents to a finite number.
Parameterized rewrite rules can formally be represented using substitution of hypergraphs~\cite{plump1994hypergraph} or by !-boxes in open graphs~\cite{bangBoxes}.
Since we describe optimization strategies as graph rewrite rules, we can be flexible with when and in what order we apply the strategies.
We apply the rules repeatedly until a fixed point is reached, \ie when no further changes are possible, making the process idempotent.
Each rule application preserves IPCG correctness in the sense of \cref{def:pattern_composition_new}, because input contracts do not get more specific, and output contracts remain the same.
Methodologically, the rules are specified by pre-conditions, change primitives, post-conditions and an optimization effect, where the pre- and post-conditions are implicit in the applicability and result of the rewriting rule.

\subsection{OS-1: Process Simplification}
We first consider the process simplification strategies from \cref{sec:stategies} OS-1 to OS-3 that mainly strive to reduce the model complexity and latency.

\subsubsection{Redundant sub-process} 
\label{opt:redundant-subprocess}
This optimization removes redundant copies of the same sub-process within a process.

\begin{figure}[bt]
    \subfigure[Redundant sub-process]{\label{fig:redundant_control_flow_v2}\includegraphics[width=.7\columnwidth]{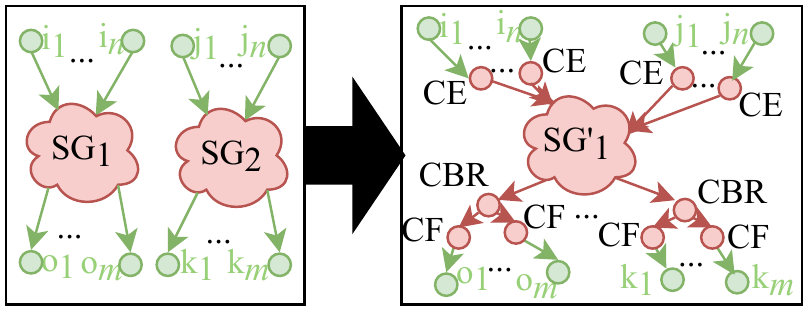}} \\
    \hfill
    \subfigure[Combine sibling patterns]{\label{fig:combine_sibling_nodes_v2_b}\includegraphics[width=.7\columnwidth]{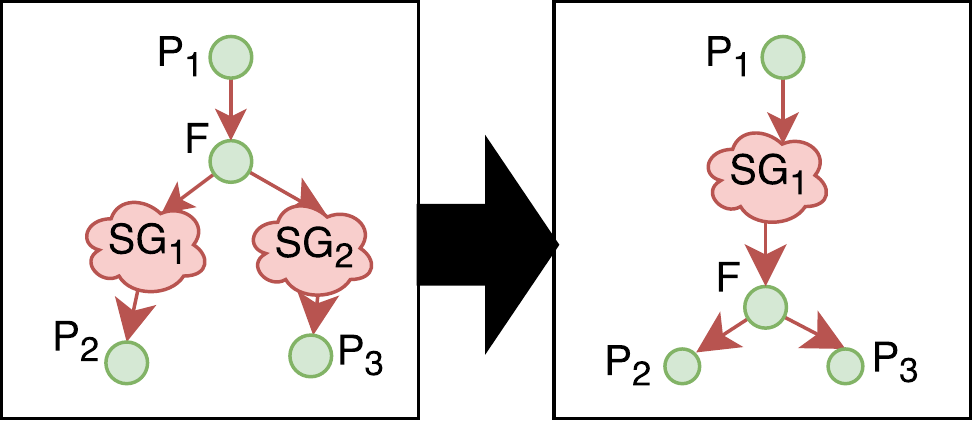}} 
    \vspace{-.3cm}
    \caption{Rules for redundant sub-process and combine sibling patterns.}
    \label{fig:opt:redundant}
\end{figure}

\labeltitle{Change primitives:} The rewriting is given by the rule in \cref{fig:redundant_control_flow_v2}, where $SG1$ and $SG2$ are isomorphic pattern graphs with in-degree $n$ and out-degree $m$.
The Content Enricher (CE) node is a message processor pattern from~\cref{fig:siso_w_storage} with a pattern characteristic $(PRG, (\prg{addCtxt}, [0, \infty)))$ for an enrichment program $\prg{addCtxt}$ which is used to add content to the message (does it come from the left or right subgraph?).
Similarly, the Content Filter (CF) is a message processor, with a pattern characteristic $(PRG, (\prg{removeCtxt},$ $[0, \infty)))$  for an enrichment program $\prg{removeCtxt}$ which is used to remove the added content from the message again.
Moreover, the Content-based Router (CBR) node is a conditional fork pattern from~\cref{fig:content_based_router} with a pattern characteristic $(CND, \{\cnd{fromLeft?}\})$ for a condition $\cnd{fromLeft?}$ which is used to route messages depending on their added context.
In the right hand side of the rule, the $CE$ nodes add the context of the predecessor node to the message in the form of a content enricher pattern, and the $CBR$ nodes are content-based routers that route the message to the correct recipient based on the context introduced by $CE$.
The graph $SG'_1$ is the same as $SG_1$, but with the context introduced by $CE$ copied along everywhere.
This context is stripped off the message by a content filter $CF$.

\labeltitle{Effect:} The optimization is beneficial for model complexity when the isomorphic subgraphs contain more than $n + m$ nodes, where $n$ is the in-degree and $m$ the out-degree of the isomorphic subgraphs.
The latency reduction is by the factor of subgraphs minus the latency introduced by the additional $n$ $CE$ nodes, $m$ $CBR$ nodes and $2m$ $CF$ nodes.

\subsubsection{Combine sibling patterns}
\label{opt:combine-siblings}
Sibling patterns have the same parent node in the pattern graph (\eg they follow a non-conditional forking pattern) with channel cardinality of $1$:$1$.
Combining them means that only one copy of a message is traveling through the graph instead of two --- for this transformation to be correct in general, the siblings also need to be side-effect free, \ie no external calls.

\labeltitle{Change primitives:} The rule is given in \cref{fig:combine_sibling_nodes_v2_b}, where $SG_1$ and $SG_2$ are isomorphic side-effect free pattern graphs, and $F$ is a fork. 

\labeltitle{Effect:} The model complexity and latency are reduced by the model complexity and latency of $SG_2$.

\subsection{OS-2: Data Reduction} \label{sub:data_reduction}
Now, we consider data reduction optimization strategies, which mainly target improvements of the message throughput (incl. reducing element cardinalities).
These optimizations require that pattern input and output contracts are regularly updated with snapshots of element data sets $EL_{\mathrm{in}}$ and $EL_{\mathrm{out}}$ from live systems, \eg from experimental measurements through benchmarks~\cite{ritter2016benchmarking}.

\subsubsection{Early-Filter}
\label{opt:early-filter}
A filter pattern can be moved to or inserted prior to some of its successors to reduce the data to be processed.
The following types of filters have to be differentiated:
\begin{itemize}
    \itemsep0em
    \item A \emph{message filter} removes messages with invalid or incomplete content.
    It can be used 
    to prevent exceptional situations, and thus improves 
    stability. 
    \item A \emph{content filter} removes elements from messages, thus reduces the amount of data passed to subsequent patterns.
\end{itemize}
Both patterns are message processors in the sense of~\cref{fig:siso_w_storage}.
The content filter assigns a filter function $(prg_1, ([0,\infty)) \rightarrow f(msg,value)$ to remove data from the message (\ie without temporal information), and  the message filter assigns a filter condition $\{(cond_1)\} \rightarrow g(msg)$.
\begin{figure}[bt]
    \subfigure[Early Filter]{\label{fig:early_filter_v2_insert}\includegraphics[width=.7\columnwidth]{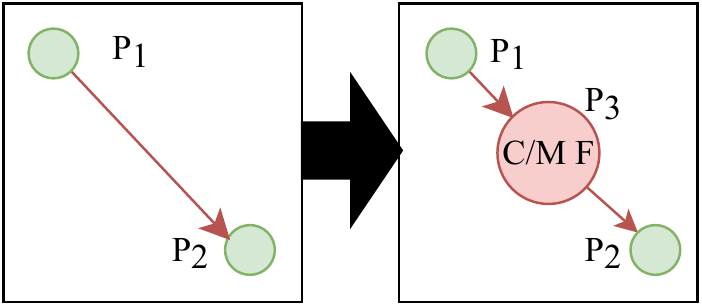}}\\
    \hfill
    \subfigure[Early Mapping]{\label{fig:early_mapping_v2}\includegraphics[width=.7\columnwidth]{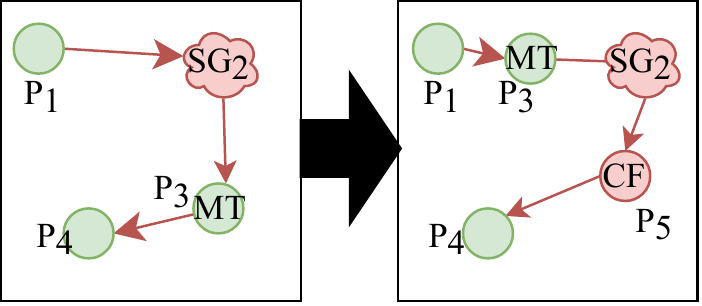}}
    \vspace{-.3cm}
    \caption{Rules for early-filter and early-mapping.}
    \label{fig:opt:early_filter}
\end{figure}

\labeltitle{Change primitives:} The rule is given in \cref{fig:early_filter_v2_insert}, where $P_3$ is either a content or message filter matching the output contracts of $P_1$ and the input contract of $P_2$, removing the data not used by $P_2$.

\labeltitle{Effect:} Message throughput increases by the ratio of the number of reduced 
elements that are processed per second, unless limited by the throughput of the additional pattern.

\subsubsection{Early-Mapping}
\label{opt:early-mapping}
A mapping that reduces the number of elements in a message can increase the message throughput.

\labeltitle{Change primitives:} The rule is given in \cref{fig:early_mapping_v2}, where $P_3$ is an element reducing message mapping compatible with both $SG_2$, $P_4$, and $P_1$, $SG_2$, and where $P_4$ does not modify the elements mentioned in the output contract of $P_3$.
Furthermore $P_5$ is a content filter, which ensures that the input contract of $P_4$ is satisfied.
The Message Translator (MT) node is a message processor pattern from~\cref{fig:siso_w_storage} with a pattern characteristic $(PRG, (prg, [0, \infty)))$ for some program $prg$ which is used to transform the message.

\labeltitle{Effect:} The message throughput for the subgraph subsequent to the mapping increases by the ratio of the number of unnecessary data elements processed.

\subsubsection{Early-Aggregation}
\label{opt:early-aggregation}
A micro-batch processing region is a subgraph which contains patterns that are able to process multiple messages combined to a multi-message~\cite{DBLP:conf/bncod/Ritter17} or one message with multiple segments with an increased message throughput.
The optimal number of aggregations is determined by the highest batch-size for the throughput ratio of the pattern with the lowest throughput, if latency is not considered.

\begin{figure}[bt]
    \subfigure[Early-Aggregation]{\label{fig:early_aggregate}\includegraphics[width=0.7\columnwidth]{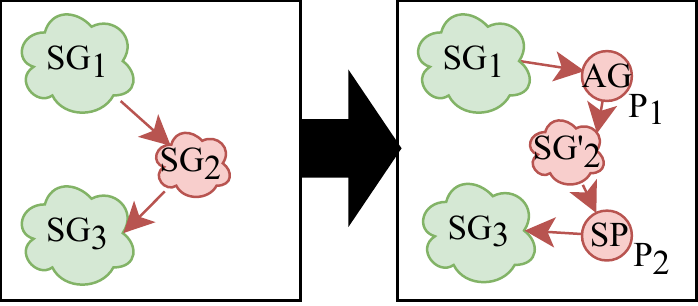}} \\ 
    \hfill
    \subfigure[Early-Claim Check]{\label{fig:early_claim_v2}\includegraphics[width=0.7\columnwidth]{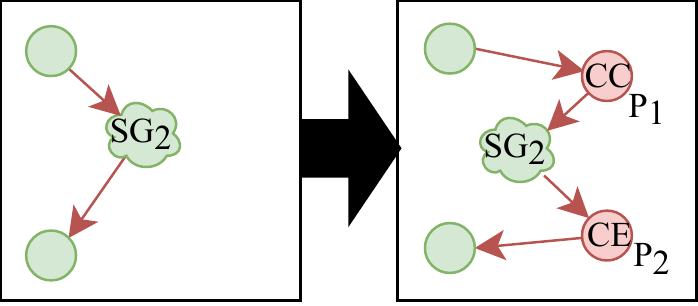}}
    \caption{Rules for early-aggregation and early-claim check}
    \label{fig:op:early_aggregate_claim}
\end{figure}

\labeltitle{Change primitives:} The rule is given in \cref{fig:early_aggregate}, where $SG_2$ is a micro-batch processing region, $P_1$ an aggregator, $P_2$ a splitter which separates the batch entries to distinct messages to reverse the aggregation, and $SG'_2$ finally is $SG_2$ modified to process micro-batched messages.
The Aggregator (AG) node is a merge pattern from~\cref{fig:aggregator} with a pattern characteristic $\{(CND,\{cnd_{cr}, cnd_{cc}\}),$ $(PRG, prg_{agg},$ $(v_1,v_2)\}$ for some correlation condition $cnd_{cr}$, completion condition $cnd_{cc}$, aggregation function $prg_{agg}$, and timeout interval $(v_1,v_2)$. The Splitter (SP) node is a message processor from~\cref{fig:siso_w_storage} with a pattern characteristic $(PRG, (prg, [0, \infty)))$ for some split function $prg$ which is used to split the message into several ones.

\labeltitle{Effect:} The message throughput is the minimal pattern throughput of all patterns in the micro-batch processing region.
If the region is followed by patterns with less throughput, only the overall latency might be improved.

\subsubsection{Early Claim Check}
\label{opt:early-claim-check}
If a subgraph does not contain a pattern with message access, the message payload can be stored intermediately persistently or transiently (depending on the quality of service level) and not moved through the subgraph.
For instance, this applies to subgraphs consisting of data independent control-flow logic only, or those that operate entirely on the message header (\eg header routing).

\labeltitle{Change primitives:} The rule is given in \cref{fig:early_claim_v2}, where $SG_2$ is a message access-free subgraph, $P_1$ a claim check that stores the message payload and adds a claim to the message properties (and possibly routing information to the message header), and $P_2$  a content enricher that adds the original payload to the message.
The Claim Check (CC) node is a message processor from~\cref{fig:siso_w_storage} with a pattern characteristic $(PRG, (\_, [0, \infty)))$, which stores the message for later retrieval.

\labeltitle{Effect:} The main memory consumption and CPU load decreases, which could increase the message throughput of $SG_2$, if the claim check and content enricher pattern throughput is greater than or equal to the improved throughput of each of the patterns in the subgraph.

\subsubsection{Early-Split}
\label{opt:early-split}
Messages with many segments can be reduced to several messages with fewer segments, and thereby reducing the processing 
per message.
A segment is an iterable part of a message, such as a list of elements.
When such a message grows bigger, the message throughput of a set of adjacent patterns might decrease, compared to the expected performance for a single segment;
a phenomenon called \emph{segment bottleneck sub-sequence}.
Algorithmically, such bottlenecks can be found using max flow-min cut techniques based on workload statistics. 
The splitter (SP) node is a message processor from~\cref{fig:siso_w_storage} with a pattern characteristic $(PRG, (prg, [0, \infty)))$, for some split program $prg$.

\begin{figure}[bt]
    \subfigure[Early Split]{\label{fig:early_split_v2}\includegraphics[width=0.7\columnwidth]{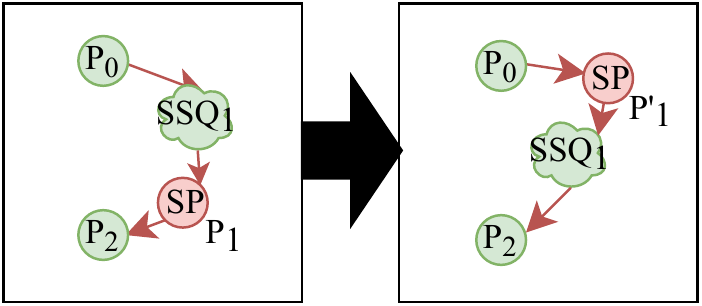}}\\
    \hfill
    \subfigure[Early Split (inserted)]{\label{fig:early_split_v2_insert}\includegraphics[width=0.7\columnwidth]{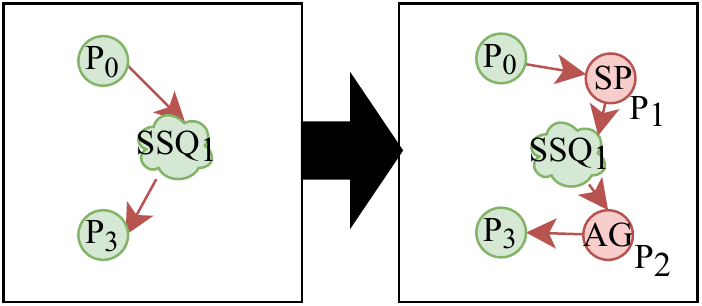}}
    \caption{Rules for early split.}
    \label{fig:op:early_split}
\end{figure}

\labeltitle{Change primitives:} The rule is given in \cref{fig:op:early_split}, where $SSQ_1$ is a segment bottleneck sub-sequence.
If $SSQ_1$ already has an adjacent splitter, \cref{fig:early_split_v2} applies, otherwise \cref{fig:early_split_v2_insert}.
In the latter case, $SP$ is a splitter and $P_2$ is  an aggregator that re-builds the required segments for the successor in $SG_2$.
For an already existing splitter $P_1$ in \cref{fig:early_split_v2}, the split condition has to be adjusted to the elements required by the input contract of the subsequent pattern in $SSQ_1$.
In both cases we assume that the patterns in $SSQ_1$ deal with single- and multi-segment messages; otherwise all patterns have to be adjusted. 

\labeltitle{Effect:} The message throughput increases by the ratio of reduced number of message segments per message, if the throughput of the moved / added splitter (and aggregator) $\geq$ throughput of each of the patterns in the segment bottleneck sub-sequence after the segment reduction.

\subsection{OS-3: Parallelization}
Parallelization optimization strategies increase message throughput,
and again require experimentally measured message throughput statistics, \eg from benchmarks~\cite{ritter2016benchmarking}.

\subsubsection{Sequence to parallel}
\label{opt:seq-to-parallel}
A bottleneck sub-sequence with channel cardinality 1:1 can also be handled by distributing its input and replicating its logic.
The parallelization factor is the average message throughput of the predecessor and successor of the sequence divided by two, which denotes the improvement potential of the bottleneck sub-sequence.
The goal is to not overachieve the mean of predecessor and successor throughput with the improvement to avoid iterative re-optimization.
Hence the optimization is only executed, if the parallel sub-sequence reaches lower throughput than their minimum.

\begin{figure}[bt]
    \subfigure[Sequence to parallel]{\label{fig:sequence_to_parallel_v2}\includegraphics[width=0.7\columnwidth]{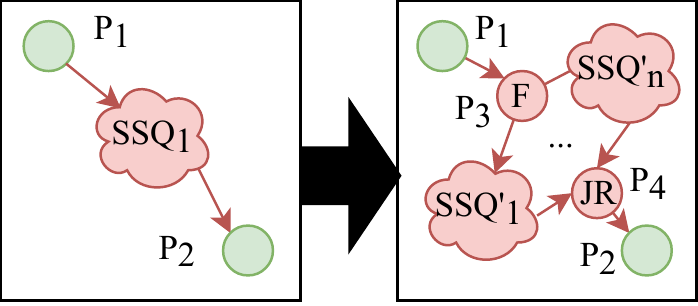}} \\
    \hfill
    \subfigure[Merge parallel]{\label{fig:op:merge_parallel_v2}\includegraphics[width=0.7\columnwidth]{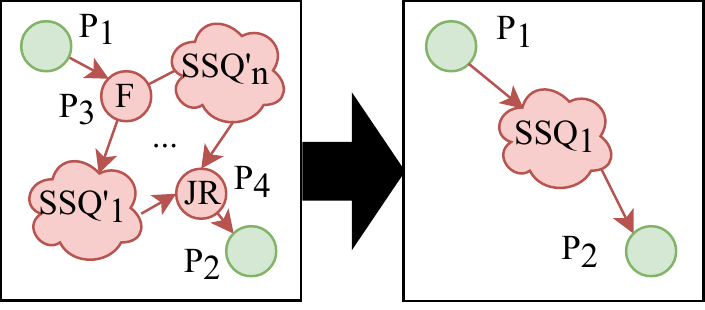}}
    \vspace{-.3cm}
    \caption{Rules for sequence to parallel variants.}
    \label{fig:op:parallel}
\end{figure}

\labeltitle{Change primitives:} The rule is given in \cref{fig:sequence_to_parallel_v2}, where $SSQ_1$ is a bottleneck sub-sequence, $P_2$ a fork node, $P_3$ a join router, and each $SSQ'_k$ is a copy of $SSQ_1$, for $1 \leq k \leq n$.
The parallelization factor $n$ is a parameter of the rule.

\labeltitle{Effect:} The message throughput improvement rate depends on the parallelization factor $n$, and the message throughput of the balancing fork and join router on the runtime.
For a measured throughput $t$ of the bottleneck sub-sequences, the throughput can be improved to $n \times t \leq$ average of the sums of the predecessor and successor throughput, which is limited by the upper boundary of the balancing fork or join router.

\subsubsection{Merge parallel}
\label{opt:merge-parallel}
The balancing fork and join router realizations can limit the throughput in some runtime systems, so that a parallelization decreases the throughput,
\eg when a fork or a join has smaller throughput than a pattern in the following sub-sequence.

\labeltitle{Change primitives:} The rule is given in \cref{fig:op:merge_parallel_v2}, where $P_3$ and $P_4$ limit the message throughput of each of the $n$ sub-sequence copies $SSQ'_1$, \ldots, $SSQ'_n$ of $SSQ_1$.

\labeltitle{Effect:} The model complexity is reduced by $(n-1)k - 2$, where each $SSQ'_i$ contains $k$ nodes.
The message throughput might improve, since the transformation lifts the limiting upper boundary of a badly performing balancing fork or join router implementations to the lowest pattern throughput in the bottleneck sub-sequence.

\subsubsection{Heterogeneous Parallelization}
\label{opt:hetero-parallel}
A heterogeneous parallelization consists of parallel sub-sequences that are not isomorphic.
In general, two subsequent patterns $P_i$ and $P_j$ can be parallelized, if the predecessor pattern of $P_i$ fulfills the input contract of $P_j$, $P_i$ behaves read-only with respect to the data element set of $P_j$, and the combined outbound contracts of $P_i$ and $P_j$ fulfill the input contract of the successor pattern of $P_j$.
\begin{figure}[bt]
    \centering
    \includegraphics[width=0.7\columnwidth]{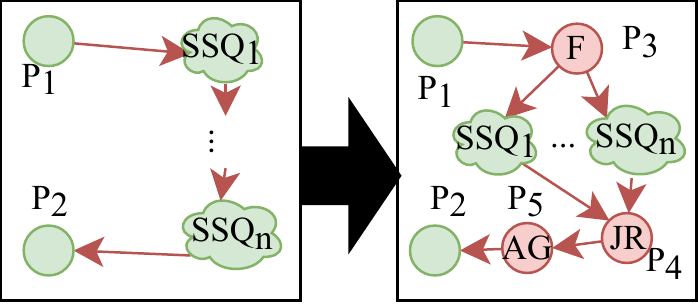}
    \caption{Heterogeneous sequence to parallel.}
    \label{fig:heterogeneous_sequence_to_parallel_v2}
\end{figure}

\labeltitle{Change primitives:} The rule is given in \cref{fig:heterogeneous_sequence_to_parallel_v2}, where the sequential sub-sequence parts $SSQ_1$, .., $SSQ_n$ are side-effect free and can be parallelized, $P_3$ is a parallel fork, $P_4$ is a join router, and $P_5$ is an aggregator that waits for messages from all sub-sequence branches before emitting a combined message that fulfills the input contract of $P_2$.

\labeltitle{Effect:}
Synchronization latency can be improved, but the model complexity increases by 3.
The latency improves from the sum of the sequential pattern latencies to the maximal latency of all sub-sequence parts plus the fork, join, and aggregator latencies.

\subsection{OS-4: Pattern Placement}
All of the data reduction optimizations discussed in \cref{sub:data_reduction} can be applied in OS-4, \ie \enquote{Pushdown to Endpoint}, by extending the placement to the message endpoints, with contracts similar to our definition.
However, due to our focus on the integration processes, we will not further elaborate on it in this work.
 
\subsection{OS-5: Reduce Interaction}
Optimization strategies that reduce interactions target a more resilient behavior of an integration process.

\subsubsection{Ignore Failing Endpoints}
\label{opt:ignore-failing-endpoints}
When endpoints fail, different exceptional situations have to be handled on the caller side.
This can come with long timeouts, which can block the caller and increase latency.
Knowing that an endpoint is unreliable can speed up processing, by immediately falling back to an alternative.

\begin{figure}[bt]
  \subfigure[Ignore Failing Endpoint]{\label{fig:ignore_failing_endpoints_v2}\includegraphics[width=0.7\columnwidth]{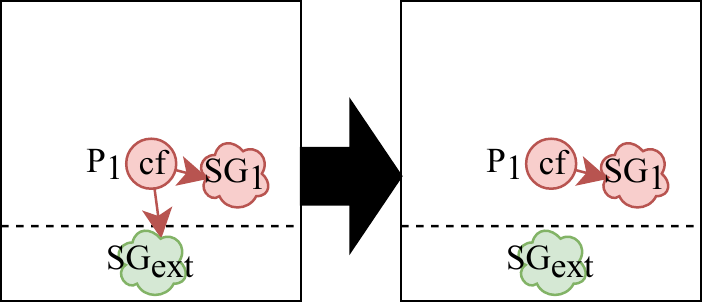}}\\
  \hfil7
  \subfigure[Try Failing Endpoint]{\label{fig:ignore_failing_endpoints_v2_b}\includegraphics[width=0.7\columnwidth]{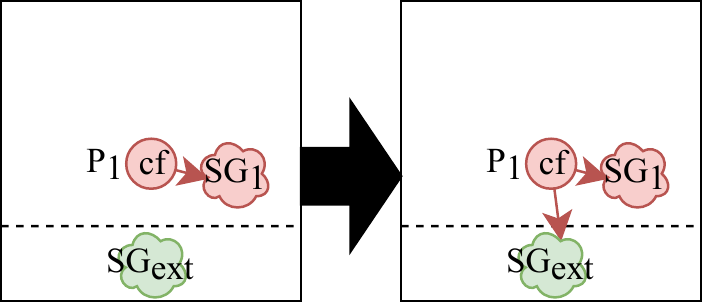}}
    \caption{Rules for ignore failing endpoints.}
    \label{fig:ignore_failing}
\end{figure}

\labeltitle{Change primitives:} The rule is given in \cref{fig:ignore_failing_endpoints_v2}, where $SG_{ext}$ is a failing endpoint, $SG_1$ and $SG_2$ subgraphs, and $P_1$ is a service call or message send pattern  with configuration $cf$.
This specifies the collected number of subsequently failed delivery attempts to the endpoint or a configurable time interval.
If one of these thresholds is reached, the process stops calling $SG_{ext}$ and does not continue with the usual processing in $SG_1$, however, invokes an alternative processing or exception handling in $SG_2$.

\labeltitle{Effect:} Besides improved latency (\ie average time to response from endpoint in case of failure), the integration process behaves more stable due to immediate alternative processing.
To not exclude the remote endpoint forever, the rule in \cref{fig:ignore_failing_endpoints_v2_b} is scheduled for execution after a period of time to try whether the endpoint is still failing.
If not, the configuration is updated to $cf'$ to avoid the execution of \cref{fig:ignore_failing_endpoints_v2}.
The retry time is adjusted depending on experienced values (\eg endpoint is down every two hours for ten minutes).

\subsubsection{Reduce Requests}
\label{opt:reduce-requests}
A \emph{message limited} endpoint, \ie an endpoint that is not able to handle a high rate of requests, can get unresponsive or fail.
To avoid this, the caller can notice this (\eg by TCP back-pressure) and react by reducing the number or frequency of requests.
This can be done be employing a throttling or even sampling pattern~\cite{ritter2016exception}, which removes messages.
An aggregator can also help to combine messages to multi-messages~\cite{DBLP:conf/bncod/Ritter17}.

\begin{figure}[bt]
  \subfigure[Reduce Requests]{\label{fig:reduce_requests_v2}\includegraphics[width=0.8\columnwidth]{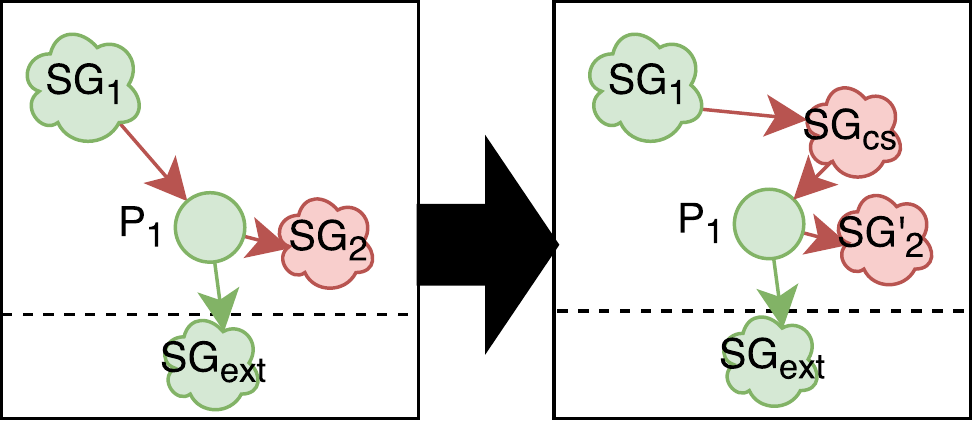}}
  \caption{Rules for reduce requests.}
  \label{fig:reduce}
\end{figure}

\labeltitle{Change primitives:} The rewriting is given by the rule in \cref{fig:reduce_requests_v2}, where $P_1$ is a service call or message send pattern, $SG_{ext}$  a message limited external endpoint, $SG_2$ a subgraph with $SG'_2$ a re-configured copy of $SG_2$ (\eg for vectorized message processing~\cite{DBLP:conf/bncod/Ritter17}), and $SG_{cs}$ a subgraph that reduce the pace, or number of messages sent.

\labeltitle{Effect:} Latency and message throughput might improve, but this optimization mainly targets stability of communication.
This is improved by configuring the caller to a message rate that the receiver can handle.

\subsection{Optimization Correctness}
\label{sub:op:correctness}
We now show that the optimizations do not change the input-output behaviour of the pattern graphs in the timed db-nets semantics, \ie if we have a rewrite rule $G \Rightarrow G'$ (cf.~\cref{sec:graph-rewriting}), then the constructed \tdbnet with boundaries $\sem{G}$ has the same observable behaviour as that of $\sem{G'}$ (addressing \refreq{correctness}). 
More formally, we mean that the transition systems of the original and the rewritten graphs are bisimilar in a certain sense,
as defined in~\cref{def:op:bisimilarity}.
At a high level, this means that $G$ can simulate $G'$ with respect to input-output behaviour, and vice versa.
Recall that we associate a labelled transition system to each \tdbnet in \cref{sec:execution-semantics}.

\begin{definition}[Functional bisimulation] 
	\label{def:op:bisimilarity}
    Let $B$ and $B'$ be \tdbnets with equal boundaries, and let $\Gamma^\mathcal{B}_{s_0} = \langle S, s_0, \rightarrow \rangle$ and $\Gamma^\mathcal{B'}_{s_0} = \langle S', s'_0, \rightarrow' \rangle$ be their associated labelled transition systems. We say that a $B$-snapshot $(I, m)$ is \emph{functionally equivalent} to a $B'$-snapshot $(I', m')$, $(I, m) \approx (I', m')$,  if $I = I'$, and $m$ and $m'$ agree on output places except for age variables. 
    Further we say that $\Gamma^\mathcal{B}_{s_0}$ is \emph{functionally bisimilar} to $\Gamma^\mathcal{B'}_{s_0}$, $\Gamma^\mathcal{B}_{s_0} \sim \Gamma^\mathcal{B'}_{s_0}$, if whenever $s_0 \rightarrow^* (I, m)$ then there is $(I', m')$ such that $s'_0 \rightarrow'^* (I', m')$, $(I, m) \approx (I', m')$, and $\Gamma^\mathcal{B}_{(I,m)} \sim \Gamma^\mathcal{B'}_{(I',m')}$, and similarly whenever $s'_0 \rightarrow^* (I', m')$ then there is $(I, m)$ such that $s_0 \rightarrow^* (I, m)$, $(I', m') \approx (I, m)$, and $\Gamma^\mathcal{B'}_{(I',m')} \sim \Gamma^\mathcal{B}_{(I,m)}$. \qedblack
\end{definition}

The notion of functional bisimulation captures the notion of having the same output behaviour, in the sense that transition system can reach a certain configuration of the output places if and only if the other one can. Note that this definition allows bisimilar transition systems to assign different token ages --- what matters is not the exact age value, but that the corresponding transitions are always possible.
Let us discuss an explanatory example of bisimulation. 

\begin{example}
	\Cref{fig:op:bisimilarity_example} shows the interpretation of a simple IPCG as a \tdbnet before and after applying the rewrite rule for the combining sibling patterns from~\cref{fig:combine_sibling_nodes_v2_b} (for simplicity without boundaries).
	\begin{figure}[bt]
		\begin{center}$
			\begin{array}{c}
			\subfigure[Before applying the rewrite rule]{\label{fig:op:bisimilarity_example_1}\includegraphics[width=1\columnwidth]{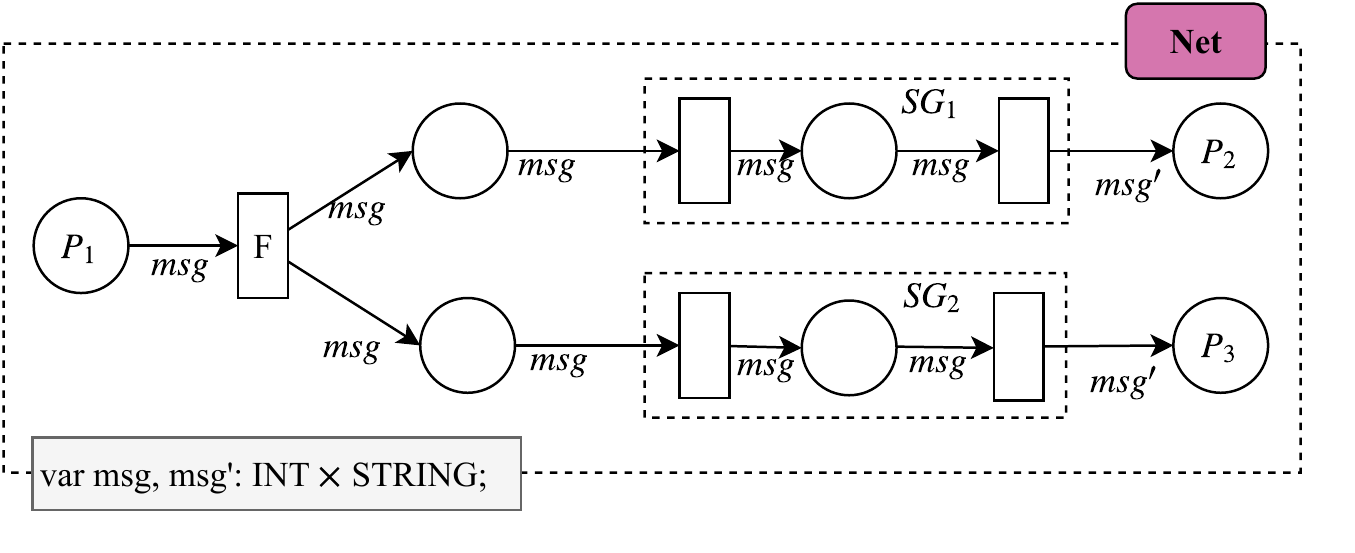}} \\
			\subfigure[After applying the rewrite rule]{\label{fig:op:bisimilarity_example_2}\includegraphics[width=1\columnwidth]{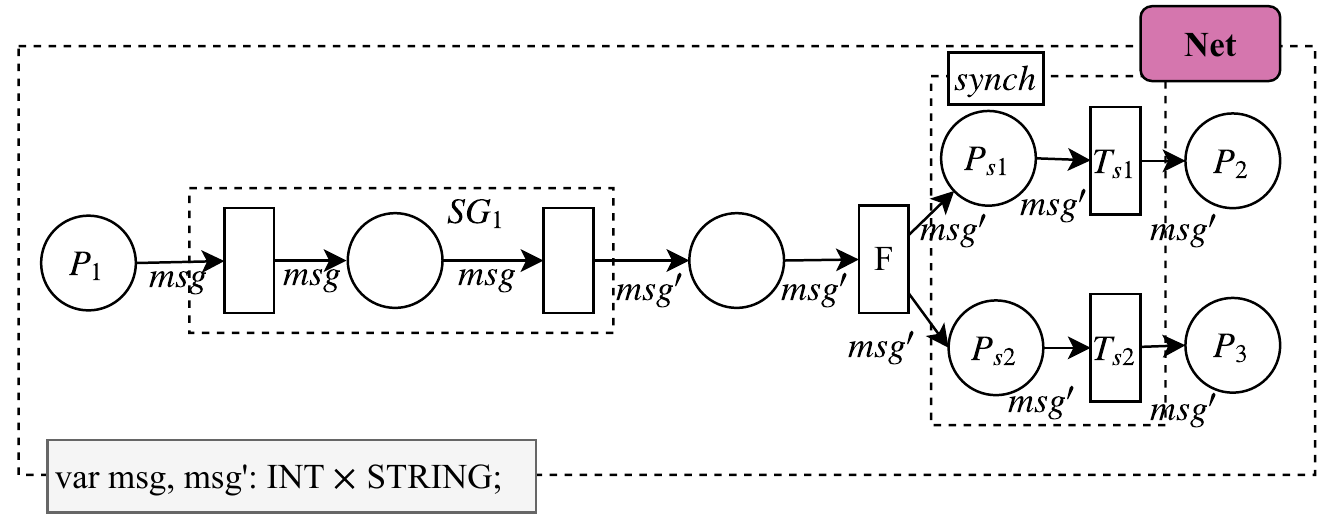}}        
			\end{array}$
		\end{center}
		\caption{Timed db-net translation of IPCGs before and after applying the \enquote{combine sibling patterns} rewrite rule}
		\label{fig:op:bisimilarity_example}
	\end{figure}
	The improvement of the optimization is to move $SG_1$ (isomorphic to $SG_2$) in front of the forking pattern $F$ and leave out $SG_2$, which reduces the modeling complexity on the right hand side (cf.~\cref{fig:op:bisimilarity_example_2}).
	The synchronization subnet \emph{synch} is required to show bisimilarity between the original and the resulting net, since tokens might be moved independently in $SG_1$ and $SG_2$ before applying the optimisation.
	The subnet (essentially transitions $T_{s1},T_{s2}$) compensates for that to ensure that places $P_2$ and $P_3$ can be reached independently as well.
	The \tdbnet $B$ representing \cref{fig:op:bisimilarity_example_1} and $B'$ \cref{fig:op:bisimilarity_example_2} are bisimilar $\Gamma^\mathcal{B}_{(I,m)} \sim \Gamma^\mathcal{B'}_{(I,m')}$ for any database instance $I$, and any markings $m$ and $m'$ with $m(P_1) = m'(P_1)$ and $m(p) = \emptyset = m'(p)$ for all other $p$. \qedblack
\end{example}

We will need the following basic lemma to show the correctness of our optimizations, \ie, that the right and left hand sides of the respective optimization rules are bisimilar.

\begin{lemma}
    \label{thm:simulation-congruence}
    The relation $\sim$ is an congruence relation with respect to composition of timed db-nets with boundaries, \ie it is reflexive,
    symmetric and transitive, and if 
    if $\Gamma^{\mathcal{B}_1}_{s_0} \sim \Gamma^{\mathcal{B}_1'}_{s_0}$ and $\Gamma^{\mathcal{B}_2}_{s_0} \sim \Gamma^{\mathcal{B}_2'}_{s_0}$ for all $s_0$ on the shared boundary of $B_1$ and $B_2$, then $\Gamma^{\mathcal{B}_1 \circ \mathcal{B}_2}_{s_0} \sim \Gamma^{\mathcal{B}'_1 \circ \mathcal{B}'_2}_{s_0}$. \qedwhite
\end{lemma}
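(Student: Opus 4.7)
The plan is to verify the three equivalence conditions and the congruence property separately, with the congruence case absorbing most of the work. All four parts rely on the coinductive shape of Definition~\ref{def:op:bisimilarity}: it suffices to exhibit a candidate relation and show that it is closed under the one-sided unfolding of $\sim$.

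I would first dispatch the equivalence conditions. Reflexivity is immediate because $\approx$ is reflexive (equality of database instances and agreement of a marking with itself on output places both hold trivially), and the constant relation $\{(s,s)\}$ is itself a functional bisimulation. Symmetry follows because both $\approx$ and the two clauses of $\sim$ in Definition~\ref{def:op:bisimilarity} are stated symmetrically. For transitivity I would take the relational composition $\sim\circ\sim$ and show it is a functional bisimulation: given $s_0 \sim s_0' \sim s_0''$ and $s_0 \rightarrow^* (I,m)$, one application of the first bisimulation yields $(I',m')$ reachable from $s_0'$ with $(I,m)\approx(I',m')$, and a second application yields $(I'',m'')$ reachable from $s_0''$ with $(I',m')\approx(I'',m'')$; transitivity of equality on databases and of agreement on output places then gives $(I,m)\approx(I'',m'')$, and the inductive bisimulations compose.

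The congruence case is the real content. A snapshot $(I,m)$ of $\mathcal{B}_1\circ\mathcal{B}_2$ decomposes (because the two nets share type domain, persistence and logic layers, and are glued only along shared boundary places) as a pair $(s_1,s_2)$ of compatible $\mathcal{B}_i$-snapshots, with the restriction of $m$ to the shared boundary places determined uniquely. I would define the candidate relation $R$ between snapshots of $\mathcal{B}_1\circ\mathcal{B}_2$ and $\mathcal{B}'_1\circ\mathcal{B}'_2$ by
\[
   R = \{\,((s_1,s_2),(s_1',s_2')) \mid s_1 \sim s_1' \text{ in } \mathcal{B}_1\text{ vs.\ }\mathcal{B}_1', \; s_2 \sim s_2' \text{ in }\mathcal{B}_2\text{ vs.\ }\mathcal{B}_2'\,\}
\]
and show it is a functional bisimulation. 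A transition in the composed LTS (recalling Section~\ref{sec:execution-semantics}) is either (i) the firing of a transition belonging to $\mathcal{B}_1$'s control layer, (ii) the firing of a transition belonging to $\mathcal{B}_2$'s control layer, or (iii) a creation/consumption of a token at the remaining outer boundary. Cases (i) and (ii) are matched by applying the corresponding component bisimulation to the current $\mathcal{B}_i$-snapshot, while case (iii) is matched trivially because boundary creations/consumptions are the same labelled moves in both composed LTSs.

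The principal obstacle, and the reason the hypothesis quantifies over all $s_0$ on the shared boundary, is the synchronisation across the glued interface. A firing in $\mathcal{B}_1$ can place a token into a shared place (which is an output place of $\mathcal{B}_1$ but an input place of $\mathcal{B}_2$) and can also alter the shared database instance; conversely a firing in $\mathcal{B}_2$ can consume tokens from the shared places and update the database. The resolution is to combine two facts. First, $s_1 \sim s_1'$ forces $m$ and $m'$ to agree (up to ages) on all of $\mathcal{B}_1$'s output places, including every shared boundary place, and forces $I=I'$; so the updated snapshots on the $\mathcal{B}_2$-side remain related in $\approx$ after any $\mathcal{B}_1$-move. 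Second, the hypothesis gives $\Gamma^{\mathcal{B}_2}_{s_0} \sim \Gamma^{\mathcal{B}_2'}_{s_0}$ for every boundary snapshot, not only the initial one, so the freshly-synchronised snapshot can be re-used as a starting point for the $\mathcal{B}_2$ bisimulation (and symmetrically for $\mathcal{B}_1$ after database changes made by $\mathcal{B}_2$). Putting these together closes $R$ under the unfolding of $\sim$, and coinduction finishes the proof.
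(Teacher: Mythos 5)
The paper states this lemma with no accompanying proof at all --- the statement is simply closed with an open-box symbol and the text moves on --- so there is no official argument to compare yours against; your proposal fills a gap the authors left open. On its own merits, your argument is sound and follows the route one would expect: the equivalence-relation clauses are routine (reflexivity via the identity relation, symmetry from the symmetric phrasing of \cref{def:op:bisimilarity}, transitivity via relational composition and transitivity of $\approx$), and the congruence clause is handled by the standard candidate relation pairing componentwise-bisimilar snapshots, with the quantification over all boundary snapshots $s_0$ doing exactly the work you identify, namely allowing the game to be restarted at the glued interface after each cross-boundary synchronisation. One point deserves slightly more care than your write-up gives it: the paper's $\sim$ is a reachability-style (weak) bisimulation, phrased in terms of $\rightarrow^*$ rather than single steps, so the case analysis should strictly be on decomposing an arbitrary run of $\mathcal{B}_1 \circ \mathcal{B}_2$ into maximal segments of $\mathcal{B}_1$-moves, $\mathcal{B}_2$-moves and outer-boundary moves, matching each segment with the corresponding component bisimulation and re-synchronising in between --- your single-transition case split is the right idea but needs this segment-wise repackaging, and the re-synchronisation must also account for the shared database instance (which $\approx$ forces to be equal, so this goes through). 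With that adjustment your proof is complete and, as far as I can tell, more detailed than anything the authors provide.
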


The following lemma means that it makes sense to ask the question if the optimized version of an IPCG is bisimilar to the original IPCG or not.

\begin{lemma}
	\label{thm:simulation-boundary}
	Let $G$ and $G'$ be IPCGs. 
	For each optimisation rewrite rule $G \Rightarrow G'$, $\sem{G}$ and $\sem{G'}$ have the same boundary.
	\qedwhite
\end{lemma}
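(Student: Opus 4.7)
The plan is to proceed by case analysis over the finite list of optimization rewrite rules, showing that each rule preserves the boundary configuration of the associated open timed db-net. This reduces to checking that the contracts on the interface nodes shared between the left and right hand sides of each rule coincide.

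First, I would recall that by the interpretation in \cref{sec:interpretation-atomic}, for a single pattern $p$ of cardinality $k{:}m$, the boundary of $\sem{p}$ is $\bigotimes_{i=1}^{k} \iC_i(p)_{\EL} \to \bigotimes_{j=1}^{m} \oC_j(p)_{\EL}$. When patterns are composed along the edges of the graph via $\fatsemi$ and $\otimes$, the boundary places corresponding to internal edges get identified and thus disappear from the outer boundary. Consequently, the boundary of $\sem{G}$ is determined entirely by the contract element sets at those ports in $G$ that have no adjacent pattern, i.e.\ the \enquote{outer} interface of $G$. In the rewrite rules, this outer interface corresponds precisely to the nodes preserved by the DPO construction --- the shared (green) nodes in $K$ --- through which the rewritten fragment connects to the surrounding clouds $SG_i$.

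Next, I would go through each rule in turn. For each $L \hookleftarrow K \hookrightarrow R$ one verifies that the nodes in $K$ keep their contracts intact from $L$ to $R$ (immediate from the DPO construction and the explicit form of each rule), so that the interface contracts of $L$ and $R$ coincide. When a rule inserts new nodes at the outer interface --- as in Early-Filter, Early-Mapping, Early-Aggregation, Early-Claim Check, Early-Split, Sequence-to-Parallel, Heterogeneous Parallelization, Ignore Failing Endpoints and Reduce Requests --- the applicability conditions of the rule explicitly demand that the new patterns are contract-compatible with the surrounding context, so that the outermost contract elements exposed to the clouds remain unchanged. For the Redundant Sub-process rule, the additional Content Enricher and Content Filter nodes inserted at the interface add and subsequently remove the same context element, so the element-level interface is preserved. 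For Combine Sibling Patterns, the removed branch $SG_2$ is isomorphic to the remaining branch $SG_1$ and hence contributes identical contract data at the shared interface.

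The main obstacle is simply the bookkeeping of the case-by-case enumeration, but each case reduces to a routine inspection of the corresponding rule figure, once the observation about peripheral interface preservation has been made. Since the list of rules is finite and explicitly presented, the analysis terminates; combining all cases, $\sem{G}$ and $\sem{G'}$ have identical input and output boundary configurations, which yields the lemma and sets up the subsequent appeal to \cref{thm:simulation-congruence} for the global bisimilarity argument.
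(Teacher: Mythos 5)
The paper states this lemma with no proof at all --- the \verb|\qedwhite| is attached directly to the statement and no proof environment follows --- so there is nothing in the paper to compare your argument against line by line. Your proposal is a reasonable way to fill that gap, and it is consistent with how the lemma is used later (namely, to make the bisimilarity question in \cref{thm:change-correctness} well-posed for the rewritten fragments). The two load-bearing observations you make are the right ones: first, that internal edges disappear from the boundary under $\fatsemi$, so the boundary of the interpretation of a rule's left or right hand side is determined by the element-level contracts at the ports facing the surrounding clouds, i.e.\ at the preserved interface $K$ of the DPO rule; second, that rules which insert fresh patterns adjacent to that interface (multicast/join/fork/filter/enricher wrappers) only preserve the boundary because their outward-facing contracts are, by construction, copied or adapted from the patterns they displace --- the paper itself signals this in \cref{ex:IPCG} (\enquote{the inbound and outbound contracts of $MC$ are adapted to fit into the composition}). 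The one place where your sketch leans on an assertion rather than an inspection is the Redundant Sub-process rule: you claim the added context element is invisible at the element-level interface because the $CE$/$CF$ pair adds and strips it, but strictly the input boundary of the right hand side is the input contract of the inserted $CE$ nodes, so one must check that $\iC(CE)$ agrees with the original $\iC$ of $SG_1$'s entry patterns on $\EL$ (and dually for $CF$ at the outputs); that is exactly the kind of per-rule bookkeeping your case analysis promises, so it is a matter of carrying it out rather than a flaw in the approach.
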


We are now ready to state and prove our correctness theorem.

\begin{theorem}[Change Correctness]
  \label{thm:change-correctness}
	Let $G$ and $G'$ be IPCGs such that $G \Rightarrow G'$ is an optimization rule.
	For every initial snapshot $s_0$ of both $\sem{G}$ and $\sem{G'}$, with tokens in input places only, we have $\Gamma^{\sem{G}}_{s_0} \sim \Gamma^{\sem{G'}}_{s_0}$.
\end{theorem}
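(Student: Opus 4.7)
The strategy is to reduce the theorem to a case analysis over the optimization rules, using compositionality to localize the problem. By DPO rewriting, $G \Rightarrow G'$ preserves a context around the match, so both $\sem{G}$ and $\sem{G'}$ decompose via the parallel ($\otimes$) and sequential ($\fatsemi$) compositions of open timed db-nets as the interpretation of the context combined with $\sem{L}$ and $\sem{R}$ respectively, where $L$ and $R$ are the left- and right-hand sides of the rewrite rule. \Cref{thm:simulation-boundary} ensures the boundaries of $\sem{L}$ and $\sem{R}$ coincide, so that these compositions are well-formed, and \Cref{thm:simulation-congruence} then reduces the goal to showing $\Gamma^{\sem{L}}_{s_0} \sim \Gamma^{\sem{R}}_{s_0}$ for each rewrite rule and every initial snapshot with tokens only in input places.

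For each rewrite rule, I would exhibit an explicit functional bisimulation. The \emph{data reduction} strategies (early-filter, early-mapping, early-aggregation, early-claim-check, early-split) admit a relatively direct bisimulation that identifies snapshots agreeing on the database instance and on tokens at the shared output places. The transfer property follows from the side-conditions of each rule, e.g.\ that the moved filter only targets data elements unused by the patterns it moves past, or that claim-checked payloads are subsequently recovered intact by a content enricher. The \emph{process simplification} rules (combine sibling patterns, redundant sub-process removal) are more delicate: their right-hand sides introduce auxiliary patterns (content enrichers, content-based routers, content filters) that encode and then decode the provenance of a message. The bisimulation witnesses that this tagging produces the same observable output marking as the original multi-branch execution --- the tag added by a content enricher is precisely stripped by the matching content filter, and the router uses the tag to route each token back to the branch that would have handled it in $\sem{L}$.

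The main obstacle lies in the \emph{parallelization} rules (sequence to parallel, merge parallel, heterogeneous parallelization), and to a lesser extent in the \emph{reduce interaction} rules. For parallelization, the left- and right-hand sides differ structurally: a linear sequence is replaced by $n$ parallel branches mediated by fork, join router, and (for the heterogeneous case) aggregator patterns. A direct marking-by-marking bisimulation fails because parallel branches can fire in any interleaving, producing intermediate markings with no counterpart on the sequential side. Following the strategy illustrated by the worked example in \cref{fig:op:bisimilarity_example}, I would build a \emph{delayed} bisimulation: partial states of the parallel net are related only to the sequential net's state just before the fork, and the relation is re-established once the join (or aggregator) has collected tokens from all branches. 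The side-effect freeness hypothesis together with the allowance for age-variable disagreement in \cref{def:op:bisimilarity} ensures that the database instance is independent of the interleaving, so the final boundary tokens agree up to ages. For \emph{reduce interaction}, the configuration parameter tracking failed attempts or throttling rates must be incorporated into the relation, and equivalence holds up to the externally observable send/receive pattern. Once each case has been verified, the theorem follows by congruence.
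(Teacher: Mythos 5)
Your proposal is correct and follows essentially the same route as the paper: localize via the congruence lemma (\cref{thm:simulation-congruence}) together with boundary preservation (\cref{thm:simulation-boundary}), then verify each optimization rule by a case-by-case bisimulation argument, using the same key observations per case (the filter only removes unused elements, the tag-and-strip encoding for process simplification, side-effect freeness and reordering for parallelization, and age-insensitivity for reduce requests). The only cosmetic difference is that your ``delayed'' bisimulation for the parallelization rules is already accommodated by the paper's multi-step definition of functional bisimulation, which only compares output places under $\rightarrow^*$.
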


\begin{proof}
  We verify the statement for each optimization $G \Rightarrow G'$. 
  By~\cref{thm:simulation-congruence}, it is enough to show that the parts of the interpretation of the graphs which are actually modified by the rewrite are bisimilar.

\labelsubtitle{Redundant Sub-Processes (\cref{opt:redundant-subprocess})}
Each move on the left hand side of the optimization rule in~\cref{fig:redundant_control_flow_v2} (on~\cpageref{fig:redundant_control_flow_v2}) either moves tokens into a cloud, out of a cloud, or inside a cloud.
In the first two cases, this can be simulated by the right hand side by moving the token through the CE or CBR and CF respectively followed by a move into or out of the cloud, while in the latter case the corresponding token can be moved in $SG'_1$ up to the isomorphism between $SG'_1$ and the cloud on the left.

Similarly, a move on the right hand side into or out of the cloud can easily be simulated on the left hand side. Suppose a transition fires in $SG'_1$. Since all guards in $SG'_1$ have been modified to require all messages to come from the same enriched context, the corresponding transition can either be fired in $SG_1$ or $SG_2$.

\labelsubtitle{Combining Sibling Patterns (\cref{opt:combine-siblings})}
Suppose the left hand side of~\cref{fig:combine_sibling_nodes_v2_b} (on~\cpageref{fig:combine_sibling_nodes_v2_b}) takes a finite number of steps and ends up with $m(P_2)$ tokens in $P_2$ and $m(P_3)$ tokens in $P_3$.
There are three possibilities: (i) there are tokens of the same color in both $P_2$ and $P_3$; or (ii) there is a token in $P_2$ with no matching token in $P_3$; or (iii) there is a token in $P_3$ with no matching token in $P_2$.
For the first case, the right hand side can simulate the situation by emulating the steps of the token ending up in $P_2$, and forking it in the end.
For the second case, the right hand side can simulate the situation by emulating the steps of the token ending up in $P_2$, then forking it, but not moving one copy of the token across the boundary layer in the interpretation of the fork pattern.
The third case is similar, using that $SG_2$ is isomorphic to $SG_1$.

The right hand side can easily be simulated by copying all moves in $SG_1$ into simultaneous moves in $SG_1$ and the isomorphic $SG_2$.

\labelsubtitle{Early-Filter (\cref{opt:early-filter})}
By construction, the filter removes the data not used by $P_2$, so if the left hand side of~\cref{fig:early_filter_v2_insert} (on~\cpageref{fig:early_filter_v2_insert}) moves a token to $P_2$, then the same token can be moved to $P_2$ on the right hand side and vice versa.

\labelsubtitle{Early-Mapping (\cref{opt:early-mapping})}
Suppose the left hand side of~\cref{fig:early_mapping_v2} (on~\cpageref{fig:early_mapping_v2}) moves a token to $P_4$.
The same transitions can then move the corresponding token to $P_4$ on the right hand side, with the same payload, by construction.
Similarly, the right hand side can be simulated by the left hand side.

\labelsubtitle{Early-Aggregation (\cref{opt:early-aggregation})}
The interpretation of the subgraph $SG_2$ is equivalent to the interpretation of $P_1$ followed by $SG'_2$ followed by $P_3$, by construction in~\cref{fig:early_aggregate} (on~\cpageref{fig:early_aggregate}), hence the left hand side and the right hand side are equivalent.

\labelsubtitle{Early Claim Check (\cref{opt:early-claim-check})}
Since the claim check CC + CE in~\cref{fig:early_claim_v2} (on~\cpageref{fig:early_claim_v2}) simply stores the data and then adds it back to the message in the CE step, both sides can simulate each other.

\labelsubtitle{Early-Split (\cref{opt:early-split})}
By assumption, $P_1$ followed by $SSQ_1$ ($P_1$ followed by $SSQ_1$ followed by $P_2$ for the inserted early split in~\cref{fig:early_split_v2} (on~\cpageref{fig:early_split_v2})) is equivalent to $SSQ_1$ followed by $P_1$, from which the claim immediately follows.

\labelsubtitle{Sequence to Parallel (\cref{opt:seq-to-parallel}), Merge Parallel (\cref{opt:merge-parallel})}
The left hand side of~\cref{fig:sequence_to_parallel_v2} (on~\cpageref{fig:sequence_to_parallel_v2}) can be simulated by the right hand side by copying each move in $SSQ_1$ by a move each in $SSQ'_1$ to $SSQ'_n$.
If the right hand side moves a token to an output place, it must move a token through some $SSQ'_i$, and the same moves can move a token through $SSQ_1$ in the left hand side.

The same reasoning applies to the Merge Parallel transformation in \cref{fig:op:merge_parallel_v2}, but in reverse.

\labelsubtitle{Heterogeneous Sequence to Parallel (\cref{opt:hetero-parallel})}
By assumption, the sub-sequences $SSQ_1$ to $SSQ_n$ are side-effect free.
The right hand side of~\cref{fig:heterogeneous_sequence_to_parallel_v2} (on~\cpageref{fig:heterogeneous_sequence_to_parallel_v2}) can simulate the left hand side as follows: if the left hand side moves a token to an output place, it must move it through all of $SSQ_1$ to $SSQ_n$.
The right hand side can make the same moves in the same order.
For the other direction, the left hand side can reorder the moves of the right hand side to first do all moves in $SSQ_1$, then in $SSQ_2$ and so on.
This is still a valid sequence of steps because the subseqences can be parallelized.

\labelsubtitle{Ignore, try failing endpoints (\cref{opt:ignore-failing-endpoints})}
Suppose the left hand side of~\cref{fig:ignore_failing_endpoints_v2} takes a finite amount of steps to move a token to an output place in $SG_1$, however, the transition to $SG_{ext}$ does not produce a result due to an exceptional situation (\ie no change of the marking in $cf$).
Correspondingly, the right hand side moves the token, however, without the failing, and thus read-only transition to $SG_{ext}$, which ensures the equality of the resulting tokens on either side.
Under the same restriction that the no exception context is returned from $SG_{ext}$, the right hand side can simulate the left hand side accordingly.

The situation for try failing endpoints in \cref{fig:ignore_failing_endpoints_v2_b} is the same in reverse.

\labelsubtitle{Reduce requests (\cref{opt:reduce-requests})}
Since the only difference between the left hand side and the right hand side is the slow-down due to the insertion of the pattern $CS$, and simulation does not take the age of messages into account, the left hand side can obviously simulate the right hand side and vice versa.
\qedwhite

\end{proof}

\subsection{Discussion}
\Cref{thm:change-correctness} makes precise in which sense the proposed optimization rules are correct, in the sense that they preserve the \enquote{meaning} of integration patterns as defined in our translation to timed db-nets.
We emphasise that this is a property proven by analysing the proposed optimization rules, and not a fact that we expect to be decidable for arbitrary rewrite rules.
By interpreting integration patterns as timed db-nets, we however get access to a framework where it makes sense to prove individual optimizations correct in the above sense, since there is a formal definition of the behaviour of (the interpretation of) the integration pattern.


\section{Evaluation}
\label{sec:evaluation}
In this section, (a) we evaluate the impact of optimization strategies (\ie OS-1--3) from \cref{sec:stategies} that are most relevant to this work, and (b) we study ReCO for real-world integration processes.

\subsection{Optimization Strategies}
For (a) we quantitatively analyze the effect of optimizations on two catalogs of integration processes regarding improvements of model complexity, throughput and latency.
The catalogs have a two year difference to be able to study whether improvements were found by integration experts within that time span by themselves.

Then, we revisit our motivating example from \cref{sec:ReCO} and study a more complex integration process regarding applicable optimization strategies.

\subsubsection{Quantitative Analysis}
\label{sub:analysis}
We applied the optimization strategies OS-1--3 to $627$ integration scenarios from the 2017 standard content of the SAP CPI (called ds17), and compared with $275$ scenarios from 2015 (called ds15).
Our goal is to show the applicability of our approach to real-world integration scenarios, as well as the scope and trade-offs of the optimization strategies.
The comparison with a previous content version features a practical study on content evolution.
To analyze the difference between different scenario domains, we grouped the scenarios into the following categories~\cite{Ritter201736}:
On-Premise to Cloud (OP2C),
Cloud to Cloud (C2C), and
Business to Business (B2B).
Since hybrid integration scenarios such as OP2C target the extension or synchronization of business data objects, they are usually less complex.
In contrast native cloud application scenarios such as C2C or B2B mediate between several endpoints, and thus involve more complex integration logic~\cite{Ritter201736}.
The process catalog also contained a small number of simple Device to Cloud scenarios; none of them could be improved by our approach.

\labeltitle{Setup: Construction and analysis of IPCGs} For the analysis, we constructed an IPCG for each integration scenario following the workflow sketched in \cref{fig:evaluation_pipeline}.
\begin{figure}[bt]
	\centering
	\includegraphics[width=1\columnwidth]{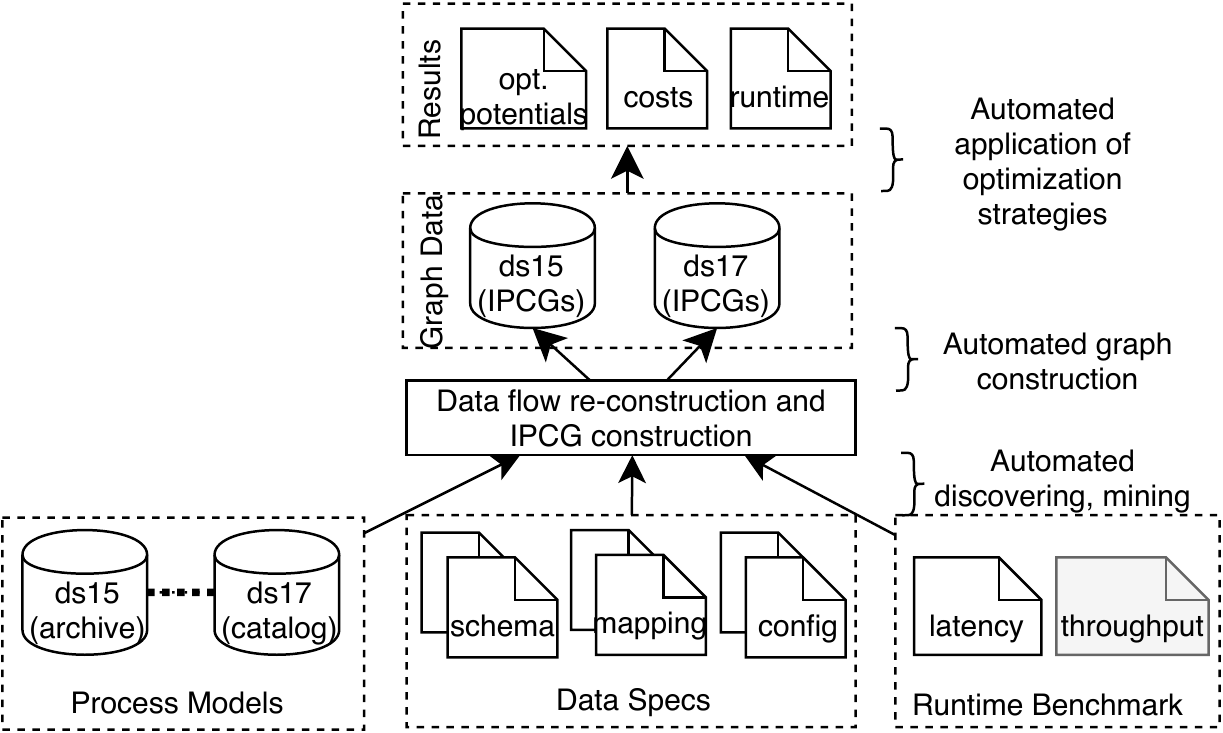}
	\vspace{-0.3cm}
	\caption{Pattern composition evaluation pipeline.}
	\label{fig:evaluation_pipeline}
\end{figure}
Notably, the integration scenarios are stored as process models in a BPMN-like notation~\cite{ritter2016exception}. 
The process models reference data specifications such as schemas (\eg XSD, WSDL), mapping programs, selectors (\eg XPath) and configuration files.
For every pattern used in the process models, runtime statistics are available from benchmarks~\cite{ritter2016benchmarking}.
The data specifications are picked up from the 2015 content archive and from the current 2017 content catalog, while the runtime benchmarks are collected using the open-source integration system \emph{Apache Camel}~\cite{Ibsen:2010:CA:1965487}\footnote{All measurements were conducted on a HP Z600 workstation, equipped with two Intel X5650 processors clocked at 2.67GHz with a $12$ cores, 24GB of main memory, running a 64-bit Windows 7 SP1 and a JDK version $1.7.0$, with $2$GB heap space.} as used in SAP CPI.
The mapping and schema information is automatically mined and added to the patterns as contracts, and the rest of the collected data as pattern characteristics.
For each integration scenario and each optimization strategy, we determine if the strategy applies, and if so, if the cost is improved.
This analysis runs in about two minutes in total for all 902 scenarios on our workstation.

We now discuss the improvements for the different kinds of optimization strategies identified in \cref{sec:stategies}.

\labeltitle{Improved Model Complexity: Process Simplification (OS-1).}
The relevant metric for the process simplification strategies from OS-1 is the average reduction in model complexity, 
shown in \cref{fig:model_complexity}.

\labelsubtitle{Results}
Although all scenarios were implemented by integration experts, who are familiar with the modeling notation and the underlying runtime semantics, there is still a small amount of patterns per scenario that could be removed without changing the execution semantics.
On average, the content reduction for the content from 2015 and 2017 was $1.47$ and $2.72$ patterns/IPCG, respectively, with significantly higher numbers in the OP2C domain.

\begin{figure}[bt]
	\centering
	\includegraphics[width=0.9\columnwidth]{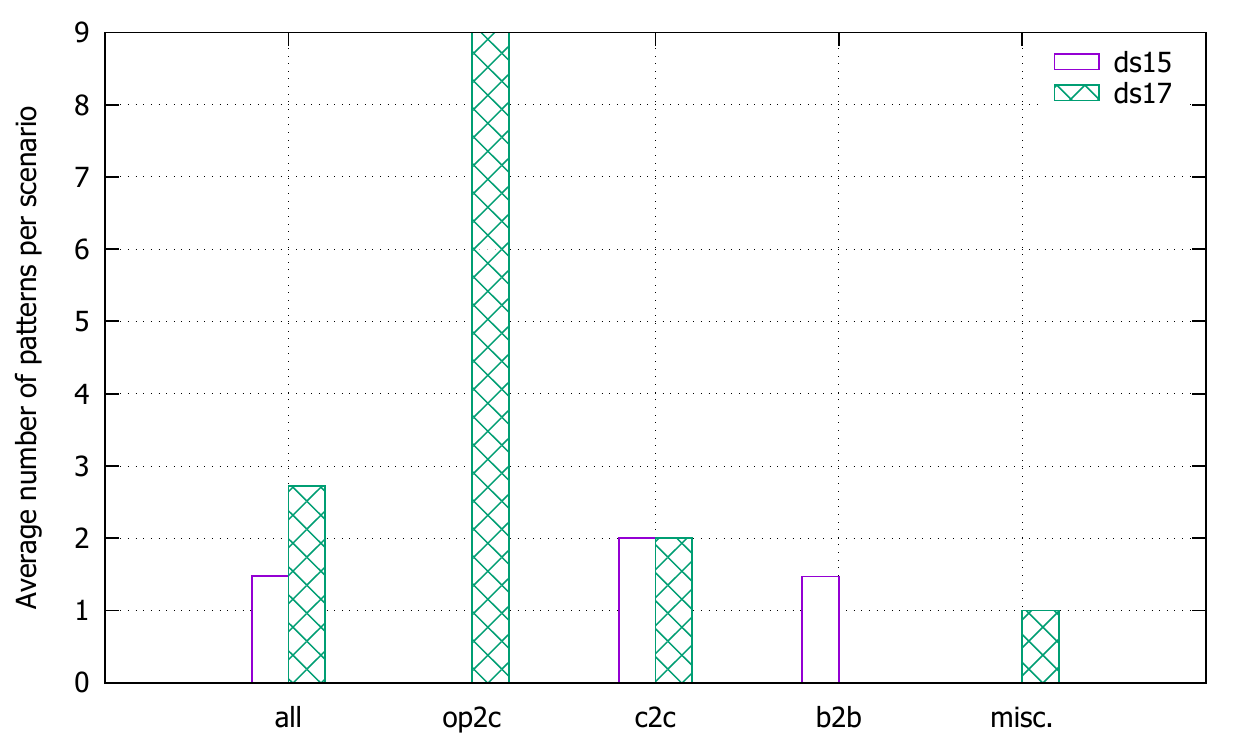}
	\vspace{-0.3cm}
	\caption{Pattern reduction per scenario.}
	\label{fig:model_complexity}
\end{figure}

\begin{figure*}[bt]
	\begin{center}$
		\begin{array}{cc}
		\subfigure[Used vs. unused data elements]{\label{fig:fields}\includegraphics[width=0.5\linewidth]{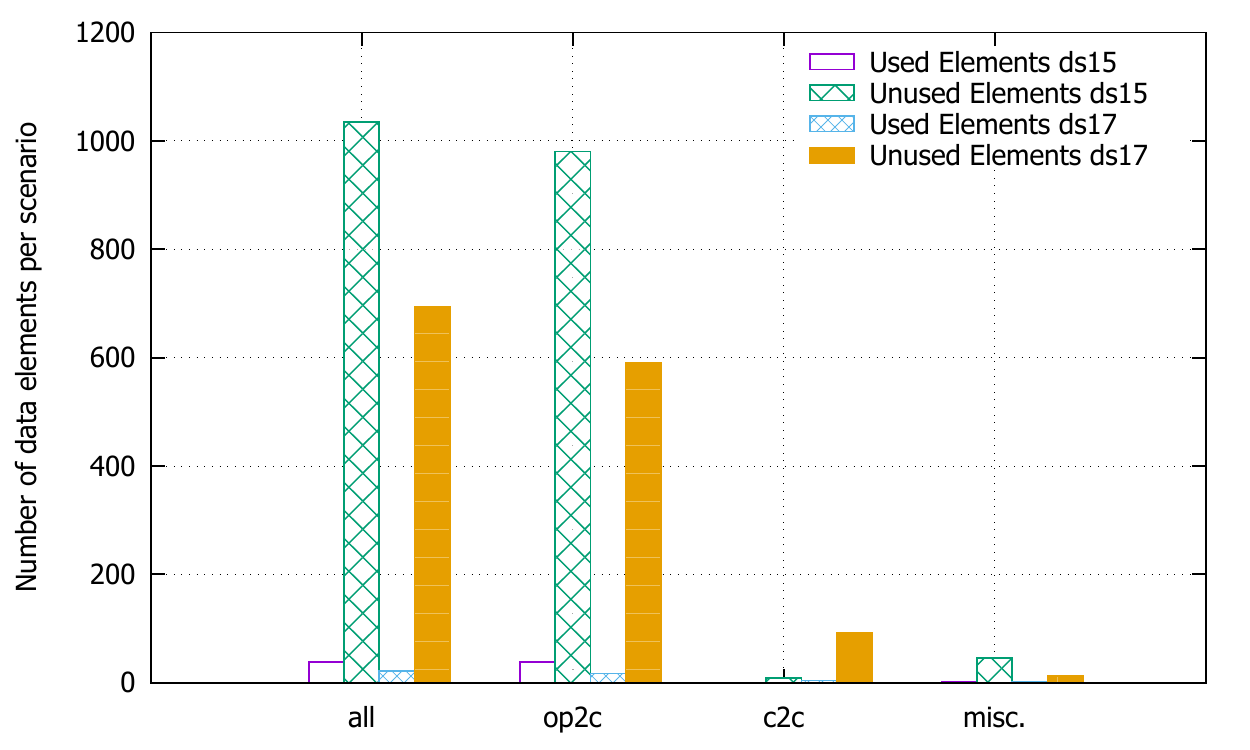}}
		\subfigure[Savings in abstract costs on unused data elements]{\label{fig:field_savings}\includegraphics[width=0.5\linewidth]{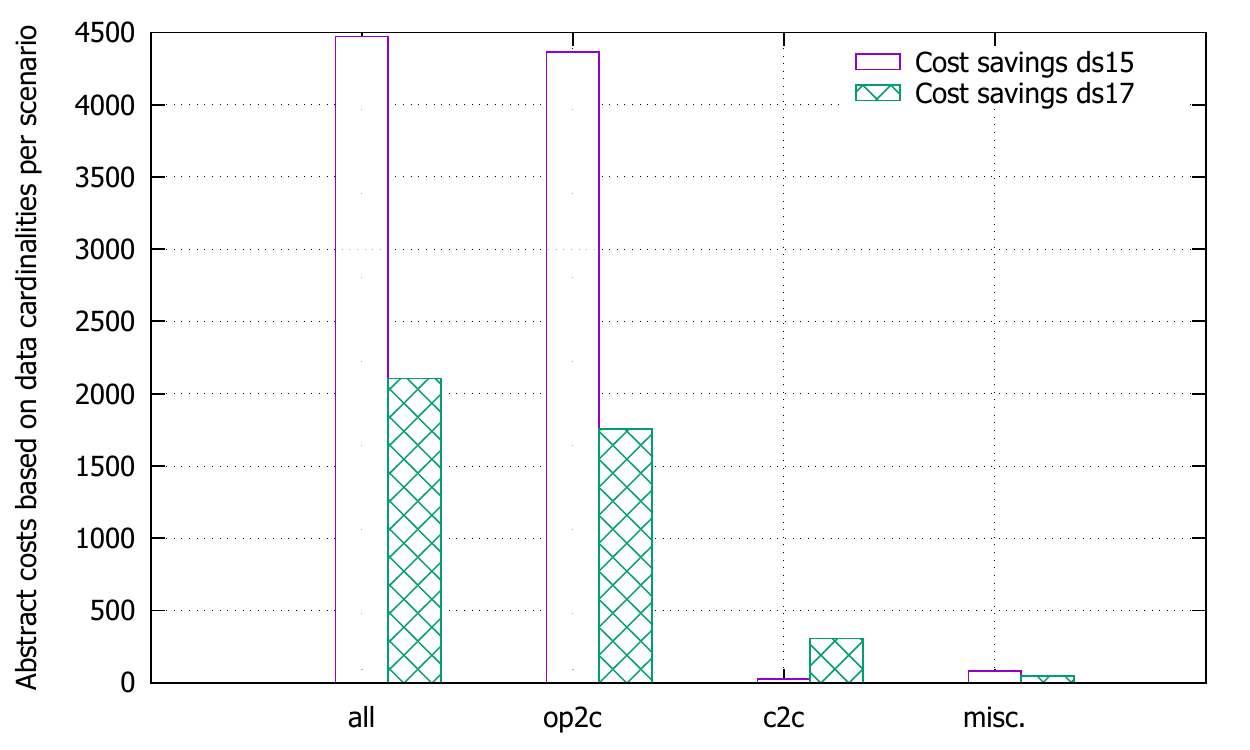}}
		\end{array}$
	\end{center}
	\vspace{-0.3cm}
	\caption{Unused elements in integration scenarios.}
	\label{fig:}
\end{figure*}

\labelsubtitle{Conclusions} (1) Even simple process simplifications are not always obvious to integration experts in scenarios represented in a control-flow-centric notation (\eg current SAP CPI does not use BPMN Data Objects to visualize the data flow); and (2) the need for process simplification does not seem to diminish as experts gain more experience.

\labeltitle{Improved Bandwidth: Data Reduction (OS-2).} Data reduction impacts the overall bandwidth and message throughput~\cite{DBLP:conf/debs/RitterDMR17}.
To evaluate data reduction strategies from OS-2, we leverage the data element information attached to the IPCG contracts and characteristics, and follow their usages along edges in the graph, similar to \enquote{ray tracing} algorithms~\cite{glassner1989introduction}.
We collect the data elements that are used or not used, where possible --- we do not have sufficient design time data to do this for user defined functions or some of the message construction patterns, such as request-reply.
Based on the resulting data element usages, we calculate two metrics: the comparison of used vs. unused elements in \cref{fig:fields}, and the savings in abstract costs on unused data elements in \cref{fig:field_savings}.

\labelsubtitle{Results}
There is a large amount of unused data elements per scenario for the OP2C scenarios; these are mainly web service communication and message mappings, for which most of the data flow can be reconstructed.
This is because the predominantly used EDI and SOA interfaces (\eg SAP IDOC, SOAP) for interoperable communication with on-premise applications define a large set of data structures and elements, which are not required by the cloud applications, and vice versa.
In contrast, C2C scenarios are usually more complex, and mostly use user defined functions to transform data, which means that only a limited analysis of the data element usage is possible.

When calculating the abstract costs for the scenarios with unused fields, there is an immense cost reduction potential for the OP2C scenarios as shown in \cref{fig:field_savings}.
This is achieved  by adding a content filter to the beginning of the scenario, which removes unused fields.
This results in a cost increase $|d_{in}|= \#$unused elements for the content filter, but reduces the cost of each subsequent pattern up to the point were the elements are used.

\labelsubtitle{Conclusions} (3) Data flows can best be reconstructed when design time data based on interoperability standards is available; and (4) a high number of unused data elements per scenario indicates where bandwidth reductions are possible.

\labeltitle{Improved Latency: Parallelization (OS-3).}  For the sequence-to-parallel optimization strategies from OS-3, the relevant metric is the processing latency of the integration scenario.
Because of the uncertainty in determining whether a parallelization optimization would be beneficial, we first report on the classification of parallelization candidates in \cref{fig:op1_genereal}.
We then report both the improvements according to our cost model in \cref{fig:op1_genereal_costs}, as well as the actual measured latency in \cref{fig:op1_scenario_costs}.

\begin{figure*}[bt]
	\begin{center}$
		\begin{array}{cc}
		\subfigure[Parallelization scenario candidates]{\label{fig:op1_genereal}\includegraphics[width=0.33\linewidth]{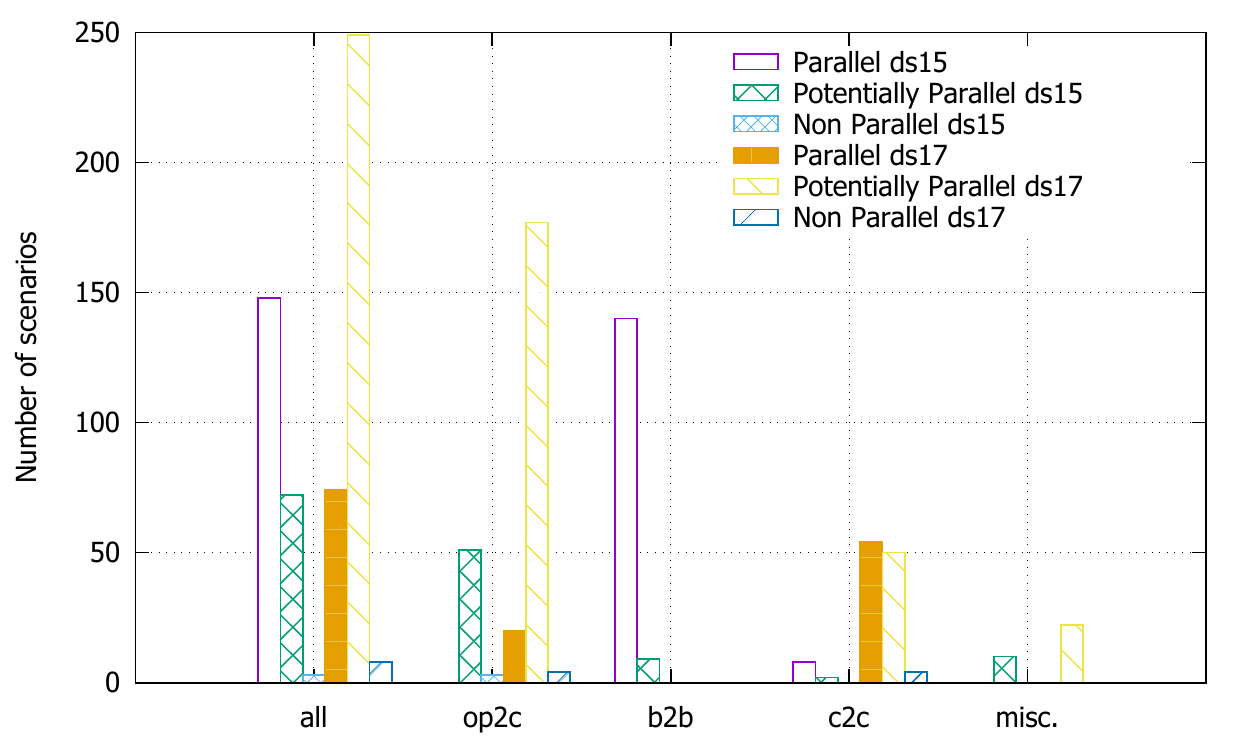}}
		\subfigure[Optimized scenarios based on costs]{\label{fig:op1_genereal_costs}\includegraphics[width=0.33\linewidth]{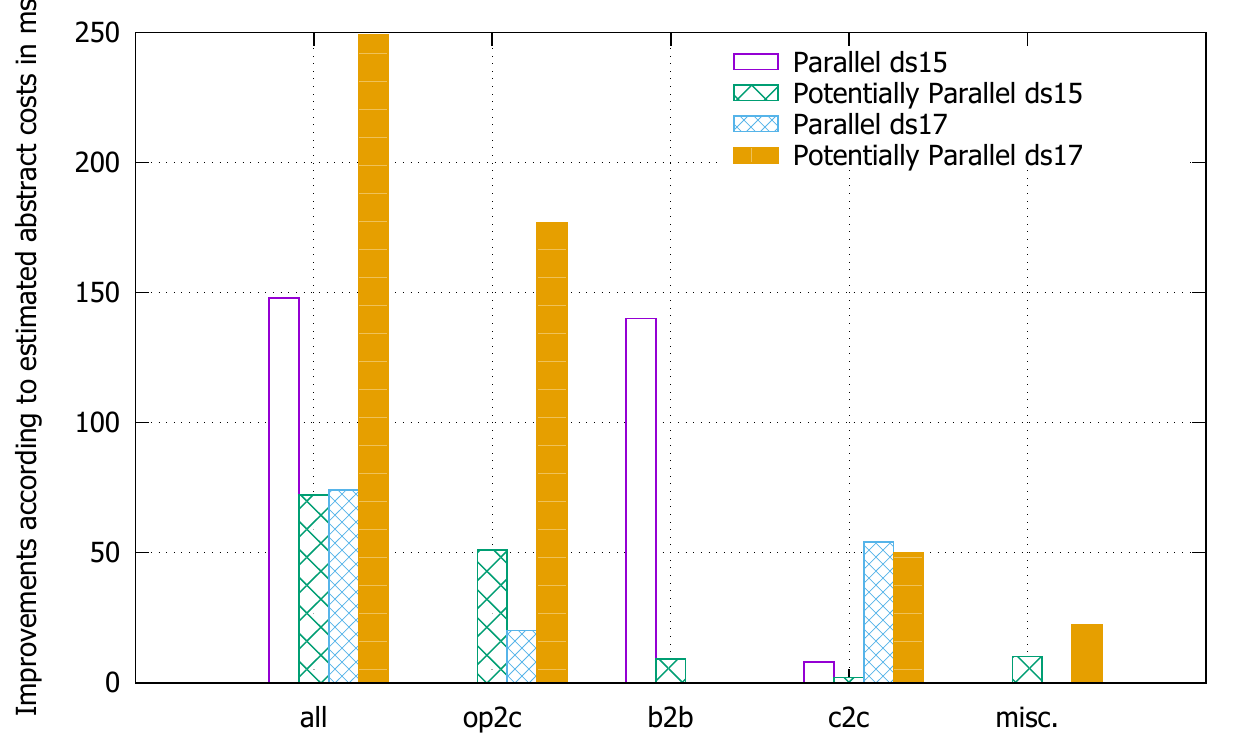}}
		\subfigure[Actual latency improvements]{\label{fig:op1_scenario_costs}\includegraphics[width=0.33\linewidth]{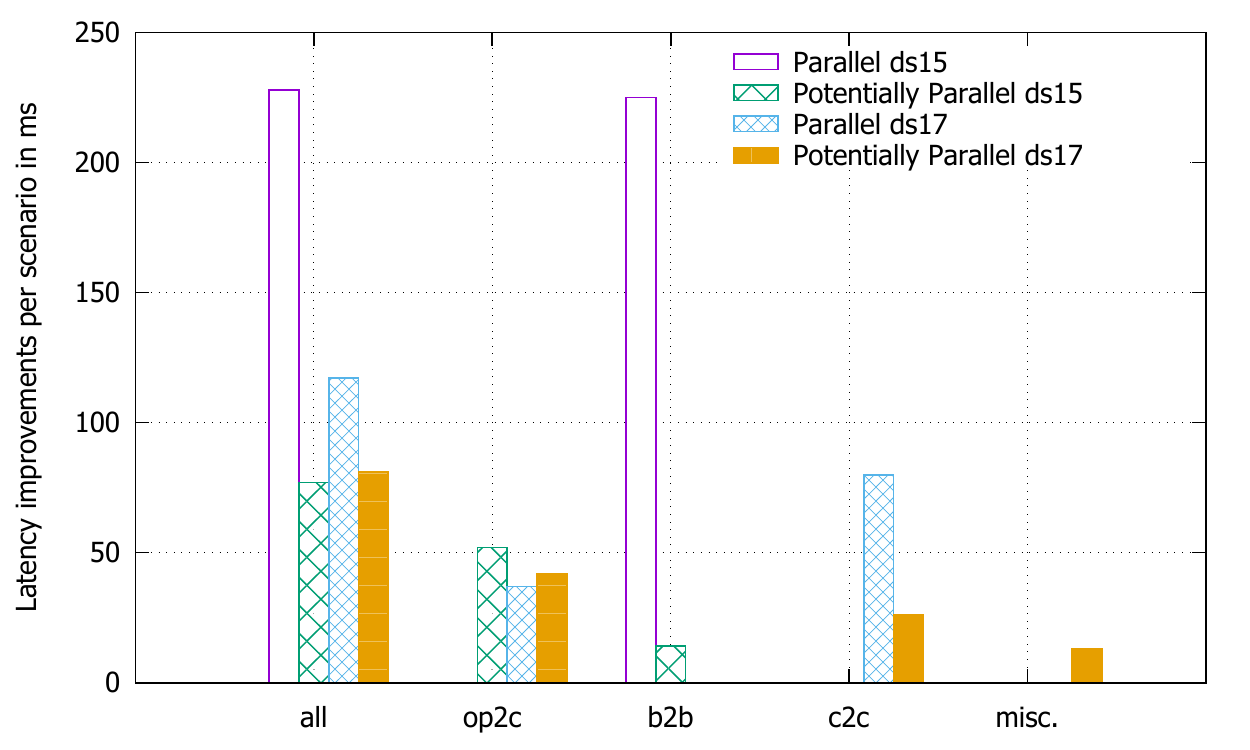}}
		\end{array}$
	\end{center}
	\vspace{-0.3cm}
	\caption{OS-3 \enquote{Sequence to parallel} optimization candidates on (a) integration flows, (b) optimization selection based on abstract cost model, and (3) actual latency improvements.}
	\label{fig:parallel_stages}
\end{figure*}

\labelsubtitle{Results}
Based on the data element level, we classify scenario candidates as \texttt{parallel}, definitely \texttt{non parallel}, or \texttt{potentially parallel} in \cref{fig:op1_genereal}.
The uncertainty is due to sparse information.
From the 2015 catalog, 81\% of the scenarios are classed as \texttt{parallel}, or \texttt{potentially parallel}, while the number for the 2017 catalog is 53\%.
In both cases, the OP2C and B2B scenarios show the most improvement potential.
\Cref{fig:op1_genereal_costs} shows the selection based on our cost model, which supports the pre-selection of all of these optimization candidates.
The actual, average improvements per impacted scenario are shown in~\cref{fig:op1_scenario_costs}.
The average improvements of up to $230$ milliseconds per scenario must be understood in the context of the average runtime per scenario, which is 1.79 seconds.
We make two observations: (a) the cost of the additional fork and join constructs in Java are high compared to those implemented in hardware~\cite{DBLP:conf/debs/RitterDMR17}, and the improvements could thus be even better, and (b) the length of the parallelized pattern sequence is usually short: on average 2.3 patterns in our scenario catalog.

\labelsubtitle{Conclusions} (5) The parallelization requires low cost fork and join implementations; and (6) better runtime improvements might be achieved for scenarios with longer parallelizable pattern sequences.

\subsubsection{Case Studies}
\label{sub:cases}

We apply, analyze and discuss the proposed optimization strategies in the context of two case studies: the Replicate Material on-premise to cloud scenario from \cref{fig:pn:c4c} in \cref{sec:ReCO}, as well as an SAP eDocument invoicing cloud to cloud scenario.
These scenarios are part of the SAP CPI standard, and thus several users (\ie SAP's customers) benefit immediately from improvements.
For instance, we additionally implemented a content monitor pattern~\cite{Ritter201736} that allowed analysis of the SAP CPI content.
This showed the Material Replicate scenario was used by $546$ distinct customers in $710$ integration processes copied from the standard into their workspace --- each one of these users is affected by the improvement.

\labeltitle{Replicate Material (revisited)}
Recall from \cref{sec:ReCO} that the Replicate Material scenario is concerned with enriching and translating messages coming from a CRM before passing them on to a Cloud for Customer service, as in \cref{fig:pn:c4c}.
As already discussed, the content enricher and the message translator can be parallelized according to the sequence to parallel optimization from OS-3.
The original and resulting IPCGs are shown in \cref{fig:cpn_pattern_composition_summarized_v2a,fig:cpn_pattern_composition_summarized_v2b}.
No throughput optimizations apply.

\labelsubtitle{Latency improvements}
The application of this optimization can be considered, if the latency of the resulting parallelized process is smaller than the latency of the original process, i.e.\ if
\begin{align*}
  \cost(MC) & {} + \max(\cost(CE),\cost(MT)) \\ & {} + \cost(JR) + \cost(AGG) \\ &\qquad\qquad < \cost(CE)+\cost(MT)
\end{align*}
Subtracting $\max(\cost(CE),\cost(MT))$ from both sides of the inequality, we are left with
\begin{align*}
  \cost(MC) + {} &  \cost(JR) + \cost(AGG) \\ & \qquad\qquad < \min(\cost(CE),\cost(MT))
\end{align*}
If we assume that the content enricher does not need to make an external call, its abstract cost becomes
\[
\cost(CE)(|d_{in}|, |d_r|) = |d_{in}|,
\]
and plugging in experimental values from a pattern benchmark~\cite{ritter2016benchmarking}, we arrive at the inequality (with latency costs in seconds)
\[
0.01 + 0.002 + 0.005 \not< \min(0.005,0.27)
\]
which tells us that the optimization is not beneficial in this case --- the additional overhead is larger than the saving.
However, if the content enricher does use a remote call, $\cost(CE)(|d_{in}|, |d_r|) = |d_{in}| + |d_r|$, and the experimental values now say $\cost(CE) = 0.021$.
Hence the optimization is worthwhile, as
\[
  0.01 + 0.002 + 0.005 < \min(0.021,0.27) \enspace .
\]

\labelsubtitle{Model Complexity}
Following S\'anchez-Gonz\'alez et al.~\cite{sanchez2010prediction}, we measure the model complexity by node count.
In this case, the optimization increases the complexity by 3.

\labelsubtitle{Conclusions}
(7) Pattern characteristics are important when deciding if an optimization should be applied (\eg local vs.\ remote enrichment); and (8) there are conflicts between different objectives, as illustrated by the trade-off between latency reduction and model complexity increase.

\labeltitle{eDocuments: Italy Invoicing.}
The Italian government accepts electronic invoices from companies,
as long as they follow regulations --- they have to be correctly formatted, signed, and not be sent in  duplicate.
Furthermore, these regulations are subject to change.
This can lead to an ad-hoc integration process such as in \cref{fig:invoice_italy} (simplified).
Briefly, the companies' \emph{Fattura Electronica} is used to generate a \emph{factorapa} document with special header fields (\eg \emph{Paese}, \emph{IdCodice}), then the message is signed and sent to the authorities, if it has not been sent previously.
The multiple authorities respond with standard \emph{Coglienza}, \emph{Risposta} acknowledgments, that are transformed to a \emph{SendInvoiceResponse}.
\begin{figure*}[tbh]
    \centering
    \includegraphics[width=1\linewidth]{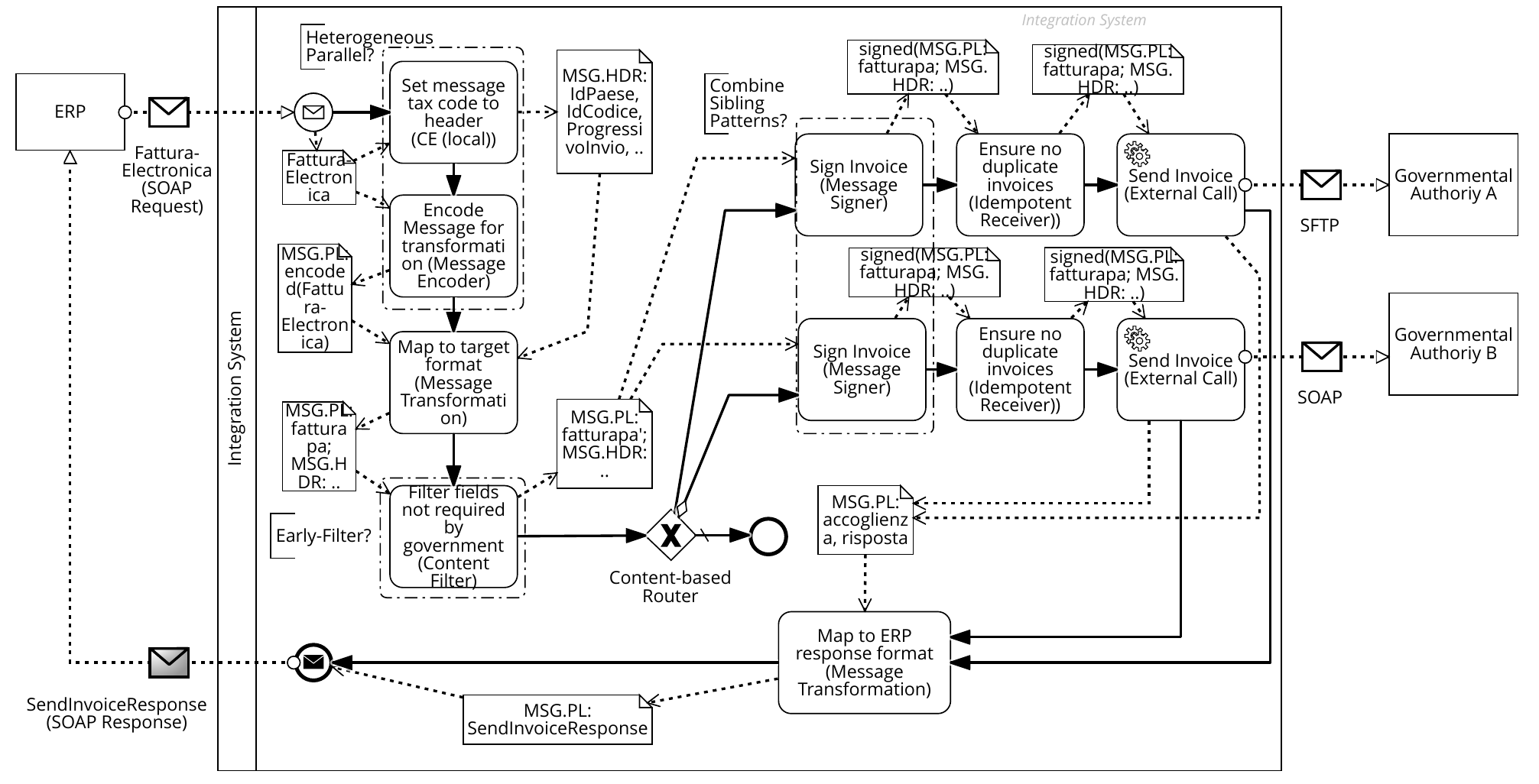}
    \vspace{-.3cm}
    \caption{Country-specific invoicing (potential improvements as BPMN Group)}
    \label{fig:invoice_italy}
\end{figure*}
We transformed the BPMN model to an IPCG, tried to apply optimizations, and created a BPMN model again from the optimized IPCG.

\labelsubtitle{Model Complexity} Our heuristics for deciding in which order to try to apply different strategies are \enquote{simplification before parallelization} and \enquote{structure before data}, since this seems to enable the largest number of optimizations.
Hence we first try to apply OS-1 strategies: the \emph{combine siblings} rule matches the sibling Message Signers, since the preceding content-based router is a fork. (The signer is also side-effect free, so applying this rule will not lead to observably different behavior.)
\begin{figure*}[tbh]
    \centering
    \includegraphics[width=1\linewidth]{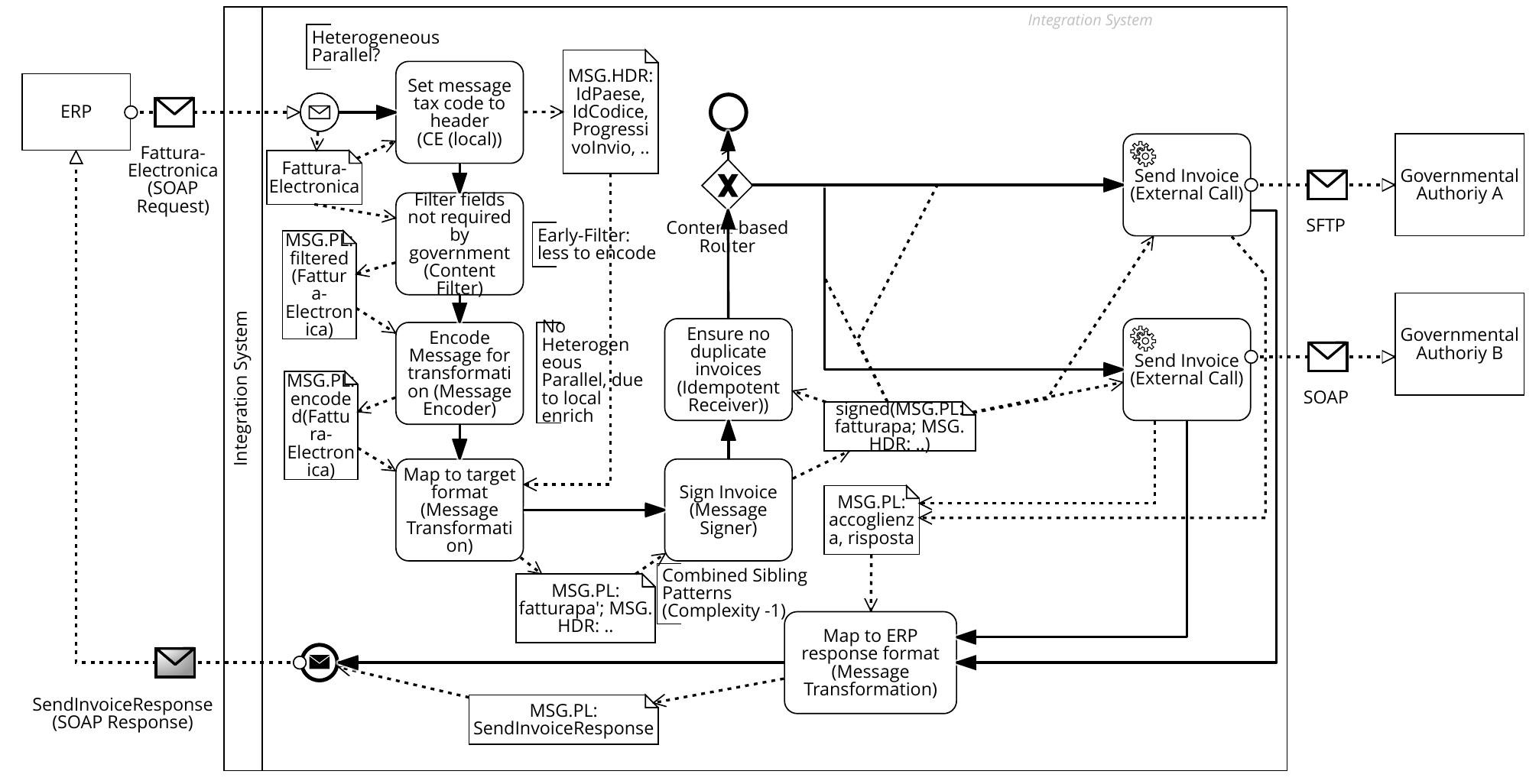}
    \vspace{-.3cm}
    \caption{Invoice processing from \cref{fig:invoice_italy} after application of strategies OS-1--3.}
    \label{fig:invoice_simplified_parallel_data}
\end{figure*}
\labelsubtitle{Latency Improvements} Next we try OS-3 strategies.
Although \emph{heterogeneous parallelization} matches for the CE and the Message Encoder, it is not applied since
\begin{align*}
  \cost(MC) + \cost(JR) + \cost(AGG) \\
    \qquad\qquad\qquad \not< \min(\cost(CE), \cost(ME)),
\end{align*}
\ie the overhead is too high, due to the low-latency, local CE.
Finally, the \emph{early-filter} strategy from OS-2 is applied for the Content Filter, inserting it between the Content Enricher and the Message Encoder.
No further strategies can be applied.
The resulting integration process translated back from IPCTG to BPMN is shown in \cref{fig:invoice_simplified_parallel_data}.

\labelsubtitle{Conclusions} (9) The application order OS-1, OS-3, OS-2 seems most beneficial (\enquote{simplification before parallelization}, \enquote{structure before data});
(10) an automatic translation from IPCGs to concepts like BPMN could be beneficial for connecting with existing solutions.

\subsection{Case Studies: Responsible Pattern Composition}
\label{sub:pc:evaluation}
For (b), we evaluate the translation in two case studies of real-world integration scenarios: the replicate material scenario from~\cref{fig:pn:c4c}, and a predictive machine maintenance scenario.
The former is an example of hybrid integration, and the latter of \iotshort~device integration.

For each of the scenarios, we give an integration pattern contract graph with matching contracts, 
translate it to a \tdbnet with boundaries, and show how its execution can be simulated. 
The scenarios are both taken from the SAP Cloud Platform Integration solution catalog of reference integration scenarios, and are frequently used by customers~\cite{sap-hci-content}.
For the simulation we use the CPN Tools \tdbnet prototype from~\cref{sub:testing} with the extension for hierarchical PN composition.
In CPN Tools hierarchies, the patterns can be represented as sub-groups and pages with explicit in- and out-port type definitions~\cite{jensen2007coloured}, which we use as part of the boundaries defined in~\cref{sec:semantics}.
Thereby the synchronization is checked based on the CPN color sets of the port types.
The other boundary checks are performed during the simulation according to the constructed boundaries (see construction mechanism in~\cref{def:pc:construction} in \cref{sub:interpreting_ipcgs}). 

\subsubsection{Hybrid Integration: Replicate Material}
An IPCG representing an integration process for the replication of material from an enterprise resource planning or customer relationship management system to a cloud system was given in \cref{fig:cpn_pattern_composition_summarized_v2a} in \cref{sub:ipcgs}.
We now add slightly more data in the form of the pattern characteristics, which provides sufficient information for the translation to \tdbnets with boundaries.
\Cref{fig:pc:cpn_pattern_composition_summarized_v3a} depicts the enriched IPCG.
\begin{figure*}[bt]
	\centering
	\includegraphics[width=.7\linewidth]{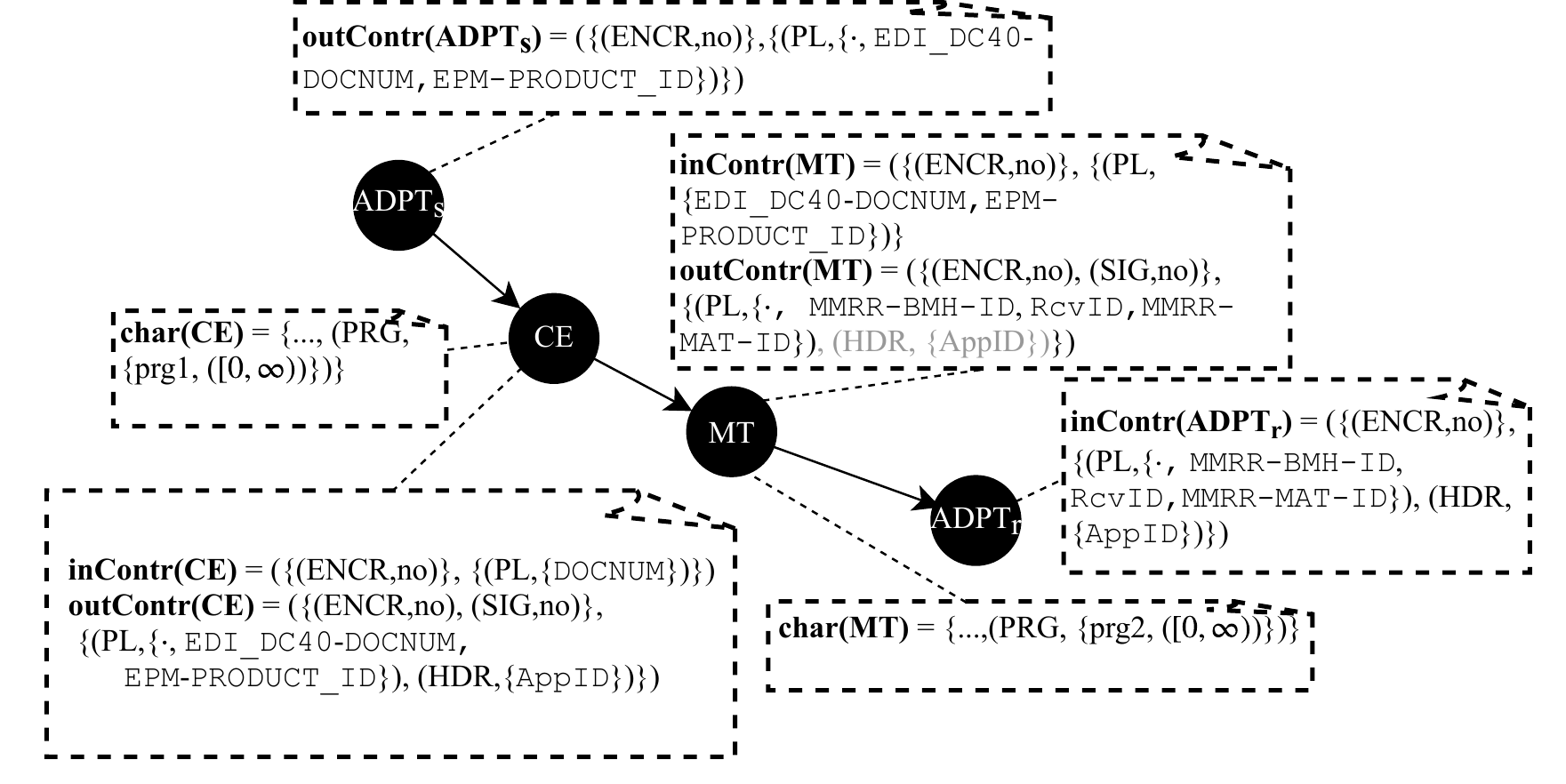}
	\caption{Complete integration pattern contract graph of the replicate material scenario}
	\label{fig:pc:cpn_pattern_composition_summarized_v3a}
\end{figure*}
The adapters are actually message processors, however, for simplicity they are represented as start and end pattern types, $ADPT_s$ denoting \emph{erp} and $ADPT_r$ representing \emph{cod}.
The characteristics of the $CE$ node includes the tuple $(PRG, (prg1, [0, \infty)))$, with enrichment function $prg1$ which assigns the \emph{DOCNUM} payload to the new header field \emph{AppID}.
Similarly, the characteristics of the $MT$ nodes includes a tuple $(PRG, (prg2, \_))$ with mapping program $prg2$,
which maps the \emph{EDI\_DC40-DOCNUM} payload to the \emph{MMRR-BMH-ID} field (the Basic Message Header ID of the Material Mass Replication Request structure), and the \emph{EPM-PRODUCT\_ID} payload to the \emph{MMRR-MAT-ID} field (the Material ID of the Material Mass Replication Request structure).

\paragraph{Translation to a Timed DB-Nets\index{Timed db-net} with Boundaries}
First we translate each single pattern from \cref{fig:pc:cpn_pattern_composition_summarized_v3a} according to the construction in \cref{sec:interpretation-atomic}.
The integrati on adapter nodes $ADPT_s$ and $ADPT_r$ are translated as the start and end patterns in \cref{fig:start} and \cref{fig:end}, respectively.
The content enricher $CE$ node and message translator $MT$ node are message processors without storage, and hence translated as in \cref{fig:siso_w_storage} with $<f>_{CE} = prg1$ and $<f>_{MT} = prg2$ (no database values are required).
Since no database table updates are needed for either translation, the database update function parameter $<g>$ can be chosen to be the identity function in both cases.

In the second step, we refine the \tdbnet with boundaries to also take contract concepts into account by the construction in \cref{def:pc:construction}.
The resulting net is shown in~\cref{fig:pc:material_replicate_w_boundaries}.
\begin{figure*}[bt]
	\centering
	\includegraphics[width=1.0\linewidth]{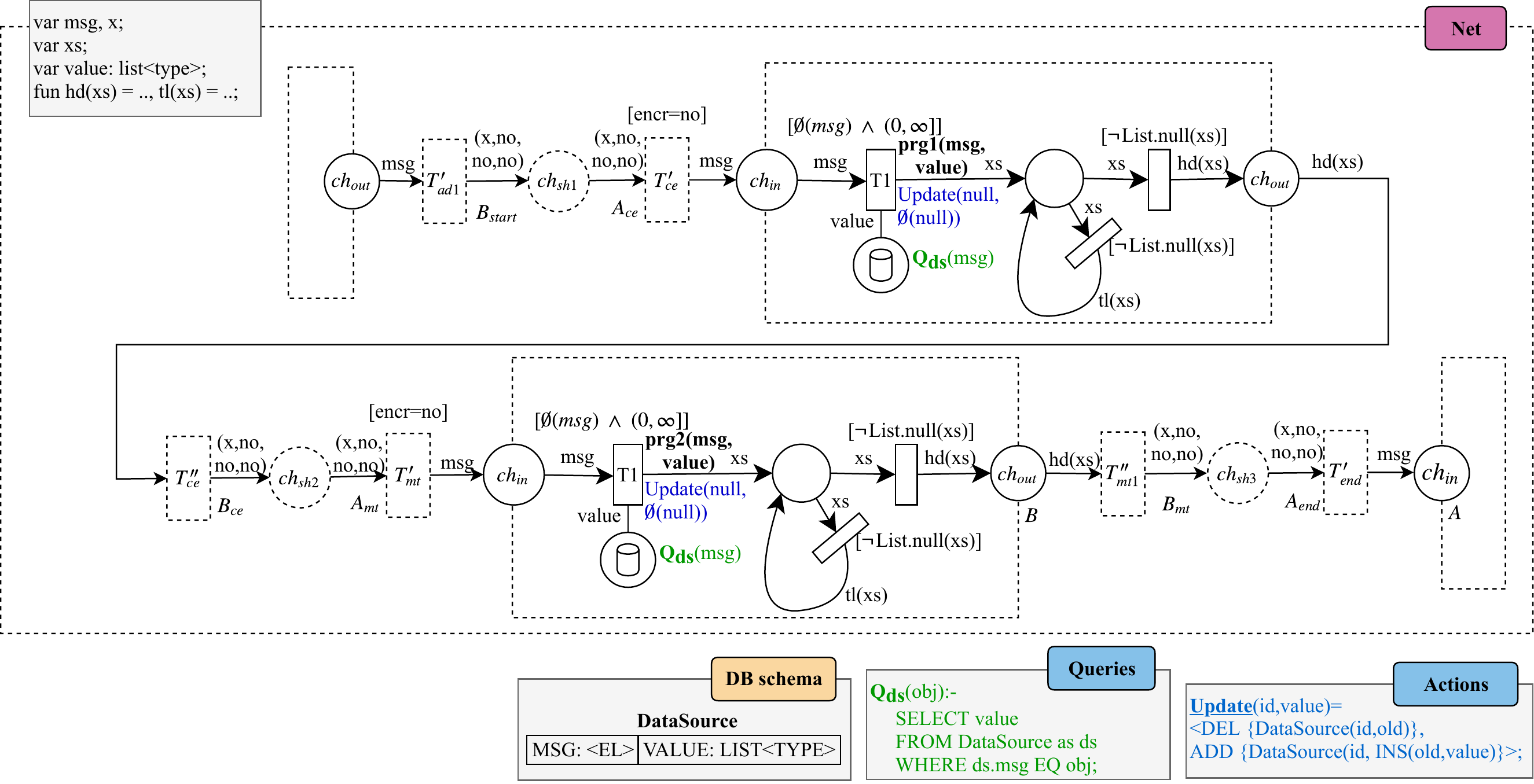}
	\caption{Material replicate scenario as a \tdbnets with boundaries}
	\label{fig:pc:material_replicate_w_boundaries}
\end{figure*}
This ensures the correctness of the types of data exchanged between patterns, and follows directly from the correctness of the corresponding IPCG.
Other contract properties such as encryption \emph{encr}, encodings \emph{enc}, and signatures \emph{sign} are checked through transition guards.

\paragraph{Simulation} We test the composition construction of the material replicate scenario in~\cref{fig:pc:material_replicate_w_boundaries} through simulation in the form of a hierarchical \tdbnet model, shown in~\cref{fig:pc:material_replicate_w_boundaries_simulation}.
\begin{figure*}[bt]
	\centering
	\includegraphics[width=1.0\linewidth]{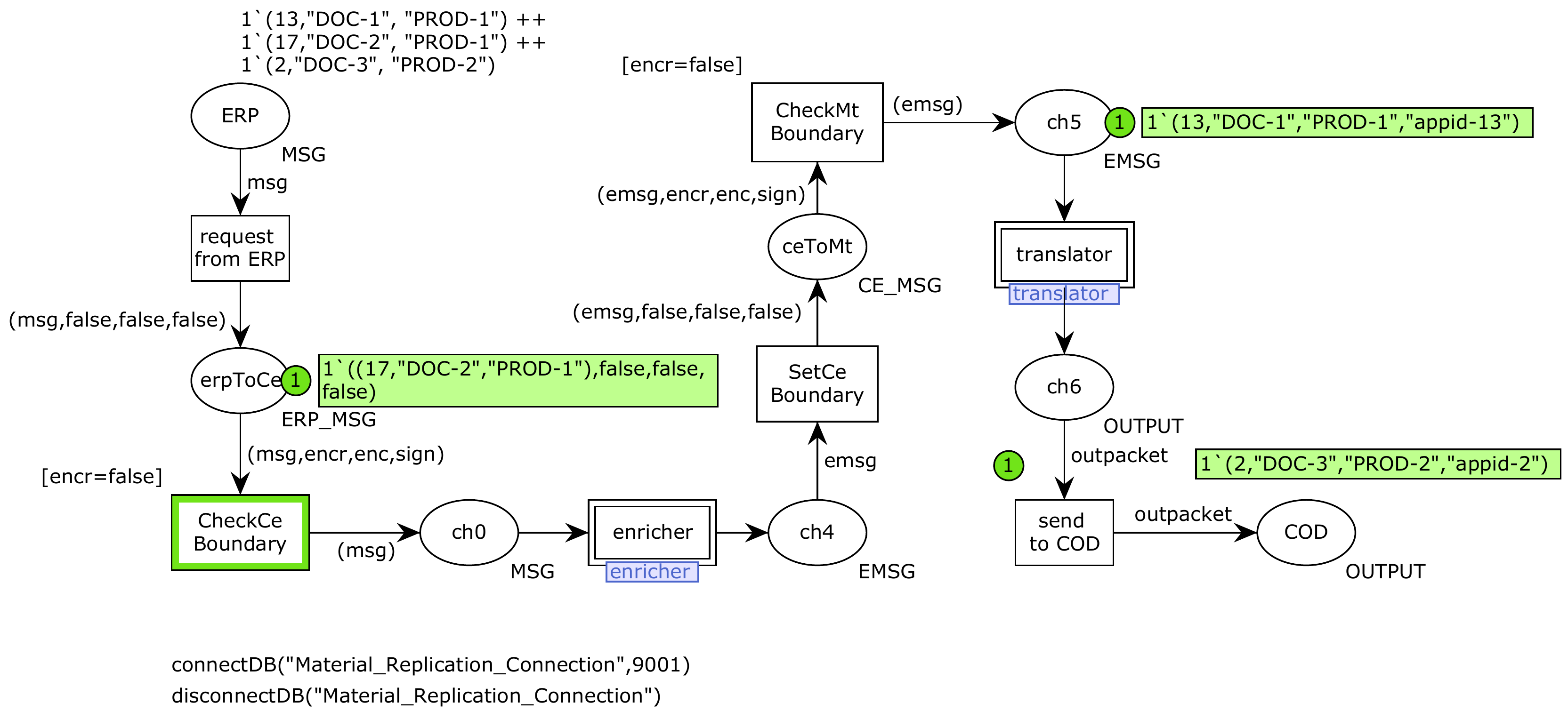}
    \vspace{-0.7cm}
	\caption{Material replicate scenario simulation}
	\label{fig:pc:material_replicate_w_boundaries_simulation}
\end{figure*}
Thereby, the \Ce and \mt patterns are represented by CPN Tool Subpage elements that are annotated with subpage tags \emph{enricher}, \emph{translator}, respectively.

On arrival of the request \emph{msg} from the ERP system, the boundary configuration is appended to the message in place \emph{erpToCe}.
In the replicate material scenario the data is received unencrypted, uncoded and unsigned, leading to a boundary (msg,no,no,no) (which is encoded as \emph{(msg,false,false,false)} in our prototype).
The extended message \emph{erp\_msg} is then moved to the boundary place \emph{ch0} by transition \emph{CheckCeBoundary}, if the \emph{[encr=false]} guard holds, and thus ensures the correctness of the data exchange between patterns.
Subsequently only the actual message without the boundary data is moved to place \emph{ch0}, that is linked to the input place \emph{ch0} of the enricher, as in \cref{fig:pc:material_replicate_w_boundaries_simulation}. 
We recall, that the \emph{in} port type ensures that the synchronization on the CPN color set level are correct.
After the \emph{enricher} processing, the \emph{out} port type ensures the correctness of the synchronization on the CPN color set level and the resulting message \emph{emsg} is moved to the linked output place \emph{ch4}.
The constructed outbound boundary, represented by transition \emph{SetCeBoundary} sets the boundary properties of the enricher to (msg,false,false,false) for the following pattern.
On the input boundary side of the \emph{translator}, transition \emph{CheckMtBoundary} evaluates its guard, before moving the message without the boundary data to the boundary place \emph{ch5}, which proceeds similar to the enricher.

Note that our boundary construction mechanism from \cref{def:pc:construction} generated the input boundary, \eg denoted by place \emph{erpToCe} and transition \emph{CheckCeBoundary}, as well as the output boundary, \eg transition \emph{SetCeBoundary} and place \emph{ceToMt}, including the transition guards, colorsets, variables, and port type configurations, for the validation by simulation. 

\paragraph{Discussion}
Notably, constructing an IPCG requires less technical knowledge such as particularities of \tdbnets but still enables correct pattern compositions on an abstract level.
While the $CPT$ part of the pattern contracts (\eg encrypted, signed) could be derived and translated automatically from a scenario in a yet to be defined modeling language, many aspects like their elements $EL$ as well as the configuration of the characteristics by enrichment and mapping programs requires a technical understanding of IPCGs and the underlying scenarios.
As such IPCGs can be considered a suitable intermediate representation of pattern compositions.
The user might still prefer a more appealing graphical modeling language on top of IPCGs.
The simulation capabilities of the constructed \tdbnet with boundaries allow for the experimental validation of real-world pattern compositions.
However, the complexity of the construction highlights the importance of an automation of the construction.

\labelsubtitle{Conclusions} (11) IPCG and \tdbnet with boundaries can be shown correct with respect to composition and execution semantics; 
(12) \tdbnets with boundaries are even more complex than \tdbnets; (13) IPCGs are more comprehensible than \tdbnets, and expressive enough for current integration scenarios.

\subsubsection{Internet of Things: Predictive Maintenance and Service (PDMS)}
The IPCG representing the predictive maintenance create notification scenario that connects machines with enterprise resource planning (ERP) and PDMS systems is given in~\cref{fig:pc:cpn_pattern_composition_pdms}.
We add all pattern characteristics and data, which provides sufficient information for the translation to \tdbnets with boundaries.
\Cref{fig:pc:cpn_pattern_composition_pdms} depicts the corresponding IPCG.
The characteristics of the 
$CE_1$ node includes an enrichment function $prg1$ that adds further information about the machine in the form of the \emph{FeatureType} to the message that contains machine \emph{ID} and \emph{UpperThresholdWarningValue}.
This data is leveraged by the $UDF_1$ \emph{predict} node, which uses a prediction function $prg2$ about the need for maintenance and adds the result into the \emph{MaintenanceRequestById} field.
Before the data is forwarded to the ERP system (simplified by an $End$), the single machine predictions are combined into one message by the $AGG_1$ node with correlation $cnd_{cr}$ and completion $cnd_{cc}$ conditions as well as the aggregation function $prg_{3}$ and completion timeout $(v_1,v_2)$ as pattern characteristics $\{(\{cnd_{cr}, cnd_{cc}\}), (PRG, prg_{4}, (v_1,v_2)) \}$.
\begin{figure*}[bt]
	\centering
	\includegraphics[width=.8\linewidth]{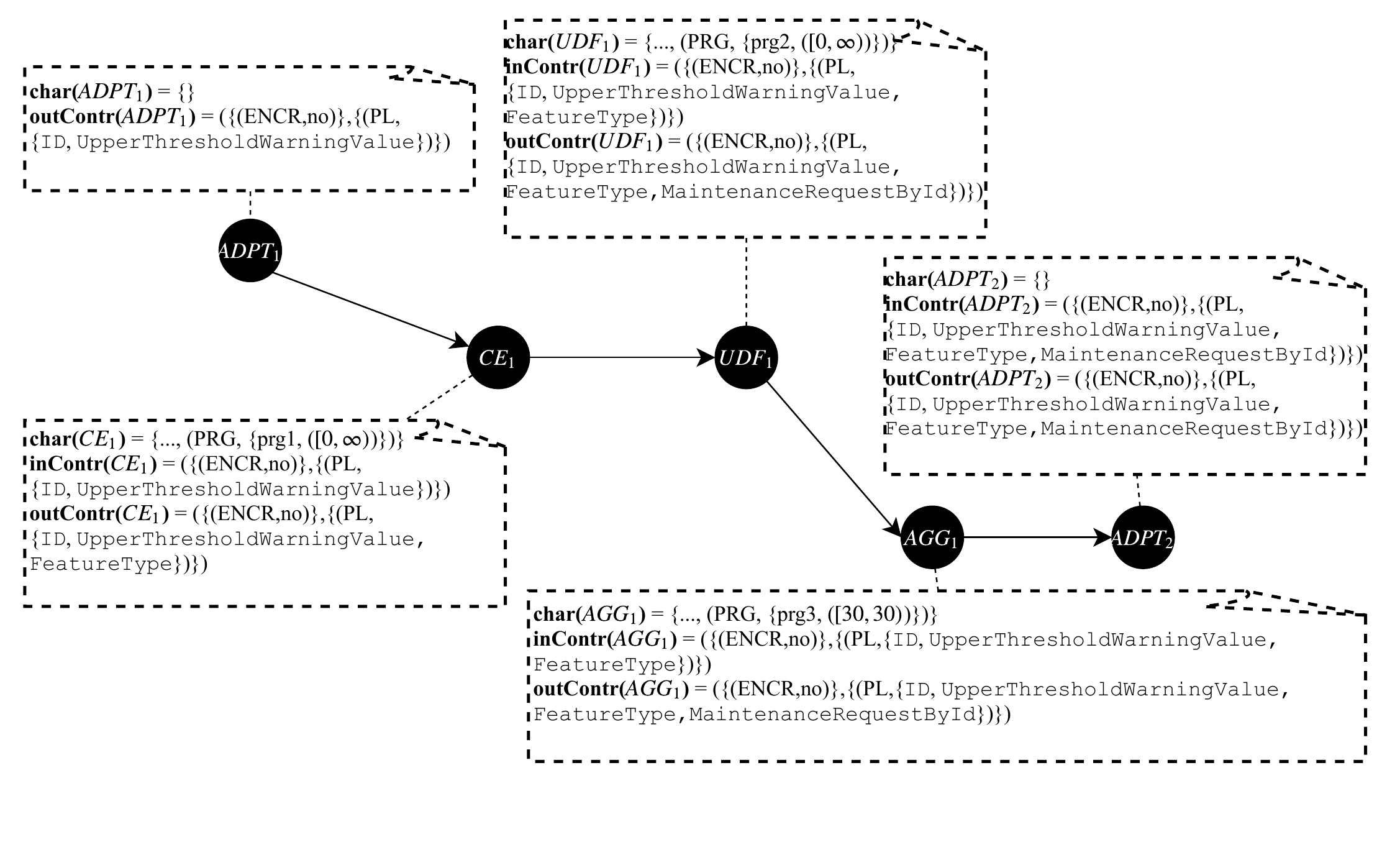} 
	\vspace{-1.1cm}
	\caption{Integration pattern contract graph of the predictive maintenance scenario}
	\label{fig:pc:cpn_pattern_composition_pdms}
\end{figure*}
\paragraph{Translation to Timed DB-Nets\index{Timed db-net} with Boundaries}
Again, we translate each single pattern from~\cref{fig:pc:cpn_pattern_composition_pdms} according to the construction in~\cref{sec:interpretation-atomic}.
The $Start$ and $End$ nodes are translated as the start and end pattern in \cref{fig:start} and \cref{fig:end} respectively.
The \Ce $CE_1$ and user-defined function $UDF_1$ nodes are message processors, and hence translated as in \cref{fig:siso_w_storage} with $<f>_{CE_1} = prg1$ and $<f>_{UDF_1} = prg2$. 
Since no table updates are needed for either translation, the database update function parameter $<g>$ can be chosen to be the identity function in all cases.
The \Aggregator $AGG_1$ node is a merge pattern type, and thus translated as in~\cref{fig:aggregator} with $(v_1,v_2) \rightarrow [\tau_1,\tau_2]$, $prg_{agg} \rightarrow f(msgs)$.
Moreover, the correlation condition $cnd_{cr} \rightarrow g(msg,msgs)$ and the completion condition $cnd_{cc} \rightarrow complCount$.

In the second step, we refine the \tdbnet with boundaries to also take contract concepts into account by the construction in~\cref{def:pc:construction}.
The resulting net is shown in~\cref{fig:pc:pdms_w_boundaries}.
\begin{figure*}[bt]
	\centering
	\includegraphics[width=.9\linewidth]{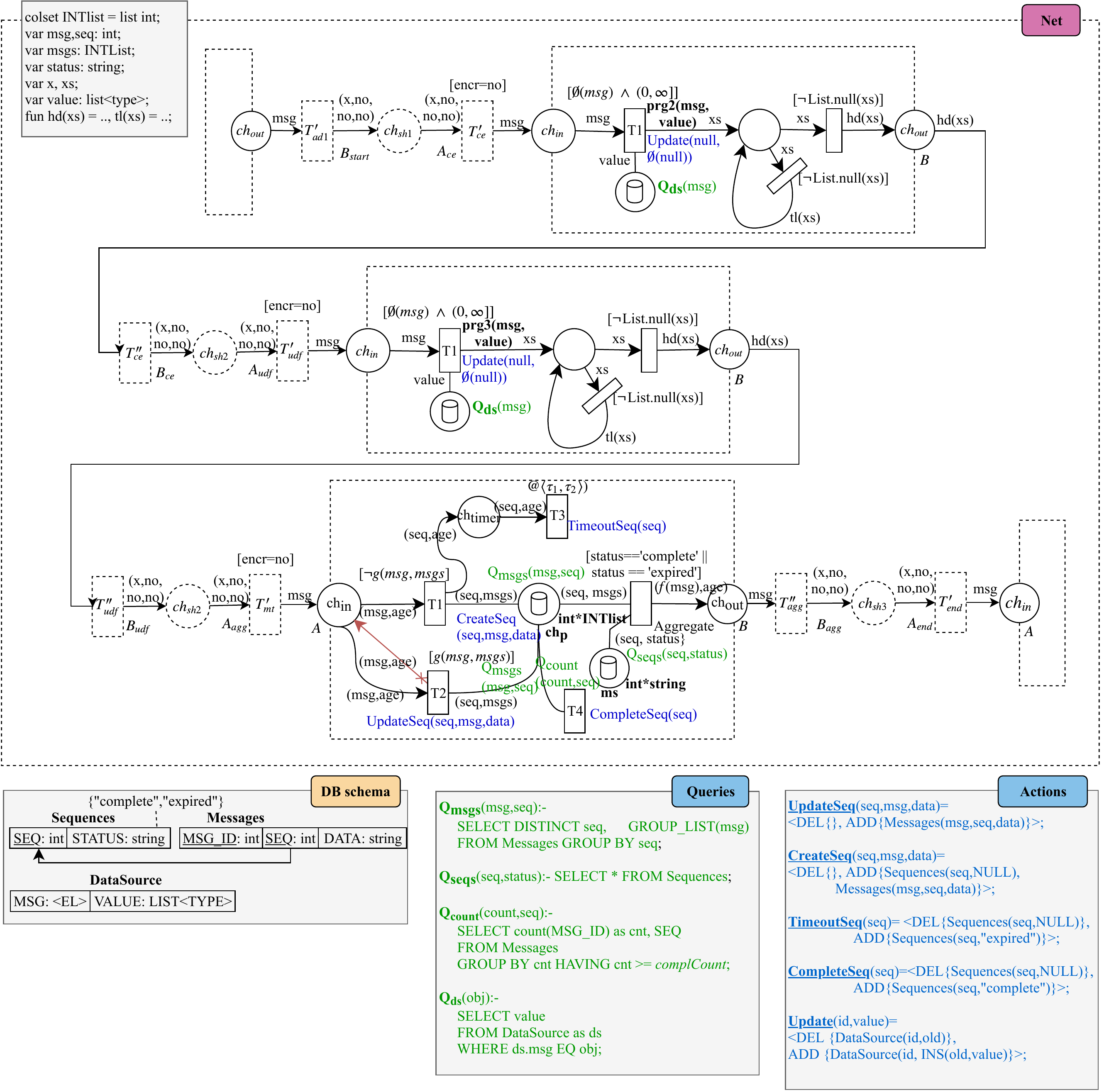}
	\caption{Predictive maintenance scenario as a \tdbnets with boundaries}
	\label{fig:pc:pdms_w_boundaries}
\end{figure*}
This ensures the correctness of the types of data exchanged between patterns, and follows directly from the correctness of the corresponding IPCG.
Other contract properties such as encryption, signatures, and encodings are checked through the transition guards.

\paragraph{Simulation} We illustrate the composition construction of the predictive maintenance scenario in~\cref{fig:pc:pdms_w_boundaries} through simulation in the form of a hierarchical \tdbnet model, shown in~\cref{fig:pc:create_notification_boundary_after}.
\begin{figure*}[bt]
	\centering
	\includegraphics[width=1\linewidth]{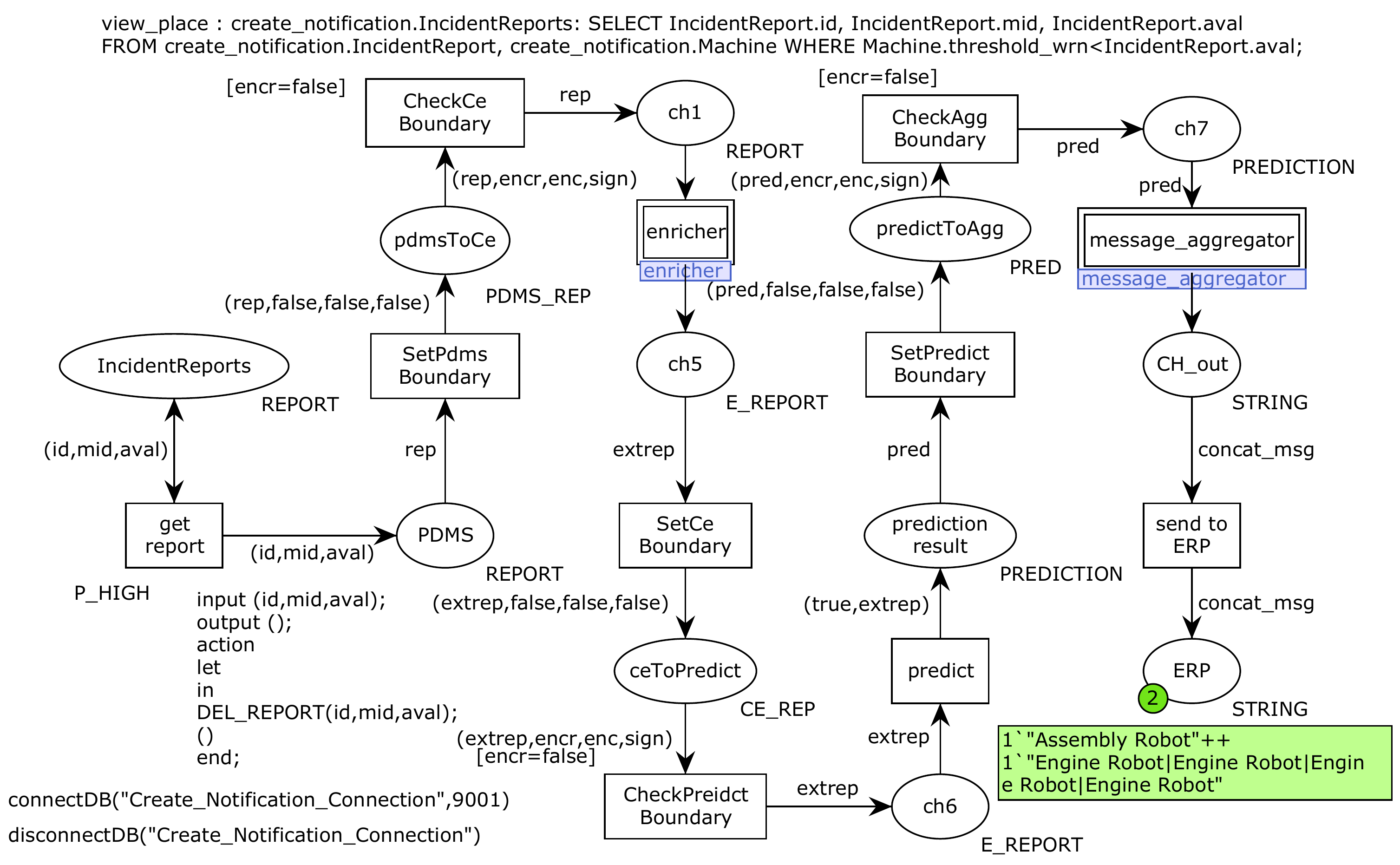}
	\caption{Predictive maintenance scenario simulation}
	\label{fig:pc:create_notification_boundary_after}
\end{figure*}
Again, all \tdbnet patterns are hierarchically represented by CPN Tool Subpage elements that are annotated with subpage tags \emph{enricher}, \emph{message\_aggregator}, respectively, and the user-defined function \emph{predict} is denoted by a transition.
The boundaries are constructed from~\cref{fig:pc:pdms_w_boundaries} by inserting \emph{SetPdmsBoundary} and \emph{pdmsToCe} as output boundary of \emph{get report}, which matches the input boundary of the subsequent \emph{enricher}, denoted by the \emph{CheckCeBoundary} transition.
Transition \emph{SetCeBoundary} and place \emph{ceToPredict} represent the output boundary of the enricher, which again match the input boundary of the \emph{predict} user-defined function pattern through transition \emph{CheckPredictBoundary}.
Finally, the output boundary of the predict step is ensured by transition \emph{SetPredictBoundary} and place \emph{predictToAgg}.
Again, it can be easily seen that the input boundary of the aggregator in the form of the \emph{CheckAggBoundary} transition matches, and thus the overall composition is correct.
Consequently, the simulation of the \tdbnet with these boundaries in \cref{fig:pc:create_notification_boundary_after} results in the same, correct output with the \tdbnets without boundaries in \cref{fig:pc:create_notification_boundary_after}.

\paragraph{Discussion} In this slightly more complex scenario, it becomes more obvious that the constructed IPCGs are quite technical as well and require a careful construction of pattern characteristics and contracts.
While this seems to be an ideal representation for checking the structural correctness of compositions, this should be no manual task for a user.
Especially for more complex scenarios, we found that the re-configurable pattern type-based translation works well.
However, the construction of the \tdbnets with boundaries corresponding to an IPCG would benefit from an automatic translation (\eg through tool support).

\labelsubtitle{Conclusions} (14) IPCGs are still quite technical, especially for more complex scenarios; (15) a tool support for automatic construction and translation is preferable.


\subsection{Discussion}
The evaluation on the optimization strategies on IPCGs (cf. (a)) resulted into several interesting conclusions, \ie emphasizing on the importance of a pattern composition and optimization formalization, which are relevant even for experienced integration experts (conclusions 1–2), with interesting choices (conclusions
3–4, 6), implementation details (conclusions 5, 10) and trade-offs (conclusions 7–9).
The contract graphs provide a rich composition context, which might help the user when composing patterns with built-in structural correctness guarantees.

The second major aspect of this work --- besides process optimizations --- concerns the responsible composition of processes out of integration patterns (cf. (b)) that can be automated (cf. conclusion 10).
The evaluation of two case studies --- following ReCO --- resulted into further interesting conclusions, \ie the suitability of our approach for pattern compositions (cf. conclusions (11,13)), model complexity considerations (cf. conclusions (12,14)) and desirable extensions like automatic translation (cf. conclusion (15)).
However, while IPCGs based on \tdbnets with boundaries denote the first comprehensive definition of application integration scenarios with built-in functional correctness and compositional correctness validation and verification, it might not give an appealing modeling language for (non-technical) users (cf. conclusions (12,14)).
We envision a novel modeling language and tool support that facilitates a translation from that language to IPCGs (cf. conclusion (15)), which we consider as future work.
Based on such a language infrastructure, more advanced compositional aspects like modeling guidelines on the different layers (\ie language, intermediate IPCG, and simulation \tdbnet with boundaries) could be studied.

\section{Related Work}
\label{sec:relatedwork}
We presented related optimization techniques in \cref{sec:stategies}.
We now briefly situate our work within the context of other formalizations, beyond the already discussed BPMN~\cite{ritter2016exception} and PN~\cite{DBLP:conf/caise/FahlandG13} approaches, as summarized in \cref{tab:related_work}.
\begin{table}[bt]
	\centering
	\small
	\caption{Optimization Strategies in context of the objectives}
	\label{tab:related_work}
	\begin{tabular}{l|ccc}
		\hline
          Approach & Formal model & Optimizations & Correctness \\
		\hline
		EAI & \faThumbsOUp & \faThumbsDown & \faThumbsOUp \\
		BPM & \faThumbsUp & \faThumbsDown & \faThumbsOUp \\
		SPC & \faThumbsOUp & \faThumbsUp & \faThumbsUp \\
		AOS & \faThumbsDown & \faThumbsUp & \faThumbsDown \\
		GT & \faThumbsUp & \faThumbsDown & \faThumbsOUp \\
		\hline
		ReCO & \faThumbsUp & \faThumbsUp & \faThumbsUp \\
	\end{tabular}
\begin{tablenotes}
	\centering
	\small
	\item \faThumbsUp: covered, \faThumbsOUp: partially covered, \faThumbsDown: not covered 
\end{tablenotes}
\end{table}

\labeltitle{Enterprise Application Integration (EAI)} Similar to the BPMN and PN notations, several domain-specific languages (DSLs) have been developed that describe integration scenarios.
Apart from the EIP icon notation~\cite{hohpe2004enterprise}, there is also the Java-based Apache Camel DSL~\cite{Ibsen:2010:CA:1965487}, and the UML-based Guaran\'a DSL~\cite{frantz2011domain}.
However, none of these languages aim to be optimization-friendly formal integration scenario representations.
Conversely, we do not strive to build another integration DSL.
Instead we claim that all of the integration scenarios expressed in such languages can be formally represented in our formalism, so that optimizations can be determined that can be used to rewrite the scenarios.

There is work on formal representations of integration patterns, e.g.\ Mederly et al.~\cite{mederly2009construction} represents messages as first-order formulas and patterns as operations that add and delete formulas, and then applies AI planning to find an process with a minimal number of components.
While this approach shares the model complexity objective, our approach applies to a broader set of objectives and optimization strategies.
For the verification of service-oriented manufacturing systems,~Mendes et al.~\cite{mendes2012high} uses \enquote{high-level} Petri nets as a language instead of integration patterns, similar to the approach of Fahland and Gierds~\cite{DBLP:conf/caise/FahlandG13}.

\labeltitle{Business Process Management (BPM)} 
Sadiq and Orlowska~\cite{sadiq2000analyzing} applied reduction rules to workflow graphs for the visual identification of structural conflicts (\eg deadlocks) in business processes.
Compared to process control graphs, we use a similar base representation, which we extend by pattern characteristics and data contracts.
Furthermore, we use graph rewriting for optimization purposes.
In Cabanillas et al.~\cite{cabanillas2011automatic}, the structural aspects are extended by a data-centered view of the process that allows to analyze the life cycle of an object, and check data compliance rules.
This adds a view on the required data, but does not propose optimizations for the 
EIPs.
The main focus is rather on the object life cycle analysis of the process.

\labeltitle{Semantic Program Correctness (SPC)}
Semantic correctness plays a bigger role in the compiler construction and analysis domain. 
For example, Muchnick~\cite{muchnick1997advanced} provides an exhaustive catalog of optimizing transformations and states that the proof of the correctness of rewritings must be based on the (execution) semantics, and Nielson~\cite{nielson1981semantic} provides semantic correctness proofs using data-flow analysis, while Cousot~\cite{cousot2002systematic} provides a general framework for designing program transformations by analyzing abstract interpretations.
Although far simpler than general programming language transformations, our translation of IPCGs to timed db-nets with boundaries can be seen as a concretization in the sense of an abstract interpretation, and thus giving a similar notion of semantic correctness.

\labeltitle{Analysis and Optimization Structures (AOS)}
Transformation techniques for optimization have been employed by compiler construction, \eg for parallel~\cite{kuck1972number} or pipeline processing~\cite{kuck1980analysis},
where dependence graph representations become especially useful.
For example, Kuck et al.~\cite{kuck1981dependence} construct dependence graphs with output, anti, and flow dependencies as a foundation for optimizing transformations.
These kind of dependence graphs were also used by~B\"ohm et al.~\cite{DBLP:journals/is/BohmHPLW11}, however, they are \enquote{linearized} in the form of our pattern contracts.
This makes the decision of the optimization \enquote{local} and does not require dependence graph abstractions like intervals~\cite{cocke1970global} or scoping~\cite{zelkowitz1974optimization}.
More recently these techniques have been applied for business process optimization by Sadiq~\cite{sadiq2000analyzing}, Niedermann et al.~\cite{niedermann2011business,niedermann2011deep} or reductions to process tree structures~\cite{vanhatalo2009refined} with incremental transformations~\cite{hauser2008incremental}.
In our case the scope of the analysis is a local match of pattern contracts.

\labeltitle{Graph Transformations (GT)}
Similar to our approach, graph transformations have been used in related domains, \eg formalizing parts of the BPMN semantics by Dijkman et al.~\cite{DBLP:conf/bpmn/DijkmanG10}, who specify the execution semantics as graph rewrites.
Conformance is checked experimentally and verification is left for future work.
For the optimizations, we use the same visual notation and double-pushout rule application approach.
However, our execution semantics are given as timed db-net and can be formally analyzed.

\section{Conclusions}
\label{sec:conclusion}
This work addresses an important shortcoming in EAI research, namely the lack of means for responsible or correct integration pattern compositions and the application of changes, \eg as optimization strategies, which preserve structural and semantic correctness, and thus ends the informality of descriptions of pattern compositions and optimizations (cf.\ Q1--Q3).

We approached the questions along a responsible pattern composition and optimization process (short ReCO), and started by compiling catalogs of integration pattern characteristics as well as optimization strategies from the literature.
We then developed a formalization of pattern compositions in order to precisely define optimizations as pattern contract graphs.
Then we extended the timed db-nets formalism, covering integration pattern semantics, into timed db-nets with boundaries, which resemble the contracts in the pattern graphs, and defined a mechanism to interpret the pattern graphs by them.

With the resulting formal framework, we proved that all defined optimizations preserve the meaning of compositions as timed db-nets. We evaluated the framework on data sets containing in total over 900 real world integration scenarios, and two brief case studies.
The responsible pattern composition part in ReCO was then studied for two integration processes down to the execution semantics, essentially showing ReCO from a modeling perspective.

We conclude that formalization and optimizations of integration processes in the form of integration pattern compositions --- using pattern contract graphs --- are relevant even for experienced integration experts (conclusions 1--2), with interesting choices (concls. 3--4, 6), implementation details (conclusions 5, 10) and trade-offs (concls. 7--9).
In the two additional case studies, we showed the suitability of our interpretation of pattern contract graphs in the newly defined timed db-nets with boundaries for pattern compositions (concls. (11,13)), model complexity considerations (concls. (12,14)) and desirable extensions like automatic translation (concl. (15)).


\bibliographystyle{elsarticle-num}
\bibliography{optimization}

\end{document}